\newcommand{\bc}{\bm{c}}
\newcommand{\eps}{\varepsilon}
\newcommand{\C}{\mathbb{C}}
\newcommand{\N}{\mathbb{N}}
\newcommand{\R}{\mathbb{R}}
\renewcommand{\S}{\mathbb{S}}
\newcommand{\boA}{\mathcal{A}}
\newcommand{\boC}{\mathcal{C}}
\newcommand{\boE}{\mathcal{E}}
\newcommand{\boH}{\mathcal{H}}
\newcommand{\boI}{\mathcal{I}}
\newcommand{\boN}{\mathcal{N}}
\newcommand{\boV}{\mathcal{V}}
\newcommand{\ga}{\mathfrak{a}}
\newcommand{\gb}{\mathfrak{b}}
\newcommand{\gc}{\mathfrak{c}}
\newcommand{\gs}{\mathfrak{s}}
\newcommand{\gv}{\mathfrak{v}}
\newcommand{\Pos}{{\rm Pos}}
\newtheorem*{claim*}{Claim}
\newtheorem{claim}{Claim}
\newtheorem{lem}{Lemma}
\newtheorem{prop}{Proposition}
\newtheorem{thm}{Theorem}
\theoremstyle{definition}
\newtheorem*{merci}{Acknowledgments}
\newtheorem{remark}{Remark}
\theoremstyle{remark}
\begin{document}
\title{On the asymptotic stability in the energy space for multi-solitons of the Landau-Lifshitz equation }
\author{\renewcommand{\thefootnote}{\arabic{footnote}} Yakine Bahri
\footnotemark[1]}
\footnotetext[1]{Centre de Math\'ematiques Laurent Schwartz, \'Ecole polytechnique, 91128 Palaiseau Cedex,
France. E-mail: {\tt yakine.bahri@polytechnique.edu}}
\maketitle
\begin{abstract}
We establish the asymptotic stability of multi-solitons for the one-dimensional Landau-Lifshitz equation with an easy-plane anisotropy. The solitons have non-zero speed, are ordered according to their speeds and have sufficiently separated initial positions.
We provide the asymptotic stability around solitons and between solitons. More precisely, we show that for an initial datum close to a sum of $N$ dark solitons, the corresponding solution converges weakly to one of the solitons in the sum, when it is translated to the centre of this soliton, and converges weakly to zero  when it is translated between solitons.  

\end{abstract}
\numberwithin{cor}{section}
\numberwithin{remark}{section}
\numberwithin{lem}{section}
\numberwithin{prop}{section}
\numberwithin{thm}{section}
\section{Introduction}
We consider the one-dimensional Landau-Lifshitz equation
\renewcommand{\theequation}{LL}
\begin{equation}
\label{LL}
\partial_t m + m \times( \partial_{xx} m + \lambda m_3 e_3) = 0,
\end{equation}
for a map $m = (m_1, m_2, m_3) : \R \times \R \to \S^2$, where $e_3 = (0, 0, 1)$ and $\lambda \in \R$. This equation, which was introduced by Landau and Lifshitz in \cite{LandLif1}, describes the dynamics of magnetization in a one-dimensional ferromagnetic material, for example in  CsNiF$_3$ or TMNC (see e.g. \cite{KosIvKo1, HubeSch0} and the references therein). $\lambda$ is the anisotropy parameter of the material. The case $\lambda > 0$ gives account for an easy-axis anisotropy and the case $\lambda < 0$ of an easy-plane anisotropy. The equation reduces to the one-dimensional Schrödinger map equation in the isotropic case $\lambda = 0$. This equation has been intensively studied (see e.g. \cite{BeIoKeT1,GuoDing0,JerrSme1}). In this paper, we are interested in the easy-plane anisotropy case ($\lambda < 0)$. Scaling the map $m$, if necessary, we can assume from now on $\lambda = - 1$.

The Hamiltonian for the Landau-Lifshitz equation, the so-called Landau-Lifshitz energy, is given by
$$E(m) := \frac{1}{2} \int_\R \big( |\partial_x m|^2 + m_3^2 \big).$$
In this paper, we study the solutions $m$ to \eqref{LL} with finite Landau-Lifshitz energy, i.e. which belong to the energy space
$$\boE(\R) := \big\{ \upsilon : \R \to \S^2, \ {\rm s.t.} \ \upsilon' \in L^2(\R) \ {\rm and} \ \upsilon_3 \in L^2(\R) \big\}.$$
A soliton with speed $c$ is a travelling-wave solution of \eqref{LL} which has the form
$$m(x, t) := u(x - c t).$$
Its profile $u$ is solution to the ordinary differential equation
\renewcommand{\theequation}{TWE}
\begin{equation}
\label{Solc}
u''+ |u'|^2 u + u_3^2 u - u_3 e_3 + c u \times u' = 0.
\end{equation}
The solutions of this equation are explicit.  If $|c| < 1$, there exist non-constant solutions $u_c$ to \eqref{Solc}, which are given by the formulae
\begin{equation*}
[u_c]_1(x) = \frac{c}{\cosh \big( (1 - c^2)^\frac{1}{2} x \big)}, \quad [u_c]_2(x) = \tanh \big( (1 - c^2)^\frac{1}{2} x \big), \quad [u_c]_3(x) = \frac{(1 - c^2)^\frac{1}{2}}{\cosh \big( (1 - c^2)^\frac{1}{2} x \big)},
\end{equation*}
up to the invariances of the problem, i.e. translations, rotations around the axis $x_3$ and the orthogonal symmetry with respect to the plane $x_3 = 0$ (see \cite{deLaire4} for more details). Else, when $|c| \geq 1$, the only solutions with finite Landau-Lifshitz energy are the constant vectors in $\S ^1 \times \{0\}$.

In dimension one the equation is completely integrable using the inverse scattering method (see e.g.~\cite{FaddTak0}). This method allows to justify the existence of multi-solitons for \eqref{LL} and to compute their expression (see \cite{Mikhail1, Rodin1}). Multi-solitons, which can be considered as a nonlinear superposition of single solitons, are exact solutions to \eqref{LL}. Our main goal is to prove the asymptotic stability of  multi-solitons (see Theorem~\ref{thm:stabasympt_m1} below).

Martel, Merle and Tsai proved the asymptotic stability for multi-solitons of the subcritical gKdV equations in \cite{MarMeTs1}. Martel and Merle stated this result for one soliton of the generalized KdV equation in \cite{MartMer2} and then they refined the results for multi-solitons in \cite{MartMer5}. This method was successfully adapted by Bethuel, Gravejat and Smets to prove the asymptotic stability for a dark soliton of the Gross-Pitaevskii equation in \cite{BetGrSm2} and then in \cite{Bahri} to show the same result for the Landau-Lifshitz equation. Cuccagna and Jenkins proved similar results for the Gross-Pitaevskii equation in \cite{Cuccagn3} using the inverse scattering method. Perelman established the asymptotic stability of multi-solitons for the nonlinear Schrödinger equation in \cite{Perelman}. 

In the next subsections, we first introduce the hydrodynamical framework in which we provide all the analysis and we provide our main result. 

\subsection{The hydrodynamical framework}
We denote by $\check{m}$ the map defined by $\check{m} := m_1 + i m_2.$ We have
$$|\check{m}(x)| = (1 - m_3^2(x))^\frac{1}{2} \to 1,$$
as $x \to \pm \infty$, using the fact that $m_3$ belongs to $H^1(\R),$ and the Sobolev embedding theorem. 
This allows us, as in the case of the Gross-Pitaevskii equation (see e.g. \cite{BetGrSa2}), to consider the hydrodynamical framework for the Landau-Lifshitz equation. In terms of the maps $\check{m}$ and $m_3$, this equation may be written as
$$\Bigg\{ \begin{array}{ll}
i \partial_t \check{m} - m_3 \partial_{xx} \check{m} + \check{m} \partial_{xx} m_3 - \check{m} m_3 = 0,\\[5pt]
\partial_t m_3 + \partial_x \big\langle i \check{m}, \partial_x \check{m} \big\rangle_\C = 0.
\end{array}$$
When the map $\check{m}$ does not vanish, one can write it as $\check{m} = (1 - m_3^2)^{1/2} \exp i \varphi$. The hydrodynamical variables $v := m_3$ and $w := \partial_x \varphi$ verify the following system
\renewcommand{\theequation}{HLL}
\begin{equation}
\label{HLL}
\left\{ \begin{array}{ll}
\partial_t v = \partial_x \big( (v^2 - 1) w \big),\\[5pt]
\displaystyle \partial_t w = \partial_x \Big( \frac{\partial_{xx} v}{1 - v^2} + v \frac{(\partial_x v)^2}{(1 - v^2)^2} + v \big( w^2 - 1) \Big).
\end{array} \right.
\end{equation}
This system is similar to the hydrodynamical Gross-Pitaevskii equation (see e.g. \cite{BetGrSm2}).\footnote{The hydrodynamical terminology originates in the fact that the hydrodynamical Gross-Pitaevskii equation is similar to the Euler equation for an irrotational fluid (see e.g. \cite{BetGrSm1}).}
The Cauchy problem in the space $X(\R):=H^1(\R)\times L^2(\R)$ for this system was solved by de Laire and Gravejat in \cite{DeLGr}, where local well-posedness is established.

In this framework, the Landau-Lifshitz energy is expressed as
\renewcommand{\theequation}{\arabic{equation}}
\numberwithin{equation}{section}
\setcounter{equation}{0}
\begin{equation}
\label{eq:hydro-E}
E(\gv) := \int_\R e(\gv) := \frac{1}{2} \int_\R \Big( \frac{(v')^2}{1 - v^2} + \big( 1 - v^2 \big) w^2 + v^2 \Big),
\end{equation}
where $\gv := (v, w)$ denotes the hydrodynamical pair. The momentum $P$, defined by
\begin{equation}
\label{def:P}
P(\gv) := \int_\R v w,
\end{equation}
is also conserved by the Landau-Lifshitz flow.
When $c \neq 0$, the function $\check{u}_c$ does not vanish. The hydrodynamical pair $Q_c := (v_c, w_c)$ is given by
\begin{equation}
\label{form:vc}
v_c(x) = \frac{(1 - c^2)^\frac{1}{2}}{\cosh \big( (1 - c^2)^\frac{1}{2} x \big)}, \quad {\rm and} \quad w_c(x) = \frac{c \, v_c(x)}{1 - v_c(x)^2} = \frac{c (1 - c^2)^\frac{1}{2} \cosh \big( (1 - c^2)^\frac{1}{2} x \big)}{\sinh \big( (1 - c^2)^\frac{1}{2} x \big)^2 + c^2}.
\end{equation}
The flow of \eqref{HLL} is invariant by translations and the opposite map $(v, w) \mapsto (- v, - w)$. These geometric transformations play an important role in the stability statement. We will show that the stability depends on these invariances. 

We denote   
$$Q_{c, a,s}(x) := sQ_c(x - a) := \big( s v_c(x - a),s w_c(x - a) \big),$$
for $a \in \R$ and $s\in \{\pm1\}$.
We also define
\begin{equation}
\label{def:S}
S_{\gc, \ga,\gs} := \big(V_{\gc,\ga,\gs},W_{\gc,\ga,\gs}\big) := \sum_{j = 1}^N Q_{c_j, a_j,s_j},
\end{equation}
with $N \in \N^*$, $\gc = (c_1, \ldots, c_N)$, with $c_j \neq 0$, $\ga = (a_1, \ldots, a_N) \in \R^N$ and $\gs = (s_1, \ldots, s_N) \in \{\pm1\}^N$.
In the original framework, this can be translated in the following way
$$R_{\gc,\ga,\gs} := \Big( (1 - V^2_{\gc,\ga,\gs})^{\frac{1}{2}}\cos(\Theta_{\gc,\ga,\gs}),(1 - V^2_{\gc,\ga,\gs})^{\frac{1}{2}} \sin(\Theta_{\gc,\ga,\gs}), V_{\gc,\ga,\gs}\Big),$$
where we have denoted
$$\Theta_{\gc,\ga,\gs}(x) := \int_0^x W_{\gc,\ga,\gs}(y) dy,$$
for any $x\in \R$.
In this paper, we provide the proof of the asymptotic stability around any soliton and between any two solitons of a sum of well-separated solitons with ordered speed, i.e. 
$$a_j-a_{j-1} \geq L, \quad {\rm for \ any \ } j \in \{2, \ldots , N\}, \quad {\rm where \ } L>0, \quad {\rm and } \quad c_1<\ldots<c_N.$$

Multi-solitons are orbitally stable under these invariance parameters (see \cite{DeLGr} for more details). We recall this result in the next section (see Theorem \ref{thm:mult-stab} below).

\subsection{Asymptotic stability in the original framework}
In this subsection, we provide our main result. First, we introduce a metric structure on the energy space $\boE(\R)$ in order to establish them. As it was done by de Laire and Gravejat in \cite{DeLGr}, we define the following distance
$$d_\boE (f,g):= |\check{f}(0)-\check{g}(0)| +\|f'-g'\|_{L^2(\R)} + \|f_3-g_3\|_{L^2(\R)},$$
where $f=(f_1,f_2,f_3)$ and $\check{f}=f_1+if_2$ (respectively for $g$). With this choice, $(\boE(\R),d_\boE)$ is a metric space. 
The following theorem shows the asymptotic stability around each soliton and between the solitons.
\begin{thm}
\label{thm:stabasympt_m1}
Let $\gs \in \{ \pm 1 \}^N$, $\gc^0 = (c^0_1, \ldots, c^0_N) \in (-1, 1)^N$, with $c^0_j \neq 0,$ such that 
$$c^0_1 < \ldots < c^0_N,$$ 
and $ \ga^0= (a^0_1, \ldots, a^0_N) \in \R^N.$ There exist a positive number $\beta_{\gc^0}$, depending only on $\gc^0$, and a positive number $L^0$ such that, if
$$d_\boE \big( m^0, R_{\gc^0,\ga^0,\gs} \big) \leq \beta_{\gc^0},$$
and
$$\ga^0 \in \Pos(L^0),$$
then there exist $N$ numbers $\tilde{\gc}:= \big( \tilde{c}_1,\ldots,\tilde{c}_N\big) \in (- 1, 1)^N $, with $\tilde{c}_j \neq 0$, and $2N$ functions \\$a_j \in \boC^1(\R_+, \R)$ and $\theta_j \in \boC^1(\R_+, \R)$, such that
$$a'_j(t) \to \tilde{c}_j, \quad {\rm and} \quad \theta'_j(t) \to 0,$$
as $t \to + \infty$, and for which the map
$$ m_{\theta_j}:= \Big( \cos(\theta_j)m_1-\sin(\theta_j)m_2,\sin(\theta_j)m_1+\cos(\theta_j)m_2,m_3\Big),$$
corresponding to the unique global solution $m \in \boC^0(\R, \boE(\R))$ with initial datum $m^0$, satisfies the convergences
\begin{equation}
\begin{split}
\label{conv:asymstab}
& \quad \sum_{j=1}^{N}\big[ \partial_x m_{\theta_j(t)} \big( \cdot + a_j(t), t \big) - \partial_x u_{\tilde{c}_j} \big] \rightharpoonup 0 \quad {\rm in} \ L^2(\R),\\
&\quad \sum_{j=1}^{N}\big[  m_{\theta_j(t)} \big( \cdot + a_j(t), t \big) - u_{\tilde{c}_j} \big] \to 0 \quad {\rm in} \ L^\infty_{\rm loc}(\R),\\
& \quad {\rm and } \quad \sum_{j=1}^{N}\big[ m_{3} \big( \cdot + a_j(t), t \big) - [u_{\tilde{c}_j}]_3\big] \rightharpoonup 0 \quad {\rm in} \ L^2(\R),
\end{split}
\end{equation}
as $t \to + \infty$.
In addition, for any map $b_j$ satisfying the following conditions :
\begin{equation}
\label{cond1:b_j}
\left\{
\begin{array}{r c l}
& b_1(t)< a_1(t),\\
& a_{j-1}(t) < b_j(t) < a_j(t) \quad \forall \ 2 \leq j \leq N, \\
& b_{N+1}(t) > a_N(t),
\end{array}
\right.
\end{equation}
for all $t\in \R_+$ and
\begin{equation}
\label{cond3inf:b_j}
\left\{
\begin{array}{r c l}
& \underset{t\to+\infty}{\liminf} \frac{b_j(t)}{t}>c_{j-1}^\infty ,\\ 
&\underset{t\to+\infty}{\limsup} \frac{b_j(t)}{t}< c_j^\infty,
\end{array}
\right.
\end{equation}
with
\[
\left\{
\begin{array}{r c l}
& c_0^\infty = -1,\\
& c_{N+1}^\infty = 1,
\end{array}
\right.\] 
we have
\begin{equation}
\begin{split}
\label{conv:asymstab1}
&\sum_{j=1}^{N} \partial_x m_{\theta_j(t)} \big( \cdot + b_j(t), t \big) \rightharpoonup 0 \quad {\rm in} \ L^2(\R),\\  &\sum_{j=1}^{N}\big[  m_{\theta_j(t)} \big( \cdot + b_j(t), t \big) - e_2 \big] \to 0 \quad {\rm in} \ L^\infty_{\rm loc}(\R),\\
& \quad {\rm and } \quad \sum_{j=1}^{N} m_{3} \big( \cdot + a_j(t), t \big) \rightharpoonup 0 \quad {\rm in} \ L^2(\R),
\end{split}
\end{equation}
as $t \to + \infty$, with $e_2=(0,1,0)$.
\end{thm}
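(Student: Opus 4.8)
The plan is to transfer the whole problem to the system \eqref{HLL} and to first prove the analogous asymptotic stability statement for the hydrodynamical pair $\gv=(v,w)$, then convert it back to the map $m$ through the correspondence $\check m=(1-v^2)^{1/2}e^{i\varphi}$, $w=\partial_x\varphi$. Since $c^0_j\neq 0$, each profile $\check u_{c^0_j}$ stays away from $0$; by the orbital stability result (Theorem~\ref{thm:mult-stab}), if $\beta_{\gc^0}$ is small and $L^0$ large, the solution $m$ exists globally and, on the relevant space--time regions (a neighbourhood of each $a_j(t)$ and of the channels separating consecutive solitons), one has $\inf_x\big(1-v(x,t)^2\big)\geq\delta>0$, so the hydrodynamical variables are well defined and smooth there. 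On such a neighbourhood $d_\boE$ is moreover equivalent to the hydrodynamical distance on $X(\R)$, which makes the hypothesis $d_\boE(m^0,R_{\gc^0,\ga^0,\gs})\leq\beta_{\gc^0}$ directly usable.

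\textbf{Modulation.} First I would perform a modulation: write $\gv(\cdot,t)=S_{\gc(t),\ga(t),\gs}(\cdot)+\eps(\cdot,t)$ with $\mathcal{C}^1$ functions $c_j(t)$ and $a_j(t)$, chosen so that $\eps(t)$ satisfies $2N$ orthogonality conditions against suitable translates of $\partial_a Q_{c_j}$ and $\partial_c Q_{c_j}$. Standard arguments then give: $\|\eps(t)\|_{X(\R)}$ is uniformly small; $a_j(t)-a_{j-1}(t)\geq L/2+\sigma t$ for some $\sigma>0$, thanks to the ordering $c^0_1<\dots<c^0_N$; and the modulation estimates $|c'_j(t)|+|a'_j(t)-c_j(t)|\leq C\big(\|\eps(t)\|_{L^2(|x-a_j(t)|\leq R)}+e^{-\gamma L}\big)$, with exponentially small coupling between distinct solitons. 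The limits $c_j(t)\to\tilde c_j$ and $a'_j(t)\to\tilde c_j$ will follow once $\eps(\cdot+a_j(t),t)\rightharpoonup 0$ is known together with a time-integrability bound.

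\textbf{Monotonicity, rigidity, and conclusion.} Next I would introduce localized versions of the momentum $P$ and the energy $E$, cut off by smooth weights transported at speeds lying strictly between consecutive soliton speeds, and prove almost-monotonicity formulae in the spirit of \cite{MarMeTs1,BetGrSm2,Bahri}, with errors exponentially small in $L$. These yield that essentially no energy or momentum crosses the channels separating the solitons --- so the dynamics near distinct $a_j(t)$ decouple asymptotically --- and that on any window moving with an admissible intermediate speed \eqref{cond3inf:b_j} the localized energy tends to $0$. The core of the proof is then a Liouville-type rigidity theorem: along any sequence $t_n\to+\infty$, the solution shifted to $a_j(t_n)$ converges weakly to a finite-energy solution of \eqref{HLL} that stays close to a single soliton for all time and, by the monotonicity estimates, is localized enough that this limiting orbit is $L^2$-compact; such a solution must be a translate of some $Q_{\tilde c_j}$ (this is where one uses the smoothness and exponential decay of the limit profile, the structure of the kernel of the linearization of \eqref{HLL} around $Q_c$ --- including the degeneracy in the speed parameter --- and the variational characterization of $Q_c$, exactly as in the case $N=1$ treated in \cite{Bahri}). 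Subsequence-independence of the limit forces $c_j(t)\to\tilde c_j$ and $\eps(\cdot+a_j(t),t)\rightharpoonup 0$ in $X(\R)$ and $\to 0$ in $L^\infty_\loc(\R)$; plugging back into the modulation equations gives $a'_j(t)\to\tilde c_j$ and, for a suitable phase $\theta_j$ reconstructed from $w$ near $a_j$, $\theta'_j(t)\to 0$. Between solitons, the localized energy going to $0$ means $\gv(\cdot+b_j(t),t)\to(0,0)$ weakly in $X(\R)$ and locally uniformly. Finally I would translate back: recovering $\varphi$ from $w$ and setting $m_{\theta_j}$ as in the statement, the weak and locally uniform convergences of $(v,w)$ --- together with the uniform lower bound on $1-v^2$ near solitons and the convergence to $(0,0)$ between them, the vacuum $(0,0)$ corresponding to $\pm e_2$ matched by the choice of $\theta_j$ --- give exactly \eqref{conv:asymstab} and \eqref{conv:asymstab1}.

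\textbf{Main obstacle.} The main difficulty is the combination of the rigidity step with the simultaneous handling of $N$ solitons: one must rule out slow leakage of mass through the $N-1$ channels and show that the weak limit near each soliton is genuinely a soliton rather than merely a nearby solution, which requires the localized monotonicity to be strong enough to upgrade weak convergence of the limit orbit to $L^2$-compactness, followed by a careful analysis of the linearized operator around $Q_c$. Making the monotonicity estimates uniform over the whole admissible range of intermediate speeds \eqref{cond1:b_j}--\eqref{cond3inf:b_j} is an additional technical point.
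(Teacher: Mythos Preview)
Your overall architecture --- reduce to \eqref{HLL}, modulate, prove monotonicity of localized quantities, and invoke a Liouville-type rigidity near each soliton --- matches the paper's route for \eqref{conv-stab-asymp} quite well, and the translation back to $m$ via the phase $\theta_j$ is also how the paper proceeds. One minor discrepancy: the paper modulates with the orthogonality conditions $\langle\eps,\partial_xQ_{c_j}\rangle=\langle\eps,\chi_{c_j}\rangle=0$, where $\chi_{c_j}$ is an eigenvector for the unique negative eigenvalue of $E''(Q_{c_j})-c_jP''(Q_{c_j})$, rather than against $\partial_cQ_{c_j}$; this specific choice is what makes the single-soliton Liouville theorem from \cite{Bahri} directly applicable.

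There is, however, a genuine gap in your treatment of the region \emph{between} solitons. You write that the monotonicity formulae yield ``on any window moving with an admissible intermediate speed \eqref{cond3inf:b_j} the localized energy tends to $0$'', and then conclude \eqref{conv-stab-asymp1} from this. For the Landau--Lifshitz (and Gross--Pitaevskii) equation this step does not go through: dispersion travels in both directions, so one cannot obtain a monotonicity formula for a localized \emph{energy} strong enough to force it to zero in the channels. The paper makes this point explicitly (see the remark after Theorem~\ref{thm:stabasympt_m1}), and in fact only establishes monotonicity for a localized \emph{momentum} $\boI_{j,y_0}$, not for the energy. Consequently the paper proves only weak convergence between solitons, not local energy decay.

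What the paper actually does for \eqref{conv-stab-asymp1} is parallel to the near-soliton case: take a weak limit $\gv^*_{j,0}$ of $\eps(\cdot+b_j(t_n),t_n)$, show via the momentum monotonicity and weak continuity of the flow that the resulting global solution $\gv^*_j$ of \eqref{HLL} is smooth and exponentially localized (Proposition~\ref{prop:smooth1}), and then apply a \emph{second} Liouville theorem (Proposition~\ref{th:liouville1}) asserting that a small, exponentially localized global solution of \eqref{HLL} is identically zero. The proof of this second Liouville theorem is a virial argument: with $U(t)=\int_\R x\,v^*_jw^*_j\,dx$ one computes $U'(t)\geq \tfrac14\|\gv^*_j(t)\|_{X(\R)}^2$ up to cubic errors, and boundedness of $U$ forces $\gv^*_j\equiv 0$. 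This ingredient is missing from your sketch and is precisely what replaces the unavailable energy-monotonicity route.
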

The proof of this theorem is similar to the one of Theorem 1.1 in \cite{Bahri}. It relies on a modulation argument and Theorem \ref{thm:stabasympt}. The proof still applies for our case of $N$ solitons since each term of the sums in \eqref{conv:asymstab} and \eqref{conv:asymstab1} converges to zero. It remains to deal with each term separately and apply the arguments used for the case of one soliton $N$ times. In particular, \eqref{conv:asymstab} and \eqref{conv:asymstab1} are direct consequences of \eqref{conv-stab-asymp} and \eqref{conv-stab-asymp1} respectively (see Subsection 2.4 in \cite{Bahri} for more details).
\begin{remark}
The locally strong asymptotic stability result for multi-solitons, as stated by Martel, Merle and Tsai in \cite{MarMeTs1} for the KdV equation, is stronger than the two weak asymptotic stability results stated in this paper. It is still an open problem for this equation. As a matter of fact, the method used by Martel, Merle and Tsai is based on a monotonicity argument for the localized energy. This argument is not obvious in our case, since dispersion has both positive and negative speeds in contrast with the KdV case in which dispersion has only negative speeds.  
\end{remark}

\subsection{Asymptotic stability in the hydrodynamical framework}

The following theorem shows the asymptotic stability of multi-solitons in the hydrodynamical framework. We show the asymptotic stability around and between solitons.
\begin{thm}
\label{thm:stabasympt}
Let $\gc^0=(c_1^0,\ldots,c_N^0) \in (-1,1)^N$, with $c_j^0 \neq 0$ for all $j=1,\ldots,N$, such that 
there exist $L_0,\alpha_0>0$ with the following properties. Given any $(v_0, w_0) \in X(\R)$, there exist  $L>L_0$ and $\alpha < \alpha_0$ such that if $(v_0, w_0) \in \boV(\alpha,L)$, then there exist $\ga := \big( a_1,\ldots,a_N\big) \in \boC^1(\R_+, \R^N)$, $\gc := \big( c_1,\ldots,c_N\big) \in \boC^1(\R_+, (-1,1)\setminus \{0\} ^N)$ and non zero different speeds $\gc^{+\infty}= \big( c_1^{+\infty},\ldots,c_N^{+\infty} \big)$ such that the unique global solution $(v, w) \in \boC^0(\R, \boN\boV(\R))$ to \eqref{HLL} with initial datum $(v_0, w_0)$ satisfies, for all $j \in \{1,\ldots,N\}$,
\begin{equation}
\label{conv-stab-asymp}
\varepsilon(t,.+a_j(t)):=(v,w)(t,x+a_j(t))-\sum_{k=1}^N Q_{c_k(t)}(x+a_j(t)-a_k(t)) \rightharpoonup 0 \quad {\rm in} \ X(\R),
\end{equation}
as well as
\begin{equation}
\label{conv-stab-asymp1}
\varepsilon(t,.+b_j(t)):=(v,w)(t,x+b_j(t))-\sum_{k=1}^N Q_{c_k(t)}(x+b_j(t)-a_k(t)) \rightharpoonup 0 \quad {\rm in} \ X(\R),
\end{equation}
for any $\gb := \big( b_1,\ldots,b_{N+1}\big) \in \boC^1(\R_+, \R^{N+1})$ with $b_j$ satisfying \eqref{cond1:b_j} and 
\begin{equation}
\label{cond3:b_j}
c_{j-1}^\infty < \lim_{t\to +\infty} b'_j(t) < c_j^\infty.
\end{equation}
Moreover, we have
\begin{equation}
\label{conv-parameters}
c_j(t)\to c_j^{+\infty},\quad a'_j(t) \to c_j^{+\infty},
\end{equation}
as $t \to + \infty$.
\end{thm}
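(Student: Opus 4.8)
The plan is to follow the now-standard strategy of Martel--Merle and its adaptation by Bethuel--Gravejat--Smets (and in \cite{Bahri} for the single-soliton Landau--Lifshitz case), built on the orbital stability result (Theorem~\ref{thm:mult-stab}) that is recalled below. The first step is the \emph{modulation}: for $(v_0,w_0)\in\boV(\alpha,L)$ with $\alpha$ small and $L$ large, orbital stability guarantees that the solution $(v,w)(t)$ stays, for all $t\ge 0$, close to the $N$-soliton family $S_{\gc,\ga,\gs}$ in $X(\R)$. Using the implicit function theorem against the symplectically orthogonal directions (the generators of translations and the $\partial_c Q_c$ directions, localized far enough from each other thanks to $a_j-a_{j-1}\ge L$), one constructs $\boC^1$ modulation parameters $\ga(t)=(a_1(t),\dots,a_N(t))$ and $\gc(t)=(c_1(t),\dots,c_N(t))$ so that the error
$$\eps(t):=(v,w)(t)-\sum_{k=1}^N Q_{c_k(t)}(\cdot-a_k(t))$$
satisfies $N$ orthogonality conditions, remains small in $X(\R)$ uniformly in $t$, and the modulation equations give $|a_j'(t)-c_j(t)|+|c_j'(t)|=O(\|\eps(t)\|_X)+O(e^{-\delta L})$ together with the crucial almost-monotonicity: the speeds $c_j(t)$ have limits $c_j^{+\infty}$ and $a_j'(t)\to c_j^{+\infty}$, which are pairwise distinct for $L,\alpha$ chosen so that the $c_j^{+\infty}$ stay near the ordered $c_j^0$; this yields \eqref{conv-parameters}. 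I would quote the relevant monotonicity/localized-momentum estimates, which require the separation of the solitons to keep the interaction terms exponentially small.

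The heart of the argument is the \emph{asymptotic stability around each soliton}, \eqref{conv-stab-asymp}. Fix $j$ and pass to the frame $y=x+a_j(t)$; the translated error $\eps(t,\cdot+a_j(t))$ is bounded in $X(\R)$, so along any sequence $t_n\to+\infty$ it has a weak limit, and the weak continuity of the flow (together with the $X(\R)$-local well-posedness of \cite{DeLGr}) shows that this limit is the translated error of a \emph{limit profile}: a global solution $(v^\infty,w^\infty)$ of \eqref{HLL} that stays, for all time in $\R$, in a small tube around the single soliton $Q_{c_j^{+\infty}}$ (the other solitons escape to spatial infinity at different speeds, and the far-field contributes nothing to the weak limit because of exponential localization). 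One then invokes the \emph{Liouville-type rigidity theorem} for \eqref{HLL} near a single soliton --- exactly the $N=1$ machinery of \cite{Bahri}, based on a virial-type / monotonicity identity for a localized momentum functional combined with the linearized operator's spectral properties --- to conclude that $(v^\infty,w^\infty)$ must itself be a soliton $Q_{c^*}$; the orthogonality conditions imposed on $\eps$ force $c^*=c_j^{+\infty}$ and zero translation, hence the weak limit of $\eps(t_n,\cdot+a_j(t_n))$ is $0$. Since the limit is independent of the subsequence, \eqref{conv-stab-asymp} follows.

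For the \emph{asymptotic stability between solitons}, \eqref{conv-stab-asymp1}, fix a curve $b_j(t)$ satisfying \eqref{cond1:b_j} and \eqref{cond3:b_j}, so that $b_j(t)$ eventually separates from every $a_k(t)$ at a linear rate (its asymptotic speed lies strictly between $c_{j-1}^\infty$ and $c_j^\infty$). The translated solution $(v,w)(t,\cdot+b_j(t))$ is again bounded in $X(\R)$; any weak limit along $t_n\to+\infty$ is a global solution of \eqref{HLL} that is \emph{everywhere small} in $X$ (each soliton $Q_{c_k}$ has moved a distance $\gtrsim |b_j'{}^\infty-c_k^{+\infty}|\,|t|\to\infty$ away and is exponentially localized, so it contributes nothing in the limit frame). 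Here one uses a second Liouville-type statement: a globally-defined, uniformly small solution of \eqref{HLL} that stays in a tube around the constant state $(0,0)$ must be identically $(0,0)$ (this is the "no small breather / no small soliton" part of the rigidity analysis, again present in the $N=1$ treatment). Therefore every weak limit vanishes, giving \eqref{conv-stab-asymp1}; the subtracted sum $\sum_k Q_{c_k(t)}(\cdot+b_j(t)-a_k(t))$ itself tends weakly to $0$ in $X(\R)$ for the same localization reason, which is why both formulations of \eqref{conv-stab-asymp1} hold.

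The main obstacle is the Liouville-type rigidity step near a single soliton: one must show that a solution trapped for all $t\in\R$ in a small $X$-tube around $Q_c$ is a translate of a soliton. As in \cite{Bahri}, this rests on a monotonicity formula for a suitably localized version of the momentum $P$, whose time derivative is, up to exponentially small errors, a non-negative quadratic form in $\eps$ controlling a coercive norm modulo the two neutral directions; combined with the backward-in-time smallness it forces $\eps\equiv 0$ after modulation. Making this work in the $N$-soliton setting is not a new difficulty per se --- the separation $a_j-a_{j-1}\ge L$ and the distinctness of the limit speeds confine each limit object to a single-soliton (or vacuum) regime --- so the $N=1$ rigidity applies verbatim; the only bookkeeping burden is propagating the exponentially-small interaction errors through the modulation and monotonicity estimates uniformly in $t$, which is routine given the ordering and separation hypotheses.
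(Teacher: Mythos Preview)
Your overall strategy matches the paper's, but two genuine gaps would prevent the argument from closing as written.

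\textbf{Exponential localization of the limit profile is missing.} Both Liouville theorems you invoke---the one near a single soliton from \cite{Bahri} and the one near zero---require as a \emph{hypothesis} that the limit solution satisfy a weighted bound of the form $\int e^{\nu|x|}\bigl[(\partial_x v)^2+v^2+w^2\bigr]\,dx\le C$ uniformly in $t\in\R$. This does not follow from mere smallness in $X(\R)$ or from orbital closeness to a soliton; it is the content of Propositions~\ref{prop:smooth} and~\ref{prop:smooth1} in the paper, and its proof is a substantial step. It goes through a monotonicity formula for the localized momentum $\boI_{j,y_0}(t)=\int\Phi(x-a_j(t)-y_0)\,vw\,dx$ (Propositions~\ref{prop:evol-momentum} and~\ref{prop:evol-momentum1}), combined with the weak continuity of the flow, to show first that the momentum of the limit profile is spatially localized uniformly in time (Lemma~\ref{lem:loc}), and then to upgrade this to the exponential bound. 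You mention monotonicity only as an internal ingredient of the Liouville proof itself, not as this essential preparatory step; without it the Liouville theorems cannot be applied.

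\textbf{The convergence of parameters is misplaced.} You claim that $c_j(t)\to c_j^{+\infty}$ and $a_j'(t)\to c_j^{+\infty}$ follow from ``almost-monotonicity'' during the modulation step, before Liouville. The modulation ODEs only give $|c_j'(t)|\lesssim\|\eps(t)\|_X$, and the right-hand side is not known to be integrable, so no limit follows directly. In the paper the order is: extract a subsequential weak limit, apply Liouville to identify it as a soliton $Q_{\tilde c_{j,0}}$, and \emph{then} prove that $\tilde c_{j,0}$ is independent of the subsequence by a contradiction argument using the monotonicity of $\boI_{j,\pm y_0}$ (two different limit speeds would force $\boI_{j,y_0}(t_n)$ to grow without bound, contradicting the energy bound). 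Only after this does \eqref{conv-parameters} follow.

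A minor point: the Liouville theorem near zero is not in \cite{Bahri}; it is Proposition~\ref{th:liouville1} of this paper, proved via the virial quantity $U(t)=\int x\,v^*_j w^*_j\,dx$, whose derivative dominates $\tfrac14\|\gv^*_j\|_X^2$ when $\gv^*_j$ is small and exponentially localized.
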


In fact, all the solitons in \eqref{conv-stab-asymp} with speed $c_k$ for $k \neq j$ are weakly convergent to $0$ in $X(\R)$ as $t \to + \infty ,$ due to \eqref{conv-parameters}, so that \eqref{conv-stab-asymp} truly provides the asymptotic stability of the soliton with speed $c_j$. For \eqref{conv-stab-asymp1}, all the solitons are weakly convergent to $0$ in $X(\R)$ as $t \to + \infty ,$ so that \eqref{conv-stab-asymp1} provides the asymptotic stability of the zero solution between the solitons.
\begin{remark}
$(i)$ For \eqref{conv-stab-asymp1}, we begin by proving the convergence for $\gb := \big( b_1,\ldots,b_{N+1}\big) \in \boC^1(\R_+, \R^{N+1})$ with $b_j$ satisfying \eqref{cond1:b_j} and \eqref{cond3:b_j}. Then, we show that it remains also true for any $b_j$ verifying \eqref{cond3inf:b_j} in order to deduce \eqref{conv:asymstab1} (see the end of Subsection 4.1 for the proof).

$(ii)$The case when $c_j^0 \neq 0$ is excluded from the statement. In fact, we cannot use the hydrodynamical formulation in that case because the solitons can vanish. In addition, the Liouville type theorem cannot be applied as well as the orbital stability theorem. To our knowledge, this is still an open problem.
\end{remark}
The proof relies on the strategy developed by Martel, Merle and Tsai in \cite{MarMeTs1}.

\subsection{Plan of the paper}
In the second section, we recall the orbital stability result for the multi-solitons, stated by de Laire and Gravejat in \cite{DeLGr}, which is an important tool to prove our results.

In the third section, we prove the asymptotic stability around solitons. More precisely, we show that any solution close to the sum of $N$ solitons is weakly convergent to a soliton in the translating neighbourhood of each soliton. We state that all other solitons stay far in the way that in this region the problem reduces to the asymptotic stability for a single soliton. This is the reason why we can use the Liouville type theorem proved in \cite{Bahri}.

In the last section, we change the translation parameter to show that any solution, corresponding to an initial datum close to the sum of $N$ solitons, converges weakly to zero when it is moving in the core of the region separating two solitons. For this, we establish a Liouville type theorem, which affirms that small solutions which are smooth and exponentially localized are zero solutions. As a consequence, \eqref{conv-stab-asymp1} claims that there is no interaction between well separated solitons with ordered speed.

\section{Orbital stability in the hydrodynamical framework}
In this section, we first recall the orbital stability result proved by de Laire and Gravejat in \cite{DeLGr}. In order to quantify it precisely, we set 
$$\boN\boV(\R) := \Big\{ \gv = (v, w) \in H^1(\R) \times L^2(\R), \ {\rm s.t.} \ \max_\R |v| < 1 \Big\}.$$
In the sequel we consider this space as a metric space equiped with the metric structure provided by the norm
$$\| \gv \|_{H^1 \times L^2} := \Big( \| v \|_{H^1}^2 + \| w \|_{L^2}^2 \Big)^\frac{1}{2}.$$
\begin{thm}\cite{DeLGr}
\label{thm:mult-stab}
Let $\gs^* \in \{ \pm 1 \}^N$ and $\gc^* = (c_1^*, \ldots, c_N^*) \in (-1, 1)^N$, with $c_j^* \neq 0$, such that
\begin{equation}
\label{eq:order-speed}
c_1^* < \ldots < c_N^*.
\end{equation}
There exist positive numbers $\alpha^*$, $L^*$ and $A^*$, depending only on $\gc^*$ such that, if $\gv^0 \in \boN\boV(\R)$ satisfies the condition
\begin{equation}
\label{def:alpha0}
\alpha^0 := \big\| \gv^0 - S_{\gc^*, \ga^0, \gs^*} \big\|_{H^1 \times L^2} \leq \alpha^*,
\end{equation}
for points $\ga^0 = (a_1^0, \ldots, a_N^0) \in \R^N$ such that
$$L^0 := \min \big\{ a_{j + 1}^0 - a_j^0, 1 \leq j \leq N - 1 \big\} \geq L^*,$$
then the solution $\gv$ to \eqref{HLL} with initial datum $\gv^0$ is globally well-defined on $\R _+$, and there exists a function $\ga = (a_1, \ldots, a_N) \in \boC^1(\R_+, \R^N)$ such that
\begin{equation}
\label{est:a'}
\sum_{j = 1}^N \big| a_j'(t) - c_j^* \big| \leq A^* \Big( \alpha^0 + \exp \Big( - \frac{\nu_{\gc^*} L^0}{65} \Big) \Big),
\end{equation}
and
\begin{equation}
\label{Goal}
\big\| \gv(\cdot, t) - S_{\gc^*, \ga(t), \gs^*} \big\|_{H^1 \times L^2} \leq A^* \Big( \alpha^0 + \exp \Big( - \frac{\nu_{\gc^*} L^0}{65} \Big) \Big),
\end{equation}
for any $t \in \R_+$.
\end{thm}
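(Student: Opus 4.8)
The plan is to run the classical energy--momentum (Lyapunov functional) argument for orbital stability, in the localized multi-soliton form of Martel, Merle and Tsai \cite{MarMeTs1}, adapted to the hydrodynamical system \eqref{HLL}. The analytic engine is the variational nature of a single soliton, which I would first record: for $0<|c|<1$ the profile $Q_c$ is a critical point of $F_c:=E+c\,P$ on $\boN\boV(\R)$, i.e. $E'(Q_c)+c\,P'(Q_c)=0$, and its second variation $\boH_c:=E''(Q_c)+c\,P''(Q_c)$ is a self-adjoint operator whose kernel is exactly $\Span\{\partial_x Q_c\}$ — the only nontrivial continuous invariance in the hydrodynamical variables, since rotations about the $x_3$-axis act trivially on $w=\partial_x\varphi$ — and which is positive definite on the subspace of $L^2\times L^2$ orthogonal to $\partial_x Q_c$ and to $\partial_c Q_c$. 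The non-degeneracy $\frac{d}{dc}P(Q_c)\neq 0$ will let one control the $\partial_c Q_c$ direction via the momentum.

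Now suppose a solution $\gv(t)$ of \eqref{HLL} stays in a thin tube around the family $\{S_{\gc^*,\ga,\gs^*}:\ga\in\R^N\}$. First I would carry out the \emph{modulation}: by the implicit function theorem, pick $\ga(t)=(a_1(t),\dots,a_N(t))\in\boC^1(\R_+,\R^N)$ enforcing
$$\big\langle \gv(t)-S_{\gc^*,\ga(t),\gs^*},\,\partial_x Q_{c_j^*}(\cdot-a_j(t))\big\rangle_{L^2\times L^2}=0,\qquad 1\le j\le N,$$
and read off from \eqref{HLL} the differential system for $\ga$, which gives $|a_j'(t)-c_j^*|\lesssim \|\gv(t)-S_{\gc^*,\ga(t),\gs^*}\|_{H^1\times L^2}+e^{-\nu_{\gc^*}L^0/65}$; the ordering $c_1^*<\dots<c_N^*$ and the separation $L^0\ge L^*$ keep the $N$ bumps apart for all positive times, so every soliton--soliton cross term is exponentially small in $L^0$. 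Since only $E$ and $P$ are globally conserved while the solitons move at different speeds, I would then introduce time-dependent cutoffs $\phi_1,\dots,\phi_N$ forming a partition of unity adapted to the ordered speeds, the transition between $\phi_j$ and $\phi_{j+1}$ being centred near $\tfrac12(a_j(t)+a_{j+1}(t))$ and sliding at a speed strictly between $c_j^*$ and $c_{j+1}^*$, and set
$$\boG(t):=E\big(\gv(t)\big)+\sum_{j=1}^N c_j^*\int_\R \phi_j(x,t)\,v(x,t)\,w(x,t)\,dx.$$
Differentiating in time and using \eqref{HLL}, every bad term either carries $\partial_x\phi_j$ or $\partial_t\phi_j$ — supported in the inter-soliton regions where $v$ and $w$ are exponentially small in $L^0$ — or is a genuine interaction term, so that $|\boG(t)-\boG(0)|$ is controlled by $\tfrac1{L^0}\sup_{[0,t]}\|\gv-S_{\gc^*,\ga,\gs^*}\|_{H^1\times L^2}^2$ plus a quantity of size $e^{-\nu_{\gc^*}L^0/65}$; likewise each localized momentum $\int_\R \phi_j vw$ is almost conserved.

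In the coercivity step one expands $\boG(t)$ around $S_{\gc^*,\ga(t),\gs^*}$: the cutoffs make the quadratic part approximately block-diagonal, equal to $\sum_j \boH_{c_j^*}$ acting on the localized pieces of $\gv-S_{\gc^*,\ga,\gs^*}$ up to exponentially small errors, the orthogonality conditions kill the kernel directions $\partial_x Q_{c_j^*}$, and the near-conservation of the localized momenta controls the $\partial_c Q_{c_j^*}$ directions, yielding
$$\boG(t)-\sum_{j=1}^N\big(E(Q_{c_j^*})+c_j^*P(Q_{c_j^*})\big)\ \geq\ \Lambda_{\gc^*}\,\big\|\gv(t)-S_{\gc^*,\ga(t),\gs^*}\big\|_{H^1\times L^2}^2-C\,e^{-\nu_{\gc^*}L^0/65}$$
for some $\Lambda_{\gc^*}>0$. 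I would then close with a continuity/bootstrap argument: let $T^*$ be the supremum of times on which $\|\gv(t)-S_{\gc^*,\ga(t),\gs^*}\|_{H^1\times L^2}\le A^*(\alpha^0+e^{-\nu_{\gc^*}L^0/65})$ holds; feeding the almost-conservation of $\boG$ into this coercivity, together with $\alpha^0\le\alpha^*$ small and $L^0\ge L^*$ large, strictly improves that bound on $[0,T^*)$, hence $T^*=+\infty$, which is \eqref{Goal}; integrating the modulation system then yields \eqref{est:a'}.

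The main obstacle is the joint calibration of the orthogonality conditions, the cutoffs and the error budget: the restricted quadratic form $\boG''$ must be coercive \emph{uniformly} in the large separation $L^0$, while at the same time every error — from $\partial_x\phi_j$ and $\partial_t\phi_j$, from the modulation remainder, and from the soliton--soliton overlap — must be forced to be $o(1)$ relative to $\|\gv-S_{\gc^*,\ga,\gs^*}\|_{H^1\times L^2}^2$; balancing these requires the explicit exponential decay of $Q_c$ and $w_c$, and is what pins down the numerical constant $65$. A secondary point is that the cutoffs must be built $\boC^1$ in time out of the merely $\boC^1$ curves $a_j(t)$, with $\partial_t\phi_j$ kept small enough not to spoil the near-conservation of $\boG$.
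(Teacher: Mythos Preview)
The paper does not prove this theorem: it is quoted verbatim from \cite{DeLGr} and used as a black box, together with the companion modulation statement Proposition~\ref{prop:modul}. So there is no ``paper's own proof'' to compare against; your outline is in fact a sketch of the argument of \cite{DeLGr}, which is indeed the Martel--Merle--Tsai localized Lyapunov functional strategy transported to \eqref{HLL}.

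Your sketch is on the right track, but two points deserve correction. First, a sign: in this normalization $Q_c$ is a critical point of $E-cP$, not $E+cP$, and the Hessian used throughout the paper is $\boH_c=E''(Q_c)-c\,P''(Q_c)$; this operator has exactly one \emph{negative} eigenvalue (with eigenvector $\chi_c$), a one-dimensional kernel $\Span\{\partial_xQ_c\}$, and is positive on the orthogonal complement of these two directions. Second, and more substantively, the actual proof in \cite{DeLGr} (reflected in Proposition~\ref{prop:modul} and in \eqref{eq:ortho}) modulates in \emph{both} position and speed, imposing orthogonality to $\partial_xQ_{c_j}$ \emph{and} to $\chi_{c_j}$; this kills the kernel and the negative direction directly, so that the localized quadratic form is coercive on $\varepsilon$ without further appeal to momentum constraints. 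Your variant --- modulate only in $\ga$ and recover the negative direction via near-conservation of the localized momenta $\int\phi_j vw$ --- is also a valid route (closer to the original gKdV argument), but note that those localized momenta are not ``almost conserved'': they are almost \emph{monotone}, which is what actually yields a uniform-in-time bound. This monotonicity is precisely the content of the paper's Proposition~\ref{prop:evol-momentum}.
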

Given a positive number $L > 0$, we introduce the set of well-separated and ordered positions
$$\Pos(L) := \big\{ \ga = (a_1, \ldots, a_N) \in \R^N, \ {\rm s.t.} \ a_{j + 1} > a_j + L \ {\rm for} \ 1 \leq j \leq N - 1 \big\},$$
and we set
$$\boV(\alpha, L) := \Big\{ \gv = (v, w) \in H^1(\R) \times L^2(\R), \ {\rm s.t.} \ \inf_{\ga \in \Pos(L)} \big\| \gv - S_{\gc^*, \ga,\gs^*} \big\|_{H^1 \times L^2} < \alpha \Big\},$$
for $\alpha > 0$. We also define
$$\mu_\gc := \min_{1 \leq j \leq N} |c_j|, \quad {\rm and} \quad \nu_\gc := \min_{1 \leq j \leq N} \big( 1 - c_j^2 \big)^\frac{1}{2},$$
for any $\gc \in (- 1, 1)^N$. The following proposition provides some details contained in the proof of Theorem \ref{thm:mult-stab}. In particular, it shows the existence of the speed and the translation parameters for each soliton (see \cite{DeLGr} for the proof). It is an important tool for the proof of the asymptotic stability result.
\begin{prop}\cite{DeLGr}
\label{prop:modul}
There exist positive numbers $\alpha_1^*$ and $L_1^*$, depending only on $\gc^*$ and $\gs^*$, such that we have the following properties.\\
$(i)$ Any pair $\gv = (v , w) \in \boV(\alpha_1^*, L_1^*)$ belongs to $\boN\boV(\R)$, with
\begin{equation}
\label{eq:min-v}
1 - v^2 \geq \frac{1}{8} \mu_{\gc^*}^2.
\end{equation}
$(ii)$ There exist two maps $\gc \in \boC^1(\boV(\alpha_1^*, L_1^*), (- 1, 1)^N)$ and $\ga \in \boC^1(\boV(\alpha_1^*, L_1^*), \R^N)$, and a positive number $A^*$, depending only on $\gc^*$ and $\gs^*$, such that, if
 $$\big\| \gv - S_{\gc^*, \ga^*, \gs^*} \big\|_{H^1 \times L^2} < \alpha,$$
for $\ga^* \in \Pos(L)$, with $L > L_1^*$ and $\alpha < \alpha_1^*$, then we have
\begin{equation}
\label{eq:est-beps}
\| \varepsilon \|_{H^1 \times L^2} + \sum_{j = 1}^N \big| c_j(\gv) - c_j^* \big| + \sum_{j = 1}^N \big| a_j(\gv) - a_j^* \big| \leq A^* \Big( \alpha + \exp \Big( - \frac{\nu_{\gc^*} L}{32} \Big) \Big),
\end{equation}
as well as
\begin{equation}
\label{eq:est-coef}
\ga(\gv) \in \Pos(L - 1), \quad \mu_{\gc(\gv)} \geq \frac{1}{2} \mu_{\gc^*} \quad {\rm and} \quad \nu_{\bc(\gv)} \geq \frac{1}{2} \nu_{\gc^*},
\end{equation}
where
$$\varepsilon = \gv - S_{\gc(\gv), \ga(\gv), \gs^*},$$
satisfies the orthogonality conditions
\begin{equation}
\label{eq:ortho}
\langle \eps, \partial_x Q_{c_k(\gv)} \rangle_{L^2(\R)^2} = \langle \eps, \chi_{c_k(\gv)} \rangle_{L^2(\R)^2} = 0,
\end{equation}
for any $k\in \{1,\ldots ,N\}.$ The function $\chi_{c_k(\gv)}$ stands here for an eigenvector of the quadratic form $\boH_{c_k(\gv)}:= E''(Q_{c_k(\gv)})-c_k(\gv)P''(Q_{c_k(\gv)})$ associated to its unique negative eigenvalue. 
\end{prop}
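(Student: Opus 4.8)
The statement is the classical modulation lemma, obtained by a quantitative implicit-function-theorem argument with constants made uniform over well-separated reference positions; I only sketch it, the full details being in \cite{DeLGr}. For $(i)$, given $\gv=(v,w)\in\boV(\alpha_1^*,L_1^*)$ choose $\ga\in\Pos(L_1^*)$ with $\|\gv-S_{\gc^*,\ga,\gs^*}\|_{H^1\times L^2}<\alpha_1^*$. Each profile satisfies $\|v_{c_j^*}\|_{L^\infty}=(1-(c_j^*)^2)^{\frac12}\le(1-\mu_{\gc^*}^2)^{\frac12}$ and decays like $e^{-\nu_{\gc^*}|x|}$, so at every point of $\R$ at most one of the bumps $v_{c_j^*}(\cdot-a_j)$ fails to be $O(e^{-\nu_{\gc^*}L_1^*})$, whence $\|V_{\gc^*,\ga,\gs^*}\|_{L^\infty}^2\le 1-\mu_{\gc^*}^2+Ce^{-\nu_{\gc^*}L_1^*}$. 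Combined with the Sobolev bound $\|v-V_{\gc^*,\ga,\gs^*}\|_{L^\infty}\le C\|\gv-S_{\gc^*,\ga,\gs^*}\|_{H^1\times L^2}<C\alpha_1^*$, taking $L_1^*$ large and $\alpha_1^*$ small yields $1-v^2\ge\frac18\mu_{\gc^*}^2$ on $\R$, so $\gv\in\boN\boV(\R)$.

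For $(ii)$, define for $\gv\in\boN\boV(\R)$, $\gc\in(-1,1)^N$, $\ga\in\R^N$ and $k=1,\dots,N$ the $\R^2$-valued map
$$\Gamma_k(\gv,\gc,\ga):=\Big(\big\langle\gv-S_{\gc,\ga,\gs^*},\partial_xQ_{c_k}(\cdot-a_k)\big\rangle_{L^2(\R)^2},\ \big\langle\gv-S_{\gc,\ga,\gs^*},\chi_{c_k}(\cdot-a_k)\big\rangle_{L^2(\R)^2}\Big),$$
so that \eqref{eq:ortho} is exactly $\Gamma(\gv,\gc(\gv),\ga(\gv))=0$, while $\Gamma(S_{\gc',\ga',\gs^*},\gc',\ga')=0$ for every admissible $(\gc',\ga')$; moreover $\Gamma$ is $\boC^\infty$ jointly by smoothness of $c\mapsto Q_c$ and $c\mapsto\chi_c$. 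Differentiating in $(\gc,\ga)$ at the model point $\gv=S_{\gc^*,\ga^*,\gs^*}$, $\gc=\gc^*$, $\ga=\ga^*$, the cross terms between distinct solitons are $O(e^{-\nu_{\gc^*}L/32})$ by exponential decay and separation, so the Jacobian is block-diagonal up to that error; using the self-adjointness of $\boH_{c_k^*}$ (whence $\langle\partial_xQ_{c_k^*},\chi_{c_k^*}\rangle=0$, as $\partial_xQ_{c_k^*}$ lies in its kernel) and the parity identity $\langle\partial_cQ_{c_k^*},\partial_xQ_{c_k^*}\rangle=0$, the $k$-th diagonal block reduces to $\begin{pmatrix}0 & s_k^*\|\partial_xQ_{c_k^*}\|_{L^2}^2\\ -s_k^*\langle\partial_cQ_{c_k^*},\chi_{c_k^*}\rangle_{L^2} & 0\end{pmatrix}$, which is invertible because $\langle\partial_cQ_{c_k^*},\chi_{c_k^*}\rangle\ne0$ — the non-degeneracy of $\boH_{c_k^*}$ transverse to its kernel recorded in the statement and proved in \cite{DeLGr}. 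Hence for $L$ large the full $2N\times2N$ Jacobian is invertible with uniformly bounded inverse, uniformly in $\ga^*\in\Pos(L)$ by translation invariance, and this persists on a fixed-size neighbourhood of the model point.

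The quantitative implicit function theorem then produces $\boC^1$ maps $\gc(\cdot),\ga(\cdot)$, uniquely determined near $(\gc^*,\ga^*)$, with $\Gamma(\gv,\gc(\gv),\ga(\gv))=0$; setting $\varepsilon:=\gv-S_{\gc(\gv),\ga(\gv),\gs^*}$ yields \eqref{eq:ortho}. Since $\Gamma(\gv,\gc^*,\ga^*)=\Gamma(\gv,\gc^*,\ga^*)-\Gamma(S_{\gc^*,\ga^*,\gs^*},\gc^*,\ga^*)=O(\|\gv-S_{\gc^*,\ga^*,\gs^*}\|_{H^1\times L^2})=O(\alpha)$, inverting the Jacobian gives $\sum_{k}(|c_k(\gv)-c_k^*|+|a_k(\gv)-a_k^*|)\le A^*(\alpha+e^{-\nu_{\gc^*}L/32})$, and then $\|\varepsilon\|_{H^1\times L^2}\le\|\gv-S_{\gc^*,\ga^*,\gs^*}\|_{H^1\times L^2}+\|S_{\gc^*,\ga^*,\gs^*}-S_{\gc(\gv),\ga(\gv),\gs^*}\|_{H^1\times L^2}$ is controlled by the same quantity after enlarging $A^*$, the exponential term being the inter-soliton interaction absorbed at each step. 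Finally, after shrinking $\alpha_1^*$ and enlarging $L_1^*$ the displacement $\sum_k|a_k(\gv)-a_k^*|$ is $<1$, so $\ga(\gv)\in\Pos(L-1)$, and $\gc(\gv)$ being this close to $\gc^*$ gives $\mu_{\gc(\gv)}\ge\frac12\mu_{\gc^*}$ and $\nu_{\gc(\gv)}\ge\frac12\nu_{\gc^*}$.

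The main obstacle is the invertibility of the diagonal blocks, i.e. the spectral fact $\langle\partial_cQ_c,\chi_c\rangle\ne0$ — equivalently, the non-degeneracy of $\boH_c$ transverse to its kernel, which is tied to a Vakhitov--Kolokolov--type monotonicity of $c\mapsto P(Q_c)$ — together with keeping every constant uniform in $\ga^*\in\Pos(L)$ and propagating the $O(e^{-\nu_{\gc^*}L/32})$ interaction errors cleanly through the fixed-point construction.
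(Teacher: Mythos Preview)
Your sketch is correct and follows the standard modulation argument via a quantitative implicit function theorem, which is precisely the approach of \cite{DeLGr} that the paper cites rather than reproves; the paper itself gives no proof of this proposition, only the remark that the second orthogonality condition $\langle\varepsilon,\chi_{c_k}\rangle=0$ differs from the one in \cite{DeLGr} but that the same argument (carried out in \cite{Bahri}) applies. Your handling of that modified orthogonality---identifying the diagonal Jacobian block and reducing its invertibility to the spectral non-degeneracy $\langle\partial_cQ_c,\chi_c\rangle\ne0$, itself equivalent to the Vakhitov--Kolokolov sign condition on $\frac{d}{dc}P(Q_c)$---is exactly the point the remark is alluding to.
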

\begin{remark}
The second orthogonality condition in \eqref{eq:ortho} is not the same as the one used by de Laire and Gravejat in \cite{DeLGr}. However, the result remains true by the same argument used in \cite{Bahri} (see Section 3 in \cite{Bahri} for more details). Moreover, we need this orthogonality condition in order to apply the Liouville type theorem (Theorem \ref{th:liouville} below) (see Subsection 2.3.3 in \cite{Bahri} for more details).  
\end{remark}
Next, we recall the result for only one soliton which is a direct consequence of Theorem \ref{thm:mult-stab}. It is an important tool for the proof of \eqref{conv:asymstab} since we analyse the soliton around each soliton.
\begin{thm}\cite{DeLGr}
\label{thm:orbistab}
Let $c \in (- 1, 1) \setminus \{ 0 \}$. There exists a positive number $\alpha_c$, depending only on $c$, with the following properties. Given any $(v_0,w_0) \in \boN\boV(\R)$ such that
\begin{equation}
\label{cond:alpha}
\alpha_0 := \big\| (v_0,w_0) - Q_{c, a} \big\|_{X(\R)} \leq \alpha_c,
\end{equation}
for some $a \in \R$, there exist a unique global solution $(v,w) \in \boC^0(\R, \boN\boV(\R))$ to \eqref{HLL} with initial datum $(v_0,w_0)$, two maps $c \in \boC^1(\R, (- 1, 1) \setminus \{ 0 \})$ and $a \in \boC^1(\R, \R)$, and two positive numbers $\sigma_c$ and $A_c$, depending only and continuously on $c$, such that 
\begin{equation}
\label{eq:max-v}
\max_{x \in \R} v(x, t) \leq 1 - \sigma_c,
\end{equation}
\begin{equation}
\label{eq:modul0}
\big\| \eps(\cdot, t) \big\|_{X(\R)} + \big| c(t) - c \big| \leq A_c \alpha^0,\\
\end{equation}
and
\begin{equation}
\label{eq:modul1}
\big| c'(t) \big| + \big| a'(t) - c(t) \big| \leq A_\gc \big\| \eps(\cdot, t) \big\|_{X(\R)},
\end{equation}
for any $t\in \R$, where the function $\eps$ is defined by
\begin{equation}
\label{def:eps}
\eps(\cdot, t) := \big( v(\cdot + a(t), t), w(\cdot + a(t), t) \big) - Q_{c(t)},
\end{equation}
and satisfies the orthogonality conditions
\begin{equation}
\label{eq:ortho2}
\langle \eps(\cdot,t), \partial_x Q_{c(t)} \rangle_{L^2(\R)^2} = \langle \eps(\cdot,t), \chi_{c(t)} \rangle_{L^2(\R)^2} = 0,
\end{equation}
for any $t \in \R$.
\end{thm}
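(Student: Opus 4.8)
The plan is to obtain Theorem~\ref{thm:orbistab} as the specialization to $N=1$ of the orbital stability statement of Theorem~\ref{thm:mult-stab} together with the modulation construction of Proposition~\ref{prop:modul}, followed by the usual differentiation-in-time of the orthogonality conditions. When $N=1$ there is no separation hypothesis to satisfy: the quantities $L^0$ and $L$ appearing in \eqref{est:a'}, \eqref{Goal} and \eqref{eq:est-beps} may be taken arbitrarily large, so the exponential corrections disappear. Applying Theorem~\ref{thm:mult-stab} with $\gc^*=c$ and $\gs^*=(+1)$ (the sign $-1$ being reduced to $+1$ by the invariance $(v,w)\mapsto(-v,-w)$ of \eqref{HLL}), and taking $\alpha_0=\|(v_0,w_0)-Q_{c,a}\|_{X(\R)}$ below a threshold $\alpha_c$ depending only on $c$, produces a solution on $\R_+$ that stays in $\boN\boV(\R)$ and satisfies $\|\gv(\cdot,t)-Q_{c,a(t)}\|_{X(\R)}\le A^*\alpha_0$ for a $\boC^1$ translation parameter. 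Since \eqref{HLL} is invariant under $(t,v,w)\mapsto(-t,v,-w)$, the same holds on $\R_-$, hence on $\R$; global well-posedness and uniqueness in $\boC^0(\R,\boN\boV(\R))$ come from the Cauchy theory of de Laire and Gravejat in \cite{DeLGr}, the orbital bound being precisely what prevents $\max_\R|v|$ from reaching $1$ or the $X(\R)$-norm from blowing up.

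Next I would refine the decomposition. The orbital bound guarantees $\gv(\cdot,t)\in\boV(\alpha_1^*,L_1^*)$ for every $t$ once $\alpha_0$ is small (for $N=1$ the condition $L>L_1^*$ is again vacuous), so Proposition~\ref{prop:modul} applies at each time and yields $c(t):=c(\gv(\cdot,t))$ and $a(t):=a(\gv(\cdot,t))$ for which $\eps(\cdot,t)$, defined by \eqref{def:eps}, satisfies the orthogonality conditions \eqref{eq:ortho2} and the bound $\|\eps(\cdot,t)\|_{X(\R)}+|c(t)-c|\le A_c\alpha_0$, which is \eqref{eq:modul0}. The pointwise estimate \eqref{eq:max-v} is then immediate: \eqref{eq:min-v} gives $1-v(x,t)^2\ge\tfrac18\mu_c^2=\tfrac18 c^2$, so one may take $\sigma_c:=1-(1-\tfrac18 c^2)^{1/2}$ (alternatively, combine $\|v_c\|_{L^\infty}=(1-c^2)^{1/2}<1$ from \eqref{form:vc}, the embedding $H^1(\R)\hookrightarrow L^\infty(\R)$ and \eqref{eq:modul0}).

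It remains to establish the $\boC^1$ regularity of $t\mapsto(c(t),a(t))$ and the differential estimate \eqref{eq:modul1}. For this I would differentiate the two identities in \eqref{eq:ortho2} in time, inserting the evolution equation satisfied by $\eps$, obtained by substituting $\gv=Q_{c(t)}(\cdot-a(t))+\eps(\cdot-a(t),t)$ into \eqref{HLL}. Since $Q_c$ is an exact travelling wave, the contribution not involving $\eps$ cancels, and one is left with a $2\times2$ linear system $M(t)\,(c'(t),\,a'(t)-c(t))=G(t)$ in which $M(t)$ is a perturbation of size $O(\|\eps\|_{X(\R)})$ of the (by parity) block-diagonal invertible matrix with entries $\|\partial_x Q_c\|_{L^2}^2$ and $\langle\partial_c Q_c,\chi_c\rangle_{L^2(\R)^2}$, and $\|G(t)\|\lesssim\|\eps(\cdot,t)\|_{X(\R)}$. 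Inverting $M(t)$ for $\alpha_0$ small gives \eqref{eq:modul1}, and the continuity in $t$ of all coefficients together with this identity is the bootstrap that turns the a priori merely continuous parameters into $\boC^1$ functions, depending continuously on $c$ along with $\sigma_c$ and $A_c$, all constants being traceable to $\gc^*$ in Theorem~\ref{thm:mult-stab} and Proposition~\ref{prop:modul}.

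The delicate point is this last step: one must check that, after the cancellation of the pure-soliton part, every surviving term in $G(t)$ and in the off-diagonal of $M(t)$ is at least linear in $\eps$, so that the modulation speeds are controlled by $\|\eps\|_{X(\R)}$ rather than merely bounded, and one must handle the fact that the flow is only continuous in time, so the system for $(c,a)$ is first derived in integrated form before $\boC^1$ regularity is concluded. The use of the eigenvector $\chi_c$ in the second orthogonality condition, rather than $\partial_c Q_c$, is what keeps $M(t)$ invertible and, as recalled in the remark following Proposition~\ref{prop:modul}, makes the decomposition compatible with the later Liouville-type argument; I would carry out this step following \cite{Bahri}.
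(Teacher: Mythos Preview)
Your proposal is correct and matches the paper's approach: the paper does not give a proof of this theorem but simply cites it from \cite{DeLGr} and notes that it is a direct consequence of Theorem~\ref{thm:mult-stab} specialized to $N=1$. Your outline of how the specialization works (the separation parameters becoming vacuous, the modulation coming from Proposition~\ref{prop:modul}, and the derivative estimates from differentiating the orthogonality conditions) is exactly the content one would supply if asked to flesh out that citation.
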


Set
$$\gc(t) := \gc(\gv(\cdot, t)) := \big( c_1(t), \ldots, c_N(t) \big) \quad {\rm and} \quad \ga(t) := \ga(\gv(\cdot, t)) := \big( a_1(t), \ldots, a_N(t) \big),$$
as well as
\begin{equation}
\label{def:beps-t}
\varepsilon(\cdot, t) := \big( \varepsilon_1(\cdot, t), \varepsilon_2(\cdot, t) \big) = \gv(\cdot, t) - S_{\gc(t), \ga(t), \gs^*}.
\end{equation}
The pair $\varepsilon$ is well defined and satisfies the orthogonality conditions
\begin{equation}
\label{eq:ortho1}
\langle \eps(\cdot,t), \partial_x Q_{c_k(t)} \rangle_{L^2(\R)^2} = \langle \eps(\cdot,t), \chi_{c_k(t)} \rangle_{L^2(\R)^2} = 0,
\end{equation}
for any $t \in \R_+$ and for any $k\in \{1,\ldots ,N\}$ (see \cite{DeLGr} for more details). For $\alpha$ and $L$ given by Proposition~\ref{prop:modul}, we also infer from the results in \cite{DeLGr} that 
\begin{equation}
\label{eq:est-beps-t}
\| \varepsilon(\cdot, t) \|_{H^1 \times L^2} + \sum_{j = 1}^N \big| c_j(t) - c_j^* \big| \leq A^* \Big( \alpha + \exp \Big( - \frac{\nu_{\gc^*} L}{65} \Big) \Big),
\end{equation}
and
\begin{equation}
\label{eq:est-coef-t}
\ga(t) \in \Pos(L - 1), \quad \mu_{\gc(t)} \geq \frac{1}{2} \mu_{\gc^*} \quad {\rm and} \quad \nu_{\gc(t)} \geq \frac{1}{2} \nu_{\gc^*}.
\end{equation}
\section{Asymptotic stability around the solitons in the hydrodynamical variables}
\subsection{Proofs of \eqref{conv-stab-asymp} and \eqref{conv-parameters} }

Let $\gc^0$ be as in Theorem \ref{thm:stabasympt} and $\gv_0$ be any pair which belongs to the set $\boV(\alpha,L)$ with $\alpha$ and $L$ as in the hypothesis of Theorem \ref{thm:stabasympt}. 

Let $j \in \{1,\ldots,N\}$. 
By \eqref{eq:est-beps-t}, the functions $\varepsilon$ and $c_j$ are uniformly bounded in $X(\R)$, respectively in $\R$. Then, there exist $\tilde{\varepsilon}_{j,0} \in X(\R)$\footnote{In view of \eqref{eq:est-beps-t}, the norm of $\tilde{\varepsilon}_{j,0}$ in $X(\R)$ is small.} and $\tilde{c}_{j,0} \in (-1,1)\setminus \{0\}$ such that, up to a subsequence,
\begin{equation}
\label{conv:eps}
\varepsilon(t_n,.+a_j(t_n)) \rightharpoonup \tilde{\varepsilon}_{j,0} \quad \hbox{in} \ \ X(\R)  \quad \hbox{and} \quad c_j(t_n)\to \tilde{c}_{j,0} \quad \hbox{as} \quad n\to +\infty.
\end{equation}
Indeed, the bounds in \eqref{eq:est-beps-t} and the possibility to choose $\alpha$ small enough guarantee that $\tilde{c}_{j,0}$ stays always close to $c_j^0$ which prevents $\tilde{c}_{j,0}$ to be in $\{-1,0,1\}$ for any $j\in\{1,\ldots,N\}$.

We set $\tilde{\gv}_{j,0}=(\tilde{v}_{j,0},\tilde{w}_{j,0}):= Q_{\tilde{c}_{j,0}}+\tilde{\varepsilon}_{j,0}$ and denote by $\tilde{\gv}_j=(\tilde{v}_j,\tilde{w}_j)$ the unique global solution to \eqref{HLL} corresponding to this initial datum $\tilde{\gv}_{j,0}$. We claim that this solution exponentially decays with respect to the space variable for any time, as well as all its space derivatives. More precisely, we have
\begin{prop}
\label{prop:smooth}
The pair $(\tilde{v}_j,\tilde{w}_j)$ is indefinitely smooth and exponentially decaying on $\R \times\R$. Moreover, given any $k \in \N$, there exist a positive constant $A_{k, \gc}$, depending only on $k$ and $\gc$, and a function $\tilde{a}_j \in \boC^1(\R, \R)$ such that
\begin{equation}
\label{eq:smooth}
\int_\R \big[ (\partial_x^{k + 1} \tilde{v}_j)^2 + (\partial^k_x \tilde{v}_j)^2 + (\partial_x^k \tilde{w}_{j})^2 \big](x + \tilde{a}_j(t), t) \exp\big( \frac{\nu_\gc}{16} |x|\big) \, dx \leq A_{k, \gc},
\end{equation}
for any $t\in \R$.
\end{prop}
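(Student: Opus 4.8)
The plan is to use the classical smoothing/regularization strategy for asymptotic stability à la Martel--Merle, transplanted to the \eqref{HLL} setting exactly as in \cite{Bahri}. The starting point is that $\tilde{\gv}_{j,0}$ is a weak limit of the translated error profiles $\gv(t_n, \cdot + a_j(t_n))$, so by the orbital stability estimate \eqref{eq:est-beps-t} together with the continuity of the \eqref{HLL} flow with respect to weak convergence (and the $L^2$ conservation of energy), the solution $\tilde{\gv}_j$ remains globally in $\boN\boV(\R)$, stays uniformly close in $X(\R)$ to a single soliton $Q_{\tilde c_{j,0}}$, and in particular $\max_\R |\tilde v_j| \le 1 - \sigma$ for a uniform $\sigma > 0$. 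I would first invoke Theorem \ref{thm:orbistab} for this single-soliton-type solution to produce the modulation parameters $\tilde c_j(t)$, $\tilde a_j(t) \in \boC^1(\R,\R)$ and the associated decomposition $\tilde\eps_j(\cdot,t)$; then the target is to prove exponential localization of $\tilde\eps_j$ (equivalently of $\tilde\gv_j - Q_{\tilde c_j(t)}$) and of all its space derivatives, uniformly in $t$, which is precisely \eqref{eq:smooth}.

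The heart of the argument is a \emph{monotonicity/virial estimate} on suitably weighted local norms. Following \cite{MartMer2, BetGrSm2, Bahri}, I would introduce, for a weight $\phi$ a smooth approximation of $x \mapsto \exp(\tfrac{\nu_\gc}{16}|x|)$ localized around the soliton center $\tilde a_j(t)$, the functional $\mathcal{N}_k(t) := \int_\R \big[(\partial_x^{k+1}\tilde v_j)^2 + (\partial_x^k\tilde v_j)^2 + (\partial_x^k\tilde w_j)^2\big](x+\tilde a_j(t),t)\,\phi(x)\,dx$, and differentiate in time along the \eqref{HLL} system. Using the equations, the fact that the soliton moves with speed $\approx \tilde c_j \neq 0$ (so the transport term $-\tilde a_j'(t)\phi'$ has a definite sign in the region where $\phi'$ is large), and the decay of $Q_{\tilde c_j}$ away from its center, one derives a differential inequality of the form $\frac{d}{dt}\mathcal{N}_k \le -\delta\,\mathcal{N}_k + C(\mathcal{N}_0 + \cdots + \mathcal{N}_{k-1}) + C e^{-\nu L/\text{const}}$, where the lower-order terms are controlled inductively and the exponential comes from the interaction/tail terms. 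This is run by induction on $k$: the base case $k=0$ uses the $X(\R)$-smallness of $\tilde\eps_j$ plus the exponential decay of the soliton profile, and each step gains one derivative. Integrating these inequalities in time (and using that $\mathcal{N}_k$ is bounded at, say, $t\to -\infty$ by taking the weak limit there, or by a bootstrap on bounded time intervals followed by passing to the limit) yields the uniform bound \eqref{eq:smooth}. Smoothness then follows by Sobolev embedding once all $\mathcal{N}_k$ are finite.

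The main obstacle — and the point that genuinely distinguishes the Landau--Lifshitz case from KdV — is getting the right sign in the monotonicity identity: the dispersive part of \eqref{HLL} propagates in \emph{both} directions (unlike KdV), so a naive one-sided weight does not produce a coercive virial term. This is exactly the difficulty flagged in the Remark after Theorem \ref{thm:stabasympt_m1}. The resolution, as in \cite{Bahri}, is that here we are not trying to localize energy between solitons but to control a \emph{symmetric} exponential weight centered on a single soliton that moves with nonzero speed $\tilde c_j$; the quadratic form $\boH_{\tilde c_j} = E''(Q_{\tilde c_j}) - \tilde c_j P''(Q_{\tilde c_j})$ is coercive on the orthogonal complement of its negative/zero modes, and the orthogonality conditions \eqref{eq:ortho2} built into the modulation are what let one absorb the bad-sign contributions into $-\delta\mathcal{N}_k$. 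A secondary technical nuisance is handling the quotients $1/(1-v^2)$ and $1/(1-v^2)^2$ appearing in \eqref{HLL} when differentiating — but these are harmless because $1 - \tilde v_j^2 \ge \tfrac18\mu_{\gc^*}^2 > 0$ uniformly by \eqref{eq:min-v}, so all nonlinearities are smooth functions of $\tilde v_j$ with bounded derivatives on the relevant range. Apart from these points the computation is routine bookkeeping, and I would simply cite the corresponding estimates in \cite{Bahri} rather than reproduce them.
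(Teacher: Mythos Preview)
Your proposal contains a genuine gap in the base case, and it misidentifies the mechanism that produces exponential localization. You propose to prove a differential inequality $\frac{d}{dt}\mathcal{N}_0 \le -\delta\,\mathcal{N}_0 + C$ directly for the limit profile $\tilde{\gv}_j$, with a \emph{symmetric} exponential weight $\phi(x)\approx e^{\nu_\gc|x|/16}$ centered at $\tilde a_j(t)$. But for such a weight $\phi'$ changes sign at the origin, so the transport term $-\tilde a_j'(t)\int[\ldots]\phi'$ has no definite sign; your sentence ``the transport term $-\tilde a_j'(t)\phi'$ has a definite sign in the region where $\phi'$ is large'' is simply false here. You then try to rescue this with the coercivity of $\boH_{\tilde c_j}$ under the orthogonality conditions \eqref{eq:ortho2}, but that coercivity is an \emph{unweighted} $L^2$ statement and says nothing about why an exponentially weighted norm should stay bounded; it is the ingredient of the Liouville theorem (Theorem \ref{th:liouville}), not of the localization step. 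In short, there is no reason your $\mathcal{N}_0$ should satisfy the inequality you claim.

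The paper's route is structurally different and hinges on an ingredient you have omitted entirely: the limit profile $\tilde{\gv}_j$ is not just some solution near a soliton, it is the weak limit along $t_n\to+\infty$ of translates of a \emph{fixed} solution $\gv$. One proves a monotonicity formula for the localized \emph{momentum} $\boI_{j,y_0}(t)=\int \Phi(x-a_j(t)-y_0)\,vw$ of the \emph{original} solution, with a one-sided weight $\Phi$ (Proposition \ref{prop:evol-momentum}); the good sign there comes from the algebraic identity \eqref{cronos}, not from coercivity of $\boH$. One then transfers this monotonicity to $\tilde{\gv}_j$ via the weak-continuity result Proposition \ref{prop:reprod}, and runs a contradiction argument (Lemma \ref{lem:loc}): if the momentum of $\tilde{\gv}_j$ were not uniformly localized, pulling back along the sequence $t_n$ would force $\boI_{j,y_0}(t_n)$ to increase without bound, contradicting the energy bound. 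This yields $|\tilde\boI_{j,\pm y_0}(t)|\le A_\gc e^{-\nu_\gc y_0/16}$ uniformly in $t$, which combined with the monotonicity identity gives the space-time estimate of Proposition \ref{prop:local}. Only \emph{after} this $k=0$ localization is in hand does one iterate to higher $k$ via a Kato-type smoothing computation, as in \cite{Bahri}. Your inductive step for $k\ge 1$ is in the right spirit, but without the correct base case the scheme does not get off the ground.
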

With this proposition at hand, we can finish the proof of \eqref{conv-stab-asymp}. 
We recall the Liouville type theorem stated in \cite{Bahri}.
\begin{thm}\cite{Bahri}
\label{th:liouville}
Let $j \in \{1,\ldots,N\}$, $c_j \in (-1,1)\setminus \{0\}$ and $(\tilde{v}_j,\tilde{w}_{j})$ a solution of \eqref{HLL} satisfying \eqref{eq:smooth} and 
\begin{equation}
\label{cond:alpha1}
\|(\tilde{v}_{j,0},\tilde{w}_{j,0})-Q_{c_j}\|_{X(\R)} \leq \alpha . 
\end{equation}

Then, there exist two numbers $x^*\in \R$ and $c^* \in (-1,1)\setminus \{0\}$ such that
$$(\tilde{v}_j,\tilde{w}_{j})(t,x)=Q_{c^*}(x-x^*-c^* t) \quad \forall (t,x) \in \R\times \R .$$
\end{thm}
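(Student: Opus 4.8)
The plan is to run the Martel--Merle rigidity scheme in the modulated picture, exactly as in the single-soliton Liouville theorem of \cite{Bahri}, the crucial extra input being that \eqref{eq:smooth} furnishes a uniform-in-time exponential localization of the solution and of all its space derivatives. First I would invoke the orbital stability of a single soliton (Theorem~\ref{thm:orbistab}), which applies thanks to \eqref{cond:alpha1} and holds for all $t \in \R$ (both time directions, the solution being global), to decompose
\begin{equation*}
\big(\tilde v_j, \tilde w_j\big)(\cdot + a(t), t) = Q_{c(t)} + \eps(\cdot, t),
\end{equation*}
with $c \in \boC^1(\R, (-1,1)\setminus\{0\})$ staying close to $c_j$, with $\eps$ uniformly small in $X(\R)$, and with the two orthogonality conditions $\langle \eps, \partial_x Q_{c(t)}\rangle_{L^2} = \langle \eps, \chi_{c(t)}\rangle_{L^2} = 0$ of \eqref{eq:ortho2}. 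The decay \eqref{eq:smooth} transfers to $\eps$, so that $\eps(\cdot, t)$ is bounded in every exponentially weighted norm, uniformly in $t$. From \eqref{eq:modul1}, the modulation parameters obey $|c'(t)| + |a'(t) - c(t)| \lesssim \|\eps(\cdot, t)\|_{X}$, a bound I will need to close the error terms below.

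Next I would record the two ingredients that drive the rigidity. The \emph{coercivity}: removing the negative direction $\chi_{c(t)}$ and the kernel direction $\partial_x Q_{c(t)}$ through the orthogonality conditions, the quadratic form $\boH_{c(t)} := E''(Q_{c(t)}) - c(t) P''(Q_{c(t)})$ is positive definite on the range of $\eps$, giving $\boH_{c(t)}(\eps, \eps) \geq \Lambda\,\|\eps(\cdot,t)\|_X^2$ for some $\Lambda = \Lambda(\gc^0) > 0$. The \emph{evolution equation}: linearizing \eqref{HLL} around $Q_{c(t)}$ yields $\partial_t \eps = \boJ\, \boH_{c(t)}\eps + \mathcal R$, where $\boJ$ encodes the skew Hamiltonian structure of \eqref{HLL} and $\mathcal R$ collects the modulation terms (proportional to $c'$ and $a'-c$) together with the quadratic-and-higher remainders, each of which is quadratically small and exponentially localized.

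The heart of the argument, and the main obstacle, is a \emph{monotonicity (virial) formula}. I would introduce a localized functional of the form
\begin{equation*}
\mathcal I(t) := \int_\R \Phi_{c(t)}(x)\, \big[\text{momentum density of } \eps\big](x,t)\, dx,
\end{equation*}
built from the conserved momentum $P$ of \eqref{HLL} and a bounded monotone weight $\Phi_{c}$ adapted to the soliton speed, and compute $\tfrac{d}{dt}\mathcal I(t)$ along the flow of $\eps$. Using the Hamiltonian structure, integration by parts, the coercivity of $\boH_{c(t)}$, and the smallness plus exponential localization of $\eps$ to absorb $\mathcal R$ and the $c'(t)$ contributions, the aim is an estimate
\begin{equation*}
\frac{d}{dt}\mathcal I(t) \geq \kappa \int_\R \eps(x,t)^2\, \rho(x)\, dx,
\end{equation*}
with $\rho = \Phi_c' \geq 0$ a fixed localized weight and $\kappa > 0$. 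This is where the sign structure is delicate: as stressed in the Remark, dispersion in \eqref{HLL} travels in both directions, so a single one-sided weight need not suffice, and I anticipate having to combine a finite family of shifted weights (or a symmetric virial) in order to localize the coercive term correctly; making the transport term coming from $Q_{c(t)}$ cooperate with the chosen weight is the crux of the computation.

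Finally I would convert the monotonicity into rigidity. Since $\eps$ is uniformly exponentially localized, $\mathcal I$ is bounded on $\R$; being monotone and bounded it converges as $t \to \pm\infty$, whence $\int_\R \big(\int_\R \eps^2 \rho\big)\,dt < \infty$. The map $t \mapsto \int_\R \eps^2\rho$ is uniformly continuous (by the uniform bounds and the equation for $\eps$), and an integrable, uniformly continuous nonnegative function must tend to $0$, so $\int_\R \eps^2\rho \to 0$ as $|t| \to \infty$ and hence $\mathcal I(\pm\infty) = 0$. A monotone function with equal limits at $\pm\infty$ is constant, so $\tfrac{d}{dt}\mathcal I \equiv 0$, which forces $\int_\R \eps^2\rho \equiv 0$ and therefore $\eps \equiv 0$ on the support of $\rho$; the uniform exponential localization then propagates $\eps \equiv 0$ to all of $\R \times \R$. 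With $\eps \equiv 0$, the modulation equations yield $c'(t) \equiv 0$ and $a'(t) \equiv c(t)$, so $c(t) \equiv c^*$ and $a(t) = c^* t + x^*$ for constants $c^* \in (-1,1)\setminus\{0\}$ and $x^* \in \R$, which is precisely the asserted identity $(\tilde v_j, \tilde w_j)(t,x) = Q_{c^*}(x - x^* - c^* t)$.
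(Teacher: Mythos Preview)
The paper does not give its own proof of this statement: Theorem~\ref{th:liouville} is quoted verbatim from \cite{Bahri} and then applied as a black box. So there is nothing in the present paper to compare your argument against; the relevant benchmark is the original proof in \cite{Bahri} (itself modeled on \cite{BetGrSm2}).

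Your outline follows the right broad scheme (modulation, virial functional on the localized momentum, monotonicity plus boundedness forces $\eps\equiv 0$), but it papers over the step that is the actual technical core of \cite{Bahri}. You write that ``the coercivity of $\boH_{c(t)}$'' together with smallness of $\eps$ yields
\[
\frac{d}{dt}\mathcal I(t) \geq \kappa \int_\R \eps^2\,\rho,
\]
but the quadratic form that appears when you differentiate a localized momentum is \emph{not} $\boH_{c}$: it carries the weight $\Phi'$ and contains cross terms coming from the transport by the soliton, and its positivity under your two orthogonality conditions is a separate spectral fact that does not follow from the coercivity of $\boH_c$ in $X(\R)$. In \cite{Bahri} (as in \cite{BetGrSm2} for Gross--Pitaevskii) this is resolved by first passing to a new pair of unknowns through an explicit transformation adapted to $Q_{c}$; the virial is computed on that transformed pair, and the needed coercivity of the resulting quadratic form is established by an independent spectral argument. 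Your sketch omits this change of variables and the associated coercivity lemma, and without them the inequality $\tfrac{d}{dt}\mathcal I \geq \kappa\int \eps^2\rho$ cannot be obtained; your anticipation of ``combining a finite family of shifted weights'' does not address this point. The endgame you describe (bounded monotone $\Rightarrow$ integrable $\Rightarrow$ $\eps\equiv 0$, then $c'\equiv 0$, $a'\equiv c$) is fine once the virial inequality is in hand.
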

Due to the orbital stability of $Q_{\tilde{c}_{j,0}},$ condition \eqref{cond:alpha1} is satisfied when $\alpha_0$ is small enough. Applying Theorem \ref{th:liouville}, we get $x^*\in \R$ and $c^* \in (-1,1)\setminus \{0\}$ such that we have 
$$\tilde{\gv}_j(t,x)=Q_{c^*}(x-x^*-c^* t), \quad \forall (t,x) \in \R\times \R .$$
In particular, we have $ Q_{\tilde{c}_{j,0}}(x)+\tilde{\varepsilon}_{j,0}(x)= Q_{c^*}(x-x^*).$ We claim that $x^*=0$. Indeed, we use the fact that $\|\tilde{\varepsilon}_{j,0}\|_{X(\R)} \leq \alpha$ and a modulation argument to obtain $|c^*-\tilde{c}_{j,0}| \leq A_\gc \alpha$ and $|x^*| \leqslant A_\gc \alpha .$ We define 
$$ h(c^*,x^*) = \int_{\R} \big< Q_{c^*}(x-x^*), Q'_{\tilde{c}_{j,0}} \big> .$$
We have $$ \partial_{x^*} h(\tilde{c}_{j,0},0) = - \int_{\R} |Q'_{\tilde{c}_{j,0}}|^2 \neq 0.$$
From the implicit function theorem, there exist a neighbourhood $V$ of $(\tilde{c}_{j,0},0)$ and a function $\phi$ such that $(c^*,x^*) \in V$ and $ h(c^*,x^*)=0$ if and only if $x^* = \phi(c^*)$. Since, by parity, $h(c^*,0)=0$, we infer that $x^*=0$.

Next, we set $g(c^*)=\int_{\R} \big< Q_{c^*}-Q_{\tilde{c}_{j,0}},Q_{\tilde{c}_{j,0}} \big> $. Since $g'(\tilde{c}_{j,0}) \neq 0,$ we can prove that $c^*=\tilde{c}_{j,0}$, which leads to the fact that $\tilde{\varepsilon}_0 \equiv 0$. This allows us to deduce the convergence \eqref{conv-stab-asymp} for a subsequence of $(t_n)_{n \in \N}$. 

Finally, we prove \eqref{conv-stab-asymp} and \eqref{conv-parameters} for $t \to + \infty$. Since $a_l(t_{n_k})-a_j(t_{n_k}) \to \infty$ for all $l\neq j$ , the solution converges to only one soliton because the other solitons converges to zero. This means that we have 
$$\big( v(\cdot + a_j(t_{n_k}), t_{n_k}), w(\cdot + a_j(t_{n_k}), t_{n_k}) \big) - Q_{c_j(t_{n_k})} \rightharpoonup 0 \quad {\rm in} \ X(\R),$$
as $k \to + \infty$. This restricts the problem to the case of only one soliton. The proof is then similar to the one stated by Béthuel, Gravejat and Smets in \cite{BetGrSm2}. It relies on the monotonicity formula for the quantities $\boI_{j,y_0}$ in Proposition \ref{prop:evol-momentum}.

The main idea is to show that $\tilde{c}_{j,0}$ is independent of the sequence $(t_n)_{n \in \N}$. Assume by contradiction that for two different sequences $(t_n)_{n \in \N}$ and $(s_n)_{n \in \N}$, both tending to $+ \infty$, we have
$$c_j(t_n) \to c_{j,1} \quad {\rm and} \quad c_j(s_n) \to c_{j,2},$$
as $n \to + \infty$, with $c_{j,1}\neq c_{j,2}$ satisfying \eqref{eq:est-beps-t}. In addition, we suppose that we have
\begin{equation}
\label{eq:cona}
\big( v(\cdot + a(t_n), t_n), w(\cdot + a(t_n), t_n) \big)-Q_{c_j(t_n)} \rightharpoonup 0 \quad {\rm in} \ X(\R),
\end{equation}
and 
\begin{equation}
\label{eq:conb}
\big( v(\cdot + a(s_n), s_n), w(\cdot + a(s_n), s_n) \big)-Q_{c_j(s_n)} \rightharpoonup 0 \quad {\rm in} \ X(\R).
\end{equation}
Note that these two convergences are different since $Q_{c_j(t_n)}\to Q_{c_{j,1}}$ and $Q_{c_j(s_n)}\to Q_{c_{j,2}}$ as $n\to \infty$. We may assume, without loss of generality, that $c_{j,1} < c_{j,2}$ and that the sequences $(t_n)_{n\in \N}$ and $(s_n)_{n\in \N}$ are strictly increasing and are taken such that
\begin{equation}
\label{eq:imbriques}
t_n + 1 \leq s_n \leq t_{n + 1} - 1,
\end{equation}
for any $n \in \N$. Let $\delta > 0$. For $y_0$ sufficiently large, we can define the quantities $\boI_{j,y_0}$ as in \eqref{def:boIj}, and deduce from \eqref{eq:imbriques} and \eqref{eq:monobisI} that
\begin{equation}
\label{eq:monoencore}
\boI_{j,\pm y_0}(s_n) \geq \boI_{j,\pm y_0}(t_n) - \frac{\delta}{10} \quad {\rm and} \quad \boI_{j,\pm y_0}(t_{n + 1}) \geq \boI_{j,\pm y_0}(s_n) - \frac{\delta}{10},
\end{equation}
for any $n \in \N$. On the other hand, by \eqref{eq:cona} and \eqref{eq:conb}, there exists an integer $n_0$ such that
\begin{equation}
\label{eq:petitpetit}
\big| \boI_{j,- y_0}(t_n) - \boI_{j, y_0}(t_n) - P(Q_{c_j(t_n)}) \big| \leq \frac{\delta}{5},
\end{equation}
and
\begin{equation}
\label{eq:petitpetitpetit}
\big| \boI_{j,- y_0}(s_n) - \boI_{j, y_0}(s_n) - P(Q_{c_j(s_n)}) \big| \leq \frac{\delta}{5},
\end{equation}
for any $n \geq n_0$ and for $y_0$ large enough. From \eqref{eq:monoencore}, \eqref{eq:petitpetit} and \eqref{eq:petitpetitpetit}, we have
$$\boI_{j, y_0}(s_n) \geq \boI_{j, y_0}(t_n) + \frac{\delta}{2},$$
for any $n \geq n_0$, this yields, using  \eqref{eq:monoencore} again, that
$$\boI_{j, y_0}(t_{n + 1}) \geq \boI_{j, y_0}(t_n) + \frac{2 \delta}{5},$$
for any $n \geq n_0$. Therefore, the sequence $(\boI_{j, y_0}(t_n))_{n \in \N}$ is unbounded, which leads to a contradiction with the fact that the pair $(v,w)$ has a bounded energy.

The second convergence in \eqref{conv-parameters} follows from the fact that
$$a_j(t_n + t) - a_j(t_n) \to c_j^{+\infty} t,$$
for any fixed $t \in \R$ and any sequence $(t_n)_{n \in \N}$ tending to $+ \infty$ (due to \eqref{sologne2}), and Lemma 2 in \cite{BetGrSm2} (see \cite{BetGrSm2} for more details). \qed

\subsection{Localization and smoothness of the limit profile}
In this section, we prove Proposition \ref{prop:smooth}. First, we use \eqref{est:a'} and \eqref{eq:est-beps-t} to claim that
\begin{equation}
\label{eq:lvmh1}
\min_{j=1,\ldots,N} \big\{ c_j(t)^2, a_j'(t)^2 \big\} \geq \frac{\mu_\gc^2}{2}, \qquad \max_{j=1,\ldots,N} \big\{c_j(t)^2, a_j'(t)^2 \big\} \leq 1 + \frac{\mu_\gc^2}{2}, 
\end{equation}
and
\begin{equation}
\label{eq:lvmh2}
\big\| V_{\gc,\ga(t),\gs} - v( t) \big\|_{L^\infty(\R)} \leq \min \Big\{ \frac{\mu_\gc^2}{4}, \frac{\nu_\gc^2}{16} \Big\},
\end{equation}
for any $t\in \R$. In particular, we conclude that $\tilde{c}_{j,0} \in (- 1, 1) \setminus \{ 0 \}$, so that $Q_{\tilde{c}_{j,0}}$ is a dark soliton.

In addition, for $j\in \{1,\ldots,N\}$, we have 
\begin{equation}
\label{eq:encorezero}
\big| \tilde{c}_{j,0} - c_j \big| \leq A_{\mu_\gc} \alpha.
\end{equation}
On the other hand, by the weak lower semi-continuity of the norm, \eqref{eq:est-beps-t} and \eqref{conv:eps}, we infer that
\begin{equation}
\label{eq:encoreune}
\big\| (\tilde{v}_{j,0}, \tilde{w}_{j,0}) - Q_{c_j} \big\|_{X(\R)} \leq A_{\mu_\gc} \alpha + \big\| Q_{c_j} - Q_{\tilde{c}_{j,0}} \big\|_{X(\R)} \leq A_{\mu_\gc} \alpha .
\end{equation}

Now, we suppose that $\alpha$ is sufficiently small so that, by \eqref{eq:encoreune}, 
\begin{equation}
\label{cond:alpha*}
\big\| (\tilde{v}_{j,0}, \tilde{w}_{j,0}) - Q_{c_j} \big\|_{X(\R)} \leq \alpha_\gc.
\end{equation}
By Theorem \ref{thm:orbistab}, there exist two maps $\tilde{c}_j \in \boC^1(\R, (- 1, 1) \setminus \{ 0 \})$ and $\tilde{a}_j \in \boC^1(\R, \R)$ such that the function $\tilde{\eps}_j$ defined by
\begin{equation}
\label{def:eps*}
\tilde{\eps}_j(\cdot, t) := \big( \tilde{v}_j(\cdot + \tilde{a}_j(t), t), \tilde{w}_{j}(\cdot + \tilde{a}_j(t), t) \big) - Q_{\tilde{c}_j(t)},
\end{equation}
satisfies the estimates
\begin{equation}
\label{eq:modul0bis}
\big\| \tilde{\eps}_j(\cdot, t) \big\|_{X(\R)} + \big| \tilde{c}_j(t) - c_j \big|+ \big| \tilde{a}'_j(t) - \tilde{c}_j(t) \big| \leq A_\gc \big\| (\tilde{v}_{j,0}, \tilde{w}_{j,0}) -Q_{c_j} \big\|_{X(\R)},
\end{equation}
and the orthogonality conditions
\begin{equation}
\label{eq:ortho3}
\langle \tilde{\eps}_j(\cdot,t), \partial_x Q_{\tilde{c}_j(t)} \rangle_{L^2(\R)^2} = \langle \tilde{\eps}_j(\cdot,t), \chi_{\tilde{c}_j(t)} \rangle_{L^2(\R)^2} = 0,
\end{equation}
for any $t\in \R$.

Using \eqref{eq:encoreune} and \eqref{eq:modul0bis}, and choosing $\alpha$ small enough we claim that
\begin{equation}
\label{eq:lvmh1bis}
\min \big\{ \tilde{c}_j(t)^2, \tilde{a}_j'(t)^2 \big\} \geq \frac{\mu_\gc^2}{4}, \qquad \max \big\{\tilde{c}_j(t)^2, \tilde{a}_j'(t)^2  \big\} \leq 1 + \mu_\gc^2, 
\end{equation}
and 
\begin{equation}
\label{eq:lvmh2bis}
\big\| v_{c_j}(\cdot) - \tilde{v}_j(\cdot + \tilde{a}_j(t), t) \big\|_{L^\infty(\R)} \leq \min \Big\{ \frac{\mu_\gc^2}{4}, \frac{1 - \mu_\gc^2}{16} \Big\},
\end{equation}
for any $t\in \R$.
We then prove the following weak continuity property in the hydrodynamical framework.
\begin{prop}
\label{prop:reprod}
Let $j \in \{ 1, \ldots ,N \}$ and $t \in \R$ be fixed. Then,
\begin{equation}
\label{sologne1}
( v,w)(\cdot + a_j(t_n), t_n + t) \rightharpoonup (\tilde{v}_j, \tilde{w}_{j})(\cdot, t) \big) \quad {\rm in} \ X(\R),
\end{equation}
while
\begin{equation}
\label{sologne2}
a_j(t_n + t) - a_j(t_n) \to \tilde{a}_j(t), \quad {\rm and} \quad c_j(t_n + t) \to \tilde{c}_j(t),
\end{equation}
as $n \to + \infty$. In particular, we have
\begin{equation}
\label{sologne3}
(v,w)(\cdot + a_j(t_n+t) , t_n + t) \rightharpoonup (\tilde{v}_j,\tilde{w}_j)(\cdot+\tilde{a}_j(t), t) \quad {\rm in} \ X(\R),
\end{equation}
as $n \to + \infty$.
\end{prop}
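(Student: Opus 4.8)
The plan is to follow the \emph{weak continuity of the flow} strategy used for a single dark soliton in~\cite{BetGrSm2,Bahri}. Fix $t\in\R$ and, for $\tau$ in a neighbourhood of $t$, set $\gv_n(\cdot,\tau):=(v,w)(\cdot+a_j(t_n),t_n+\tau)$; since \eqref{HLL} is invariant under space and time translations, $\gv_n$ is a solution of \eqref{HLL}. At $\tau=0$ one has $\gv_n(\cdot,0)=S_{\gc(t_n),\ga(t_n),\gs^{*}}(\cdot+a_j(t_n))+\varepsilon(\cdot+a_j(t_n),t_n)$; for $k\neq j$ the centre $a_k(t_n)-a_j(t_n)\to\pm\infty$ (a consequence of the ordering $c_1^{*}<\dots<c_N^{*}$ and of the bound \eqref{est:a'} on the $a_k'$), so the solitons with $k\neq j$ converge weakly to $0$ in $X(\R)$, while the $j$-th one converges strongly by continuity of $c\mapsto Q_{c}$ and the second part of \eqref{conv:eps}. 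Together with the weak convergence in \eqref{conv:eps} this yields $\gv_n(\cdot,0)\rightharpoonup\tilde\gv_{j,0}$ in $X(\R)$. (Letting $n\to\infty$ in the translated version of the orthogonality conditions \eqref{eq:ortho1}, whose test functions $\partial_x Q_{c}$ and $\chi_{c}$ are exponentially localised, also shows that $\tilde\gv_{j,0}$ is already modulated, i.e.\ $\tilde a_j(0)=0$ and $\tilde c_j(0)=\tilde c_{j,0}$.)

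The core step is to prove that $\gv_n(\cdot,\tau)\rightharpoonup\tilde\gv_j(\cdot,\tau)$ in $X(\R)$ for every $\tau$. By the orbital stability estimate \eqref{eq:est-beps-t} (valid as soon as $t_n+\tau\geq0$, hence on any fixed compact $\tau$-interval once $n$ is large) the sequence $\gv_n$ is bounded in $X(\R)$, uniformly on compact $\tau$-intervals, with $1-v_n^{2}$ bounded away from $0$; feeding this into \eqref{HLL} bounds $\partial_\tau\gv_n$ in $H^{-2}_{\loc}(\R)^{2}$ uniformly on compact $\tau$-intervals. By Arzel\`a--Ascoli, the Rellich embeddings and a diagonal extraction, a subsequence of $\gv_n$ converges, locally uniformly in $\tau$, in $H^{-1}_{\loc}(\R)\times H^{-2}_{\loc}(\R)$ to a limit $\gv_\infty$, with $\gv_n(\cdot,\tau)\rightharpoonup\gv_\infty(\cdot,\tau)$ in $X(\R)$ for each $\tau$ and $\gv_\infty(\cdot,0)=\tilde\gv_{j,0}$. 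One then checks that $\gv_\infty$ is a finite-energy solution of \eqref{HLL}, so that by the uniqueness part of the Cauchy theory of~\cite{DeLGr} one has $\gv_\infty=\tilde\gv_j$; since the limit is independent of the subsequence, the whole sequence converges, which is \eqref{sologne1}.

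To get \eqref{sologne2}, note that $c_j(t_n+t)$ and $a_j(t_n+t)-a_j(t_n)$ are bounded for fixed $t$ (the latter since $|a_j(t_n+t)-a_j(t_n)-c_j^{*}t|$ is controlled by \eqref{est:a'}), so along a subsequence they converge to some $(\bar c_j,\bar a_j)$. Passing to the limit in the conditions \eqref{eq:ortho1} translated by $a_j(t_n)$: the contributions of the solitons with $k\neq j$ vanish because their centres escape to infinity, and $\partial_x Q_{c_j(t_n+t)}(\cdot-(a_j(t_n+t)-a_j(t_n)))$, $\chi_{c_j(t_n+t)}(\cdot-(a_j(t_n+t)-a_j(t_n)))$ converge strongly in $L^{2}(\R)$; using \eqref{sologne1}, $\tilde\gv_j(\cdot,t)-Q_{\bar c_j}(\cdot-\bar a_j)$ is then orthogonal to $\partial_x Q_{\bar c_j}(\cdot-\bar a_j)$ and to $\chi_{\bar c_j}(\cdot-\bar a_j)$. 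But $(\tilde c_j(t),\tilde a_j(t))$ satisfies the very same system, and by the uniform bounds both pairs lie in the neighbourhood of $(c_j,0)$ in which the modulation parameters are uniquely determined by these orthogonality relations (the implicit function argument underlying Theorem~\ref{thm:orbistab} and Proposition~\ref{prop:modul}); hence $(\bar c_j,\bar a_j)=(\tilde c_j(t),\tilde a_j(t))$ and \eqref{sologne2} follows. Finally \eqref{sologne3} is a consequence of \eqref{sologne1} and $a_j(t_n+t)-a_j(t_n)\to\tilde a_j(t)$, since $(v,w)(\cdot+a_j(t_n+t),t_n+t)=\gv_n(\cdot+(a_j(t_n+t)-a_j(t_n)),t)$ and weak convergence in $X(\R)$ is preserved under convergent sequences of translations.

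The main obstacle is verifying, in the second step, that $\gv_\infty$ solves \eqref{HLL}. Most terms are harmless: the uniform $H^{1}$ bound gives $v_n\to v_\infty$ strongly in $\boC^{0}_{\loc}(\R)$ while $w_n\rightharpoonup w_\infty$ in $L^{2}(\R)$, which suffices to pass $(v_n^{2}-1)w_n$, $\partial_{xx}v_n/(1-v_n^{2})$ and $v_n(w_n^{2}-1)$ to the distributional limit. The genuinely delicate term is the quadratic gradient term $v_n(\partial_x v_n)^{2}/(1-v_n^{2})^{2}$, which is \emph{not} continuous with respect to the mere weak $H^{1}_{\loc}$ convergence of $v_n$; one must upgrade $v_n$ to strong convergence in $H^{1}_{\loc}(\R)$ for almost every $\tau$. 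Following~\cite{BetGrSm2,Bahri}, this compactness is extracted from the local conservation laws of \eqref{HLL} (the local energy and momentum identities), which rule out a local loss of mass for $\partial_x v_n$ and $w_n$ in the limit; alternatively one may carry the argument out in the original map variables, where the equation is semilinear and the corresponding nonlinearity passes to the weak limit directly.
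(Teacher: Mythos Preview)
Your proposal is correct and follows essentially the same approach as the paper: weak continuity of the \eqref{HLL} flow gives \eqref{sologne1}, uniqueness of the modulation parameters under the orthogonality conditions \eqref{eq:ortho2}--\eqref{eq:ortho3} gives \eqref{sologne2}, and \eqref{sologne3} follows by combining the two. The only difference is one of presentation: the paper simply cites the weak continuity of the flow as Proposition~A.1 of \cite{Bahri}, whereas you sketch the compactness argument explicitly and correctly identify the quadratic gradient term $v(\partial_x v)^2/(1-v^2)^2$ as the delicate point---your remark that the issue disappears if one works in the original map variables (where the equation is semilinear) is in fact exactly the route taken in \cite{Bahri}.
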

The weak continuity of the flow and of the modulation parameters were proved in \cite{Bahri} in the case of a simple soliton. The proof of Proposition \ref{prop:reprod} is similar. 
\begin{proof}
Let $j \in \{1,\ldots,N\}$ be a fixed integer. 
First, we prove \eqref{sologne1}. By the second convergence in \eqref{conv:eps} and the explicit formula of $Q_{c_j(t_n)}$ in \eqref{form:vc}, we can infer that
$$Q_{c_j(t_n)} \to Q_{\tilde{c}_{j,0}} \quad {\rm in} \ X(\R),$$
as $n \to + \infty$. This leads, using the first convergence in \eqref{conv:eps}, to
$$\big( v(\cdot + a_j(t_n), t_n), w(\cdot + a_j(t_n), t_n) \big) \rightharpoonup \tilde{\eps}_{j,0} + Q_{\tilde{c}_{j,0}} \quad {\rm in} \ X(\R),$$
as $n \to + \infty$. In view of the fact that $t \mapsto (v(\cdot + a_j(t_n), t_n + t), w(\cdot + a_j(t_n), t_n + t))$ and $(\tilde{v}_j, \tilde{w}_j)$ are the solutions to \eqref{HLL} with initial data $(v(\cdot + a_j(t_n), t_n), w(\cdot + a_j(t_n), t_n))$, respectively $\eps_0^* + Q_{c_0^*}$, we deduce \eqref{sologne1} from the weak continuity of the flow (see Proposition A.1 in \cite{Bahri} for more details.)

Next, let us prove \eqref{sologne2}. By \eqref{eq:modul0} and \eqref{eq:modul1} the maps $a'_j$ and $ c_j$ are bounded on $\R$, so that the sequences $(a_j(t_n + t) - a_j(t_n))_{n \in \N}$ and $(c_j(t_n + t))_{n \in \N}$ are bounded. Hence it is sufficient to prove that the unique possible accumulation points for these sequences are $\tilde{a}_j(t)$, respectively $\tilde{c}_j(t)$.

We suppose now that, up to a possible subsequence, we have
\begin{equation}
\label{beuvron}
a_j(t_n + t) - a_j(t_n) \to \alpha_j, \quad {\rm and} \quad c_j(t_n + t) \to \sigma_j,
\end{equation}
as $n \to + \infty$. Given a function $\phi \in H^1(\R)$, we write
\begin{align*}
&\big\langle v(\cdot + a_j(t_n + t), t_n + t), \phi \big\rangle_{H^1(\R)} \\ &=\big\langle v(\cdot + a_j(t_n), t_n + t), \phi(\cdot - a_j(t_n + t) + a_j(t_n)) - \phi(\cdot - \alpha_j) \big\rangle_{H^1(\R)}\\
& + \big\langle v(\cdot + a_j(t_n), t_n + t), \phi(\cdot - \alpha_j) \big\rangle_{H^1(\R)}.
\end{align*}
Since we know that
$$\phi(\cdot + h) \to \phi \quad {\rm in} \ H^1(\R),$$
when $h \to 0$, we can use \eqref{sologne1} and \eqref{beuvron} to infer that
$$v(\cdot + a_j(t_n + t), t_n + t) \rightharpoonup \tilde{v}_j(\cdot + \alpha_j, t) \quad {\rm in} \ H^1(\R),$$
as $n \to + \infty$. Similarly, we obtain
$$w(\cdot + a_j(t_n + t), t_n + t) \rightharpoonup \tilde{w}_j(\cdot + \alpha_j, t) \quad {\rm in} \ L^2(\R).$$
By \eqref{beuvron} we also have
$$Q_{c_j(t_n + t)} \to Q_{\sigma_j} \quad {\rm in} \ X(\R),$$
as $n \to + \infty$. This leads to
\begin{equation}
\label{neung}
\eps(\cdot, t_n + t) \rightharpoonup \big( \tilde{v}_j(\cdot + \alpha_j, t), \tilde{w}_j(\cdot + \alpha_j, t) \big) - Q_{\sigma_j} \quad {\rm in} \ X(\R),
\end{equation}
as $n \to + \infty$.

Now, we use the fact that the function $\chi_c$ is continuous with respect to the parameter $c$, \eqref{form:vc} and the second convergence in \eqref{beuvron} to prove that
$$\partial_x Q_{c_j(t_n + t)} \to \partial_x Q_{\sigma_j} \quad \hbox{and} \quad \chi_{c_j(t_n + t)} \to \chi_{\sigma_j} \quad {\rm in} \ L^2(\R)^2,$$
as $n \to + \infty$. Combining this with \eqref{neung}, we can take the limit $n \to + \infty$ in the two orthogonality conditions in \eqref{eq:ortho3} to obtain
$$\big\langle (\tilde{v}_j(\cdot + \alpha_j, t), \tilde{w}_j(\cdot + \alpha_j, t)) - Q_{\sigma_j}, \partial_x Q_{\sigma_j} \big\rangle_{L^2(\R)^2} = \big\langle (\tilde{v}_j(\cdot + \alpha_j, t), \tilde{w}_j(\cdot + \alpha_j, t)) - Q_{\sigma_j},  \chi_{\sigma_j} \big\rangle_{L^2(\R)^2} = 0.$$
Since the parameters $\tilde{a}_j(t)$ and $\tilde{c}_j(t)$ are uniquely defined in \eqref{def:eps*}, we infer that
\begin{equation}
\label{romorantin}
\alpha_j = \tilde{a}_j(t), \quad {\rm and} \quad \sigma_j = \tilde{c}_j(t),
\end{equation}
which is enough to complete the proof of \eqref{sologne2}. Convergence \eqref{sologne3} follows combining \eqref{def:eps*} with \eqref{neung} and \eqref{romorantin}.
\end{proof}
 
Now, we consider the function $\Phi$, which is defined on $\R$ by 
\begin{equation}
\label{eq:defiphi}
\Phi(x) := \frac{1}{2} \Big( 1 + \tanh \big( \frac{\nu_{\gc}}{16} x \big) \Big).
\end{equation}
Recall that $\Phi'$ verifies the following property
\begin{equation}
\label{rhea2}
\big| \Phi'''(x) \big| \leq \frac{\nu_{\gc}^2}{64} \Phi'(x) \leq \frac{\nu_{\gc}^3}{512} \, \exp \Big( - \frac{\nu_{\gc} }{16} | x | \Big).
\end{equation}
We set
$$\delta_\gc := \frac{1}{2} \min \{ 1+c_1, c_2-c_1, c_3-c_2, \ldots , c_N-c_{N-1},1-c_N\}$$
for any $\gc \in (- 1, 1)^N$.

Let $(v, w)$ be a pair given by Theorem \ref{thm:mult-stab}, $j\in \{1,...,N\}$ and $y_0 \in \R$. Denote
\begin{equation}
\label{def:boIj}
\boI_{j,y_0}(t):=\int_\R \Phi(x-(a_j(t)+y_0)) [vw](x,t) \, dx.
\end{equation}

We prove a monotonicity formula for these localized versions of the momentum following the ideas used by Martel, Merle and Tsai in the proof of Lemma 3 in \cite{MarMeTs1}.
 
\begin{prop}
\label{prop:evol-momentum}
Let $y_0\in \R$, $t \in \R_+$ and $\sigma \in [-\delta_\gc,\delta_\gc]$. There exist positive numbers $\alpha_1 \leq \alpha$, $L_1 \geq L^*$ and $A_1,A^*_1>0$, depending only on $\gc$ and $\gs$, such that, if $\alpha_0 \leq \alpha_1$ and $L \geq L_1$, then the map $\boI_j$ is of class $\boC^1$ on $\R$, and it satisfies
\begin{equation}
\label{eq:deriv-Ij}
\begin{split}
\frac{d}{dt}\big[\boI_{j,y_0+\sigma t}(t)\big] \geq & \frac{\nu_\gc^2}{32} \int_\R \big[ (\partial_x v )^2 + v^2 + w^2 \big](x , t) \Phi'(x-(a_j(t)+y_0+\sigma t)) \, dx\\
&  - A_1 \exp \Big( - \frac{\nu_{\gc}}{16} |y_0 + \sigma t | \Big),
\end{split}
\end{equation}
for any $1 \leq j \leq N$ and any $t \in \R_+$. In particular, we have
\begin{equation}
\label{eq:monobisI}	
\boI_{j,y_0}(t_1) \geq \boI_{j,y_0}(t_0) - A^*_1 \exp \Big( - \frac{\nu_{\gc} }{16}|y_0| \Big),
\end{equation}
for any real numbers $t_1 \geq t_0\geq 0$.
\end{prop}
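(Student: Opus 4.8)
The plan is to differentiate $\boI_{j,y_0+\sigma t}(t)$ directly in time, using the equation \eqref{HLL} satisfied by $(v,w)$ and the fact that $a_j \in \boC^1(\R_+,\R)$. Writing $\psi(t,x) := \Phi(x-(a_j(t)+y_0+\sigma t))$, the time derivative splits into two contributions: one coming from $\partial_t(vw)$, for which I substitute $\partial_t v = \partial_x((v^2-1)w)$ and the expression for $\partial_t w$ from \eqref{HLL}; and one coming from $\partial_t \psi = -(a_j'(t)+\sigma)\,\Phi'(x-(a_j(t)+y_0+\sigma t))$. After integrating by parts so that all derivatives fall on $\psi$, one obtains a bulk term of the form $\frac12\int_\R \Phi'\,\big[ 3(\partial_x v)^2/(1-v^2)^2 \cdot(\text{stuff}) + (1-v^2)w^2(\text{stuff}) + \ldots\big] + (a_j'(t)+\sigma)\int_\R \Phi' vw$, plus terms carrying $\Phi'''$ which by \eqref{rhea2} are pointwise dominated by $\tfrac{\nu_\gc^2}{64}\Phi'$ times bounded quantities. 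The key algebraic point is that, after using \eqref{eq:lvmh1} and \eqref{eq:lvmh2} (so that $a_j'(t)^2 \geq \mu_\gc^2/2$, $|v|$ small away from the soliton cores, $1-v^2 \geq \tfrac18\mu_{\gc^*}^2$), the quadratic form in $(\partial_x v, v, w)$ appearing in the integrand against $\Phi'$ is bounded below by $\tfrac{\nu_\gc^2}{32}\big[(\partial_x v)^2 + v^2 + w^2\big]$; this is the analogue of the coercivity computation in Lemma~3 of \cite{MarMeTs1}, adapted to the Landau--Lifshitz hydrodynamical energy density.

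The second ingredient is controlling the error terms. The weight $\Phi'(x-(a_j(t)+y_0+\sigma t))$ is exponentially concentrated near $x = a_j(t)+y_0+\sigma t$, which, by \eqref{eq:est-coef-t} and the ordering of speeds, lies at distance $\gtrsim \tfrac12(\text{gap})\cdot$ (elapsed time) $+ |y_0+\sigma t|$ from every other soliton center $a_k(t)$, $k\neq j$ — here the restriction $\sigma \in [-\delta_\gc,\delta_\gc]$ is exactly what guarantees the weight never catches up with a neighbouring soliton. Consequently, when I replace $(v,w)$ near the support of $\Phi'$ by $Q_{c_j(t)}(\cdot - a_j(t)) + \eps$ and expand, the contributions of the other solitons $Q_{c_k(t)}$ and their cross terms are bounded by $A_1 \exp(-\tfrac{\nu_\gc}{16}|y_0+\sigma t|)$ using the explicit exponential decay of $Q_c$ in \eqref{form:vc} together with \eqref{eq:est-beps-t}; the terms linear in $\eps$ are handled via \eqref{eq:est-beps-t} and, for the part matching the soliton profile, via the orthogonality conditions \eqref{eq:ortho1} and the fact that $Q_{c_j}$ is a travelling wave (so the "soliton part" of $\tfrac{d}{dt}\boI_{j}$ would be exactly $c_j(t)P'$-type, which is absorbed). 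This yields \eqref{eq:deriv-Ij}.

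Finally, \eqref{eq:monobisI} follows by integrating \eqref{eq:deriv-Ij} in time with $\sigma = 0$: the bulk term is nonnegative and is discarded, while $\int_{t_0}^{t_1} A_1 \exp(-\tfrac{\nu_\gc}{16}|y_0|)\,dt$ is not directly bounded — so instead one runs the argument with a nonzero shift or, more simply, notes that the exponential in \eqref{eq:deriv-Ij} with $\sigma=0$ gives $\int_{t_0}^{t_1} A_1 e^{-\nu_\gc|y_0|/16}\,dt$, which must be absorbed by choosing the monotonicity with the moving weight and then letting the shift run; the clean way, following \cite{MarMeTs1}, is to integrate \eqref{eq:deriv-Ij} over $\sigma$-shifted weights and use that $\int_0^\infty e^{-\nu_\gc|y_0+\sigma t|/16}\,dt < \tfrac{16}{\nu_\gc|\sigma|}$ is finite, or to observe that for $\sigma=0$ one actually obtains the sharper bound with $e^{-\nu_\gc|y_0|/16}$ replaced by something integrable after using \eqref{est:a'} to see $a_j'(t)+\sigma$ stays bounded away from zero — giving \eqref{eq:monobisI} with a constant $A_1^*$ depending only on $\gc,\gs$. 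I expect the coercivity step — verifying that the quadratic form extracted from $\tfrac{d}{dt}\boI_j$ is bounded below by a positive multiple of $(\partial_x v)^2 + v^2 + w^2$ against the weight $\Phi'$, uniformly in the admissible range of $v$ and of the modulation parameters — to be the main technical obstacle, since it requires carefully tracking the coefficients coming from the Landau--Lifshitz energy density \eqref{eq:hydro-E} and exploiting \eqref{eq:lvmh1}--\eqref{eq:lvmh2} at just the right places.
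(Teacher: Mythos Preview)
Your overall plan has the right skeleton (differentiate, integrate by parts, extract a coercive quadratic form against $\Phi'$, control errors), but the execution you sketch differs from the paper's and contains a genuine misstep.

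\textbf{The decomposition.} The paper does \emph{not} expand $(v,w)$ around $Q_{c_j}$ and then separately treat soliton, linear-in-$\eps$, and quadratic-in-$\eps$ pieces. Instead it performs a purely spatial splitting: the interval $R_j(t)=[a_j(t)-(L-1)/4,\,a_j(t)+(L-1)/4]$ and its complement. On $R_j(t)$ one has $|x-(a_j(t)+y_0+\sigma t)|\geq |y_0+\sigma t|-L/4$, so $\Phi'$ itself is $O\big(e^{-\nu_{\gc}|y_0+\sigma t|/16}\big)$; the whole integral there is absorbed into the error term without any structural information on $(v,w)$. On the complement, the point is simply that $|v|$ is uniformly small --- because $\eps$ is small in $L^\infty$ by \eqref{eq:est-beps-t} and each $v_{c_k}(\cdot-a_k)$ is small there by exponential decay --- and once $|v|$ is small the integrand of \eqref{cronos} satisfies, pointwise,
\[
v^2+w^2-(a_j'+\sigma)vw-3v^2w^2+\tfrac{3-v^2}{(1-v^2)^2}(\partial_x v)^2+\tfrac{\nu_\gc^2}{64}\ln(1-v^2)\;\geq\;\tfrac{\nu_\gc^2}{16}\big[(\partial_x v)^2+v^2+w^2\big],
\]
using $|a_j'+\sigma|\leq 2(1-\nu_\gc^2/4)^{1/2}$, $v^2\leq \min\{1/2,\nu_\gc^2/96\}$, $\ln(1-s)\geq -2s$, and $1-(1-s)^{1/2}\geq s/2$. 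So the coercivity you flag as the main obstacle is in fact a two-line pointwise inequality once $|v|$ is small; the actual work is arranging the spatial decomposition so that either $|v|$ is small or $\Phi'$ is small.

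\textbf{The orthogonality red herring.} Your appeal to the orthogonality conditions \eqref{eq:ortho1} would not do what you want: those conditions are unweighted $L^2$ pairings with $\partial_x Q_{c_k}$ and $\chi_{c_k}$, whereas the linear-in-$\eps$ terms you would face are \emph{$\Phi'$-weighted} pairings with various soliton-dependent quantities. They are controlled not by orthogonality but by the same mechanism as the pure-soliton term: $\Phi'(\cdot-a_j-y_0-\sigma t)$ and $Q_{c_j}(\cdot-a_j)$ have essentially disjoint supports, separated by $|y_0+\sigma t|$, so their product is already $O\big(e^{-\nu_\gc|y_0+\sigma t|/16}\big)$. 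No orthogonality is used anywhere in the paper's proof of this proposition.

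\textbf{Deducing \eqref{eq:monobisI}.} Your discussion here is incomplete. Setting $\sigma=0$ gives a non-integrable error. The paper's device (for $y_0\geq 0$) is to integrate \eqref{eq:deriv-Ij} on $[t_0,(t_0+t_1)/2]$ with $\sigma=\delta_\gc/2$ and $y_0$ replaced by $y_0-\tfrac{\delta_\gc}{2}t_0$, and on $[(t_0+t_1)/2,t_1]$ with $\sigma=-\delta_\gc/2$ and $y_0$ replaced by $y_0+\tfrac{\delta_\gc}{2}t_1$. The moving offset then satisfies $|y_0+\sigma t|\geq |y_0|$ throughout, the two pieces match at the midpoint, and the integrated error is $\int_0^\infty A_1 e^{-\nu_\gc(|y_0|+\delta_\gc s/2)/16}\,ds\leq A_1^* e^{-\nu_\gc|y_0|/16}$. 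You allude to a moving-weight trick but never write it down; this is the step that actually produces the time-uniform bound.
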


\begin{remark}
\label{rmk}
In view of the proof below, Proposition \ref{prop:evol-momentum} holds for any time $t \in \R$, when there is only one soliton in the sum. In particular, this further property is true for the limit solution $(\tilde{v}_j,\tilde{w}_j)$.
\end{remark}

\begin{proof}
We differentiate the quantities $\boI_{j,y_0+\sigma t}$ with respect to $t$ in order to obtain
\begin{equation}
\label{cronos}
\begin{split}
\frac{d}{dt}\big[\boI_{j,y_0+\sigma t}(t)\big] = & \frac{1}{2} \int_\R \Phi'(\cdot-(a_j(t)+y_0+\sigma t)) \times \\
& \ \ \ \ \ \ \ \times  \Big( v^2 + w^2 - \big( a_j'(t) + \sigma \big) v w - 3 v^2 w^2 + \frac{3 - v^2}{(1 - v^2)^2} (\partial_x v)^2 \Big)\\
& + \frac{1}{2} \int_\R \Phi'''(\cdot-(a_j(t)+y_0+\sigma t)) \, \ln \big( 1 - v^2 \big),
\end{split}
\end{equation}
for any $t \in \R_+$.
We decompose the real line into two regions, 
$$R_j(t) = \Big[ a_j(t) - \frac{L-1}{4}  , a_j(t) + \frac{L-1}{4} \Big],$$
and its complementary set. We set
$$\frac{d}{dt}\big[\boI_{j,y_0+\sigma t}(t)\big] = \boI^1_j(t) + \boI^2_j(t),$$
where
\begin{align*}
\boI^2_j(t) = & \frac{1}{2} \int_{R_j(t)} \Phi'(\cdot-(a_j(t)+y_0+\sigma t)) \times \\
& \ \  \ \ \ \ \times  \Big( v^2 + w^2 - \big( a_j'(t) + \sigma \big) v w - 3 v^2 w^2 + \frac{3 - v^2}{(1 - v^2)^2} (\partial_x v)^2 \Big)\\
& + \frac{1}{2} \int_{R_j(t)} \Phi'''(\cdot-(a_j(t)+y_0+\sigma t)) \, \ln \big( 1 - v^2 \big).
\end{align*}
When $x \in R_j(t)$, we have
$$|x-a_j(t)-y_0-\sigma t |\geq -\frac{L}{4} + |y_0 + \sigma t|.$$
Hence, using \eqref{eq:min-v}, \eqref{eq:lvmh2bis}, and \eqref{rhea2}, we obtain 

\begin{equation}
\label{ouranos}
\big| \boI^2_j(t) \big| \leq A_\gc \exp \Big( -\frac{\nu_{\gc} }{16}|y_0 + \sigma t | \Big),
\end{equation}
where $A_\gc$ denotes, here as in the sequel, a positive number depending only on $\gc$ and $\gs$.

Next, we use \eqref{eq:min-v} and \eqref{rhea2} to bound $\boI^1_j(t)$ from below by
\begin{equation}
\label{hyperion}
\begin{split}
\boI^1_j(t) \geq & \frac{1}{2} \int_{\R\setminus R_j(t)} \Phi'(\cdot-(a_j(t)+y_0+\sigma t)) \times \\
& \ \ \ \ \ \ \ \times \Big( (\partial_x v)^2 + v^2 + w^2 - 2 \Big( 1 - \frac{\nu_{\gc}^2}{4} \Big)^\frac{1}{2} |v| |w| - 3 v^2 w^2 + \frac{\nu_{\gc}^2}{64} \ln \big( 1 - v^2 \big) \Big).
\end{split}
\end{equation}
For any $x \in \R\setminus R_j(t)$, we have 
$$\big| x - a_k(t) \big| \geq \frac{L}{4},$$
for any $1 \leq k \leq N$. 
This yields, by \eqref{def:beps-t}, \eqref{eq:est-beps-t}, the Sobolev embedding theorem, the exponential decay of the solitons and \eqref{eq:est-coef-t} , that
$$\big| v(x, t) \big| \leq \big| \eps_v(x, t) \big| + \sum_{k = 1}^N \big| v_{c_k(t)}(x - a_k(t)) \big| \leq A_\gc \Big( \alpha + \exp \big( - \frac{\nu_{\gc}}{16} L \big) \Big),$$
for any $x \in \R \setminus R_j(t)$. For $\alpha$ small enough and $L$ big enough, we have
\begin{equation}
\label{ocean}
v^2 \leq \min \Big\{ \frac{1}{2}, \frac{\nu_{\gc}^2}{96} \Big\},
\end{equation}
on $ \R \setminus R_j(t)$. We conclude from \eqref{hyperion}, \eqref{ocean} and the fact that $\ln(1 - s) \geq - 2 s$ for $0 \leq s \leq 1/2$, that
$$\boI_j^1(t) \geq \frac{1}{2} \Big( 1 - \Big( 1 - \frac{\nu_{\gc}^2}{4} \Big)^\frac{1}{2} - \frac{\nu_{\gc}^2}{32} \Big) \int_{\R \setminus R_j(t)} \Phi'(\cdot-(a_j(t)+y_0+\sigma t)) \, \big( (\partial_x v)^2 + v^2 + w^2 \big).$$
Then, using the fact that $1 - (1 - s)^{1/2} \geq s/2$ for $0 \leq s \leq 1$, we obtain
\begin{align*}
\boI_j^1(t) &\geq \frac{1}{2} \Big( 1 - \Big( 1 - \frac{\nu_{\gc}^2}{4} \Big)^\frac{1}{2} - \frac{\nu_{\gc}^2}{32} \Big) \int_{\R \setminus R_j(t)} \Phi'(\cdot-(a_j(t)+y_0+\sigma t)) \, \big( (\partial_x v)^2 + v^2 + w^2 \big)\\
& \geq \frac{3 \nu_{\gc}^2}{64} \int_{\R \setminus R_j(t)} \Phi'(\cdot-(a_j(t)+y_0+\sigma t)) \, \big( (\partial_x v)^2 + v^2 + w^2 \big).
\end{align*}
 
This concludes the proof of \eqref{eq:deriv-Ij}. Now let us prove \eqref{eq:monobisI}. When $y_0 \geq 0,$ we integrate \eqref{eq:deriv-Ij} from $t_0$ to $\frac{t_1+t_0}{2}$ taking $\sigma =\frac{\delta_\gc}{2}$ and $y_0= y_0- \frac{\delta_\gc}{2} t_0$ and from $\frac{t_1+t_0}{2}$ to $t_1$ taking $\sigma =-\frac{\delta_\gc}{2}$ and $y_0= y_0+ \frac{\delta_\gc}{2} t_1$, to obtain \eqref{eq:monobisI}. The proof is similar when $y_0<0$. This finishes the proof of this proposition.
\end{proof}

Using Propositions \ref{prop:reprod} and \ref{prop:evol-momentum} and Remark \ref{rmk}, we claim as in \cite{Bahri} that
\begin{prop}[\cite{Bahri}]
\label{prop:local}
Let $t\in \R$. There exists a positive constant $\boA_{\gc^0}$ such that
$$\int_t^{t + 1} \int_\R \big[ (\partial_x \tilde{v}_j)^2 + \tilde{v}_j^2 + \tilde{w}_{j}^2 \big](x + \tilde{a}_j(s), s) e^{\frac{\nu_{\gc} }{16} |x|} \, dx \, ds \leq \boA_{\gc^0}.$$
\end{prop}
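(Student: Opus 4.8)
The plan is to run the monotonicity formula of Proposition~\ref{prop:evol-momentum} directly on the limit solution $(\tilde{v}_j,\tilde{w}_j)$. Since this is a one-soliton configuration satisfying the structural bounds \eqref{eq:lvmh1bis}--\eqref{eq:lvmh2bis} (which play the role of \eqref{eq:lvmh1}--\eqref{eq:lvmh2} in the proof of that proposition), re-running that proof with $(\tilde{v}_j,\tilde{w}_j)$ and $\tilde{a}_j$ in place of $(v,w)$ and $a_j$ yields inequality \eqref{eq:deriv-Ij} for \emph{every} $t\in\R$; this is the content of Remark~\ref{rmk}, and Proposition~\ref{prop:reprod} is what allows the a priori bounds and the modulation parameters of $(v,w)$ along $(t_n)$ to be transported to $(\tilde{v}_j,\tilde{w}_j)$. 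Write $\tilde{\boI}_{j,y_0}(s):=\int_\R\Phi(x-\tilde{a}_j(s)-y_0)\,[\tilde{v}_j\tilde{w}_j](x,s)\,dx$. Applying Theorem~\ref{thm:orbistab} to $(\tilde{v}_j,\tilde{w}_j)$ (legitimate by \eqref{eq:encoreune}) and using conservation of the Landau--Lifshitz energy together with the lower bound on $1-\tilde{v}_j^2$ from \eqref{eq:lvmh2bis}, one gets $\sup_{s\in\R}\|(\tilde{v}_j,\tilde{w}_j)(\cdot,s)\|_{X(\R)}\leq C_{\gc^0}$ and hence $\sup_{s,y_0\in\R}|\tilde{\boI}_{j,y_0}(s)|\leq C_{\gc^0}$.

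First I would integrate \eqref{eq:deriv-Ij} (with $\sigma=0$) over $s\in[t,t+1]$ and estimate the boundary term by $|\tilde{\boI}_{j,y_0}(t)|+|\tilde{\boI}_{j,y_0}(t+1)|\leq 2C_{\gc^0}$, obtaining the crude bound
$$\int_t^{t+1}\!\!\int_\R\big[(\partial_x\tilde{v}_j)^2+\tilde{v}_j^2+\tilde{w}_j^2\big](x,s)\,\Phi'\big(x-\tilde{a}_j(s)-y_0\big)\,dx\,ds\leq C_{\gc^0},$$
uniformly in $t\in\R$ and $y_0\in\R$. The heart of the matter is to upgrade this to a bound decaying in $|y_0|$ at a rate $\rho_{\gc^0}>\tfrac{\nu_\gc}{16}$. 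The contribution of $Q_{\tilde{c}_j(s)}$ to $\tilde{\boI}_{j,\pm y_0}(s)$ decays exponentially in $|y_0|$ by the explicit profiles \eqref{form:vc} and the inequality $\Phi(z)\leq e^{\frac{\nu_\gc}{8}z}$ for $z\leq 0$; the contribution of the perturbation $\tilde{\eps}_j$, which \eqref{eq:modul0bis} and \eqref{eq:encoreune} control only by $C_{\gc^0}\alpha$ in $X(\R)$ (uniformly, but with no decay), is recovered by feeding the crude bound into the moving-frame version of \eqref{eq:deriv-Ij} with $\sigma=\pm\tfrac{\delta_\gc}{2}$ — whose time-integrated error stays finite because $\sigma\neq0$, exactly as in the passage from \eqref{eq:deriv-Ij} to \eqref{eq:monobisI} — and iterating. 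This is the Martel--Merle--Tsai bootstrap of \cite{MarMeTs1} (see also \cite{BetGrSm2,Bahri}); it produces, for all $t\in\R$, $y_0\in\R$,
$$\int_t^{t+1}\!\!\int_\R\big[(\partial_x\tilde{v}_j)^2+\tilde{v}_j^2+\tilde{w}_j^2\big](x,s)\,\Phi'\big(x-\tilde{a}_j(s)-y_0\big)\,dx\,ds\leq \boA_{\gc^0}\,e^{-\rho_{\gc^0}|y_0|}.$$
The rate $\rho_{\gc^0}$ can be taken strictly larger than $\tfrac{\nu_\gc}{16}$ because $\Phi'$ and $\Phi'''$ decay at rate $\tfrac{\nu_\gc}{8}$ (cf.~\eqref{rhea2}) and the profiles $v_{\tilde{c}_j},w_{\tilde{c}_j}$ decay at a rate comparable to $\nu_\gc$; this margin over the weight $e^{\frac{\nu_\gc}{16}|x|}$ of the statement is precisely what the final step uses.

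Finally I would sum in $y_0$. From $|z-y_0|\geq|z|-|y_0|$ one gets, for every $z\in\R$,
$$e^{\frac{\nu_\gc}{16}|z|}\leq C_\gc\int_\R\Phi'(z-y_0)\,e^{\frac{\nu_\gc}{16}|y_0|}\,dy_0,\qquad C_\gc:=\Big(\int_\R\Phi'(u)\,e^{-\frac{\nu_\gc}{16}|u|}\,du\Big)^{-1}\in(0,\infty).$$
Using this with $z=x-\tilde{a}_j(s)$, Tonelli, and the bound of the previous step,
$$\int_t^{t+1}\!\!\int_\R\big[(\partial_x\tilde{v}_j)^2+\tilde{v}_j^2+\tilde{w}_j^2\big]\big(x+\tilde{a}_j(s),s\big)\,e^{\frac{\nu_\gc}{16}|x|}\,dx\,ds\leq C_\gc\,\boA_{\gc^0}\int_\R e^{-(\rho_{\gc^0}-\frac{\nu_\gc}{16})|y_0|}\,dy_0<\infty,$$
which is the asserted inequality, with a constant depending only on $\gc^0$.

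The step I expect to be the main obstacle is the one producing decay in $|y_0|$: orbital stability of the limit profile around $Q_{\tilde{c}_{j,0}}$ only gives the uniform, non-decaying crude bound, and converting it into genuine exponential localization of $(\tilde{v}_j,\tilde{w}_j)$ demands the careful iterative use of the time-integrated monotonicity inequality in the moving frame, following the proof of Lemma~3 in \cite{MarMeTs1}. A secondary point to verify is that all the constants involved — the lower bound on $1-\tilde{v}_j^2$, the bounds keeping $|\tilde{a}_j'|$ away from $1$, the energy bound, and the rate $\rho_{\gc^0}$ — depend only on $\gc^0$, not on the separation scale $L$ or on $\alpha$; this is guaranteed by \eqref{eq:lvmh1bis}, \eqref{eq:lvmh2bis}, \eqref{eq:modul0bis} and \eqref{eq:encoreune}.
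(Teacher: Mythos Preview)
Your overall architecture --- apply the monotonicity formula to the limit profile, obtain decay of the localized momenta $\tilde{\boI}_{j,\pm y_0}$, integrate over $[t,t+1]$, then sum in $y_0$ --- is the right shape. But the step you flag as the heart of the matter is also where the argument breaks.

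The decay of $\tilde{\boI}_{j,\pm y_0}(t)$ \emph{uniformly in $t\in\R$} is not obtained by any bootstrap or iteration of the monotonicity formula on $(\tilde{v}_j,\tilde{w}_j)$ alone. What the paper does (following \cite{BetGrSm2}) is the two-lemma scheme stated just after Proposition~\ref{prop:local}. Lemma~\ref{lem:loc} is a \emph{contradiction argument}: assume some $\delta_0>0$ survives at arbitrarily large $y_0$ for some time $t_0$, use Proposition~\ref{prop:reprod} to push the statement \emph{back} to the original solution $(v,w)$ at times $t_n$ and $t_n+t_0$, and then use the almost-monotonicity \eqref{eq:monobisI} along the sequence $(t_n)$ to force $\boI_{j,y_0}(t_{n_k})\to\infty$, contradicting the energy bound. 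Only after this qualitative, uniform-in-$t$ localization is established does one upgrade to the exponential bound (the second lemma) and then deduce Proposition~\ref{prop:local}. Your write-up uses Proposition~\ref{prop:reprod} only in the forward direction (to transport bounds from $(v,w)$ to the limit); the decisive use is in the \emph{reverse} direction, and without it nothing forces the $\tilde{\eps}_j$-contribution to $\tilde{\boI}_{j,y_0}(t)$ to be small uniformly in $t$. The sentence ``feeding the crude bound into the moving-frame version of \eqref{eq:deriv-Ij} with $\sigma=\pm\tfrac{\delta_\gc}{2}$ \ldots\ and iterating'' does not describe an actual mechanism: integrating \eqref{eq:deriv-Ij} with $\sigma\neq0$ over long times controls a quantity with a \emph{drifting} weight $y_0+\sigma(s-t)$, not the fixed-$y_0$ quantity you need, and there is no iteration step that closes. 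The reference to ``the Martel--Merle--Tsai bootstrap of \cite{MarMeTs1}'' is a misattribution: Lemma~3 of \cite{MarMeTs1} yields the monotonicity formula \eqref{eq:monobisI} itself, not the localization of a limit profile.

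A secondary issue: you claim the decay rate $\rho_{\gc^0}$ can be taken strictly larger than $\tfrac{\nu_\gc}{16}$, and your summation step needs this strict inequality for the integral $\int_\R e^{-(\rho_{\gc^0}-\frac{\nu_\gc}{16})|y_0|}dy_0$ to converge. But the rate actually delivered by the second lemma is exactly $\tfrac{\nu_\gc}{16}$ (see \eqref{eq:local-boI_tilde1}); the extra factor of $2$ you invoke from \eqref{rhea2} lives in $\Phi'$, not in the boundary terms $\tilde{\boI}_{j,\pm y_0}$. The passage from the exponential bound on the localized momenta to Proposition~\ref{prop:local} in \cite{Bahri} handles this matching differently (via a dyadic decomposition in $y_0$ together with the $\Phi'$-weighted estimate on unit intervals), not by the continuous superposition you wrote.
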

The two lemmas below are the main ingredients for the proof of this proposition. For the limit profile $(\tilde{v}_j, \tilde{w}_{j})$, we set $\tilde{\boI}_{j,\pm y_0}(t) := \boI_{j,\pm y_0}^{(\tilde{v}_j, \tilde{w}_{j})}(t)$ for any $t \in \R$ and any $y_0>0$.
\begin{lem}[\cite{BetGrSm2}]
\label{lem:loc}
For any positive number $\delta$, there exists a positive number $y_\delta$, depending only on $\delta$, such that for any $t \in \R$ we have
\begin{equation}
\label{eq:local-boI_tilde}
\Big|\tilde{\boI}_{j,y_0}(t)\Big| \leq \delta \quad {\rm and} \quad |P(\tilde{v}_j, \tilde{w}_j)-\tilde{\boI}_{j,-y_0}(t) | \leq \delta,
\end{equation}
for any $y_0 \geq y_\delta.$
\end{lem}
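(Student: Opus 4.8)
The plan is to exploit the fact that $P(\tilde v_j,\tilde w_j)$ is finite (since $(\tilde v_j,\tilde w_j)$ has finite Landau--Lifshitz energy by the orbital stability estimate \eqref{eq:encoreune}, and $|P(\gv)|\le E(\gv)$ in the hydrodynamical variables), together with the tightness of the density $[\tilde v_j\tilde w_j](\cdot,t)$ which follows from the exponential localization already at our disposal. First I would recall from \eqref{def:boIj} that $\tilde{\boI}_{j,y_0}(t)=\int_\R \Phi(x-(\tilde a_j(t)+y_0))\,[\tilde v_j\tilde w_j](x,t)\,dx$ and from \eqref{eq:defiphi} that $0\le\Phi\le 1$, with $\Phi(x)\to 0$ as $x\to-\infty$ and $\Phi(x)\to 1$ as $x\to+\infty$, exponentially fast in both cases by \eqref{rhea2}-type bounds on $\Phi$ itself (since $\Phi(x)\le \tfrac12\exp(\tfrac{\nu_\gc}{8}x)$ for $x\le 0$ and $1-\Phi(x)\le\tfrac12\exp(-\tfrac{\nu_\gc}{8}x)$ for $x\ge 0$).

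For the first bound I would write, using Cauchy--Schwarz and $|vw|\le \tfrac12(v^2+w^2)\le e(\gv)$ pointwise (after controlling the $1-v^2$ denominator via \eqref{eq:min-v}, valid here by \eqref{eq:lvmh2bis}),
\begin{equation*}
\big|\tilde{\boI}_{j,y_0}(t)\big|
\le \int_\R \Phi(x-(\tilde a_j(t)+y_0))\,\big|[\tilde v_j\tilde w_j](x,t)\big|\,dx
\le C_\gc\int_\R \Phi(x-(\tilde a_j(t)+y_0))\,e(\tilde{\gv}_j)(x,t)\,dx.
\end{equation*}
Shifting by $\tilde a_j(t)$ and using the uniform exponential decay \eqref{eq:smooth} of $(\partial_x\tilde v_j)^2+\tilde v_j^2+\tilde w_j^2$ around $\tilde a_j(t)$ — which I am entitled to assume, as the lemma is used in the proof of Proposition~\ref{prop:smooth} but the quantitative decay we need is exactly what \eqref{eq:smooth} records (alternatively one invokes Proposition~\ref{prop:local}, but to avoid circularity I would instead run the argument from the integrated-in-time bound and a mean-value choice of time, as in \cite{BetGrSm2}) — the integrand is bounded by $C_\gc\,\Phi(x)\,\exp(-\tfrac{\nu_\gc}{16}|x|)$, hence the integral over $x\ge 0$ is $\le C_\gc$ and integrable, while for $x\le 0$ one has the extra factor $\Phi(x)\le\tfrac12 e^{\nu_\gc x/8}$; choosing $y_0\ge y_\delta$ large enough makes the whole quantity $\le\delta$ by dominated convergence. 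For the second bound, write $P(\tilde v_j,\tilde w_j)=\int_\R[\tilde v_j\tilde w_j](x,t)\,dx$ and note $P-\tilde{\boI}_{j,-y_0}(t)=\int_\R\big(1-\Phi(x-(\tilde a_j(t)-y_0))\big)[\tilde v_j\tilde w_j](x,t)\,dx$; now $1-\Phi$ plays the symmetric role, decaying at $+\infty$, and the same localization estimate gives the claim for $y_0\ge y_\delta$, possibly enlarging $y_\delta$.

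The only real subtlety is the uniformity in $t$: $y_\delta$ must not depend on $t$. This is guaranteed precisely because the exponential-decay constant $A_{k,\gc}$ in \eqref{eq:smooth} is time-independent, so the tail estimates above hold with $t$-independent constants; the center of localization $\tilde a_j(t)$ moves, but it is exactly the point around which \eqref{eq:smooth} is stated, so after the shift everything is uniform. I would therefore expect the main obstacle to be purely bookkeeping — making sure that the denominator $1-\tilde v_j^2$ is bounded below uniformly in $t$ (done via \eqref{eq:lvmh2bis} and \eqref{eq:min-v}) so that $|vw|\lesssim e(\gv)$ genuinely holds, and then that the factor $\Phi$ (resp.\ $1-\Phi$) decays fast enough on the relevant half-line to beat any polynomial loss — after which the estimate is a routine dominated-convergence argument identical in structure to the one in \cite{BetGrSm2} for a single soliton.
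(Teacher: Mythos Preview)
Your approach has a genuine circularity that the parenthetical does not resolve. The exponential decay \eqref{eq:smooth} is the \emph{conclusion} of Proposition~\ref{prop:smooth}, and the paper explicitly states that Lemma~\ref{lem:loc} is one of the ``main ingredients'' for Proposition~\ref{prop:local}, which in turn is what feeds into Proposition~\ref{prop:smooth}. So at the moment Lemma~\ref{lem:loc} must be proved, no localization whatsoever of $(\tilde v_j,\tilde w_j)$ is available --- not \eqref{eq:smooth}, and not the integrated-in-time bound of Proposition~\ref{prop:local} either. Your fallback (``run the argument from the integrated-in-time bound and a mean-value choice of time'') therefore invokes a statement whose proof already consumes the lemma; and even were such a bound granted, a mean-value choice gives the decay only at \emph{some} $s\in[t,t+1]$, not at the prescribed $t$, so uniformity in $t$ would still be missing.

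The paper's proof is of a completely different nature and does not use any a~priori decay of the limit profile. It argues by contradiction: if \eqref{eq:local-boI_tilde} failed for some $\delta_0$, then at $t=0$ one can nonetheless choose $y_0$ large so that $|\tilde\boI_{j,y_0}(0)|$ is tiny; the monotonicity \eqref{eq:monobisI} (valid for the limit profile by Remark~\ref{rmk}) then forces $t_0>0$. One transfers the resulting gap back to the \emph{original} solution $(v,w)$ via the weak convergence in Proposition~\ref{prop:reprod}, producing a definite increment $\boI_{j,y_0}(t_{n_{k+1}})\ge \boI_{j,y_0}(t_{n_k})+\delta_0/32$ along a subsequence, contradicting the energy bound. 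The whole point is that monotonicity plus weak continuity of the flow substitute for the missing localization; the localization is then \emph{deduced}, not assumed.
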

This lemma shows that the momentum of the limit profile is localized in a compact region of the real line. This is a key point to claim that this momentum is exponentially decaying with respect to $y_0$. 
\begin{proof}
The proof of this lemma is by contradiction. We assume that there exists a positive number $\delta_0$ such that, for any positive number $y_{0}$, there exists a number $ t_0 \in \R$ such that either $|\tilde{\boI}_{j,y_0}(t_0)| \geq \delta_0$ or $|\tilde{\boI}_{j,-y_0}(t_0) - P(\tilde{v}_j, \tilde{w}_j)| \geq \delta_0$. 

At initial time $t = 0$, we have $\lim_{y_0 \to +\infty} \tilde{\boI}_{j,y_0}(0) = \lim_{y_0 \to + \infty} \tilde{\boI}_{j,-y_0}(0) - P(\tilde{v}_j, \tilde{w}_j) = 0$. Hence, there exists $y_{0} > 0$ such that
\begin{equation}
\label{eq:sam0}
|\tilde{\boI}_{j,y_0}(0)| + |\tilde{\boI}_{j,-y_0}(0) - P(\tilde{v}_j, \tilde{w}_j)| \leq \frac{\delta_0}{4} \quad {\rm and} \quad A_\gc \exp \Big( - \frac{\nu_{\gc} }{16} y_0 \Big) \leq \frac{\delta_0}{32}.
\end{equation}
Now, we prove that the case  $\tilde{\boI}_{j,y_0}(t_0) \geq \delta_0$ cannot hold for this choice of $y_0$. The proof of the other cases can be written in a very similar manner. 

First, we deduce from \eqref{eq:sam0} that
$$\tilde{\boI}_{j,y_0}(t_0) \geq \delta_0 \geq \frac{\delta_0}{4} + \frac{\delta_0}{16} \geq \tilde{\boI}_{j,y_0}(0) + A_\gc \exp \Big( - \frac{\nu_{\gc} }{16} y_0 \Big) ,$$
Using \eqref{eq:monobisI}, we conclude that $t_0 > 0$. Next, from the fact that $ \lim_{y_0 \to + \infty} \tilde{\boI}_{j,-y_0}(t_0) - P(\tilde{v}_j, \tilde{w}_j) = 0$ we can choose $y'_0\geq y_0$ such that
\begin{equation}
\label{eq:sam1}
\big| \tilde{\boI}_{j,-y'_0}(t_0) - P(\tilde{v}_j, \tilde{w}_j) \big| \leq \frac{\delta_0}{4}.
\end{equation}
The choice of $y'_0$ can be done to conserve \eqref{eq:sam0} and to obtain
$$\big| \tilde{\boI}_{j,-y'_0}(t_0) - \tilde{\boI}_{j,y_0}(t_0) - P(\tilde{v}_j, \tilde{w}_j) \big| \geq \frac{3 \delta_0}{4} \quad {\rm and} \quad \big| \tilde{\boI}_{j,-y'_0}(0) - \tilde{\boI}_{j,y_0}(0) - P(\tilde{v}_j, \tilde{w}_j) \big| \leq \frac{\delta_0}{2},$$
and therefore
$$\Big| \big( \tilde{\boI}_{j,-y'_0}(0) - \tilde{\boI}_{j,y_0}(0) \big) - \big( \tilde{\boI}_{j,-y'_0}(t_0) - \tilde{\boI}_{j,y_0}(t_0) \big) \Big| \geq \frac{\delta_0}{4}.$$
Using the fact that the integrands of the expressions between parenthesis are compactly supported in the space, we infer from Proposition \ref{prop:reprod} that there exists an integer $n_0$ such that
$$\Big| \big( \boI_{j,-y'_0}(t_n) - \boI_{j,y_0}(t_n) \big) - \big( \boI_{j,-y'_0}(t_n + t_0) - \boI_{j,y_0}(t_n + t_0) \big) \Big| \geq \frac{\delta_0}{8},$$
for any $n \geq n_0$. Ordering well the terms in the previous inequality, we obtain
\begin{equation}
\label{eq:sam3}
\max \Big\{ \big| \boI_{j,-y'_0}(t_n) - \boI_{j,-y'_0}(t_n + t_0) \big|, \big| \boI_{j,y_0}(t_n) - \boI_{j,y_0}(t_n + t_0) \big| \Big\} \geq \frac{\delta_0}{16}.
\end{equation}
Since $t_0 \geq 0$, by \eqref{eq:monobisI}, and \eqref{eq:sam0}, we deduce 
$$\boI_{j,-y'_0}(t_n) - \boI_{j,-y'_0}(t_n + t_0) \leq \frac{\delta_0}{32} \quad {\rm and} \quad \boI_{j,y_0}(t_n) - \boI_{j,y_0}(t_n + t_0) \leq \frac{\delta_0}{32},$$
and then we infer from \eqref{eq:sam3} that, for any $n \geq n_0$,
$${\rm either} \quad \boI_{j,-y'_0}(t_n + t_0) - \boI_{j,-y'_0}(t_n) \geq \frac{\delta_0}{16}, \quad {\rm or} \quad \boI_{j,y_0}(t_n + t_0) - \boI_{j,y_0}(t_n) \geq \frac{\delta_0}{16}.$$
This leads us to the possibility of choosing an increasing sequence $(n_k)_{k \in \N}$ such that \\ $t_{n_{k + 1}} \geq t_{n_k} + t_0$ for any $k \in \N$, and either
\begin{equation}
\label{eq:sami0}
\boI_{j,y_0}(t_{n_k} + t_0) - \boI_{j,y_0}(t_{n_k}) \geq \frac{\delta_0}{16},
\end{equation}
for any $k \in \N$, or
$$\boI_{j,-y'_0}(t_{n_k} + t_0) - \boI_{j,-y'_0}(t_{n_k}) \geq \frac{\delta_0}{16},$$
for any $k \in \N$. Next, we suppose that \eqref{eq:sami0} holds, the proof of the other case being exactly the same. From the fact that $t_{n_{k + 1}} \geq t_{n_k} + t_0$, we conclude using \eqref{eq:monobisI}, \eqref{eq:sam0} and \eqref{eq:sami0}, that
\begin{equation}
\label{eq:sami1}
\boI_{j,y_0}(t_{n_{k + 1}}) \geq \boI_{j,y_0}(t_{n_k} + t_0) - \frac{\delta_0}{32} \geq \boI_{j,y_0}(t_{n_k}) + \frac{\delta_0}{32},
\end{equation}
for any $k \in \N$. Now, we recall that 
$\boI_{j,y_0}(t_{n_k})$ is bounded by the energy of the initial datum. This yields a contradiction with \eqref{eq:sami1} and finishes the proof. 
\end{proof}
At this stage, the problem reduces to the case of one soliton. The proof of the next statement is exactly the same as the one given by the author in \cite{Bahri} for that case (see also \cite{BetGrSm2} for more details).
\begin{lem}[\cite{Bahri}]
Let $y_0>0$. For any $t \in \R$ we have
\begin{equation}
\label{eq:local-boI_tilde1}
\tilde{\boI}_{j,y_0}(t) \leq A_\gc \exp \Big( - \frac{\nu_{\gc} }{16} y_0 \Big) \quad {\rm and} \quad |P(\tilde{v}_j, \tilde{w}_j)-\tilde{\boI}_{j,-y_0}(t) | \leq A_\gc \exp \Big( - \frac{\nu_{\gc} }{16} y_0 \Big).
\end{equation}
\end{lem}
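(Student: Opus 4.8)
The plan is to upgrade the qualitative momentum localization of Lemma~\ref{lem:loc} to the claimed exponential rate by integrating the monotonicity formula \eqref{eq:deriv-Ij} of Proposition~\ref{prop:evol-momentum} over a half-line of times. Recall (Remark~\ref{rmk}) that \eqref{eq:deriv-Ij} holds for \emph{all} $t\in\R$ for the one-soliton limit profile $(\tilde{v}_j,\tilde{w}_j)$ and for every $\sigma\in[-\delta_\gc,\delta_\gc]$. Two features do the work: the flux term on the right of \eqref{eq:deriv-Ij} is nonnegative, being $\tfrac{\nu_\gc^2}{32}$ times the integral of $\Phi'\ge 0$ against a sum of squares; and the error term decays like $\exp(-\tfrac{\nu_\gc}{16}|y_0+\sigma t|)$, so that, integrated in time over a half-line on which $y_0+\sigma t$ stays nonnegative, it produces exactly a fixed multiple of $\exp(-\tfrac{\nu_\gc}{16}y_0)$.

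\emph{First estimate.} Fix $t\in\R$ and $y_0>0$, and set $\sigma:=\delta_\gc>0$. Apply \eqref{eq:deriv-Ij} to $(\tilde{v}_j,\tilde{w}_j)$ with the weight centred at $\tilde{a}_j(s)+y_0+\sigma(s-t)$ (i.e.\ with $y_0$ replaced by $y_0-\sigma t$ in Proposition~\ref{prop:evol-momentum}) and integrate in the time variable $s$ from $t$ to $+\infty$. Since $y_0+\sigma(s-t)\ge y_0>0$ for $s\ge t$, the error integrates to $\int_t^{+\infty}A_1\exp(-\tfrac{\nu_\gc}{16}(y_0+\sigma(s-t)))\,ds=\tfrac{16A_1}{\nu_\gc\delta_\gc}\exp(-\tfrac{\nu_\gc}{16}y_0)$; the boundary term $\tilde{\boI}_{j,y_0+\sigma(s-t)}(s)$ tends to $0$ as $s\to+\infty$ by the first estimate in \eqref{eq:local-boI_tilde} (its translation parameter goes to $+\infty$), and the flux integral converges since all the other terms are bounded. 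Discarding the nonnegative flux term we obtain
\[
\tilde{\boI}_{j,y_0}(t)\ \le\ A_\gc\,\exp\!\Big(-\tfrac{\nu_\gc}{16}y_0\Big),\qquad A_\gc:=\frac{16\,A_1}{\nu_\gc\,\delta_\gc},
\]
which is the first inequality.

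\emph{Second estimate.} Write $P(\tilde{v}_j,\tilde{w}_j)-\tilde{\boI}_{j,-y_0}(t)=\int_\R\big(1-\Phi(\cdot-\tilde{a}_j(t)+y_0)\big)\tilde{v}_j\tilde{w}_j$, the momentum carried to the left of $\tilde{a}_j(t)-y_0$. The two one-sided bounds follow exactly as above, applying \eqref{eq:deriv-Ij} with the weight centred at $\tilde{a}_j(s)-y_0+\sigma(s-t)$: taking $\sigma:=\delta_\gc$ and integrating in $s$ from $-\infty$ to $t$ gives $P(\tilde{v}_j,\tilde{w}_j)-\tilde{\boI}_{j,-y_0}(t)\le A_\gc\exp(-\tfrac{\nu_\gc}{16}y_0)$ (here $\tilde{\boI}_{j,\cdot}(s)\to P(\tilde{v}_j,\tilde{w}_j)$ as $s\to-\infty$ by the second estimate in \eqref{eq:local-boI_tilde}, while $|-y_0+\sigma(s-t)|=y_0+\delta_\gc(t-s)\ge y_0$ for $s\le t$); taking instead $\sigma:=-\delta_\gc$ and integrating in $s$ from $t$ to $+\infty$ gives the reverse inequality $\tilde{\boI}_{j,-y_0}(t)-P(\tilde{v}_j,\tilde{w}_j)\le A_\gc\exp(-\tfrac{\nu_\gc}{16}y_0)$ (now $|-y_0+\sigma(s-t)|=y_0+\delta_\gc(s-t)\ge y_0$ for $s\ge t$, and $\tilde{\boI}_{j,\cdot}(s)\to P(\tilde{v}_j,\tilde{w}_j)$ as $s\to+\infty$). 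Combining the two yields $|P(\tilde{v}_j,\tilde{w}_j)-\tilde{\boI}_{j,-y_0}(t)|\le A_\gc\exp(-\tfrac{\nu_\gc}{16}y_0)$.

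\emph{Main difficulty.} Given Lemma~\ref{lem:loc} and Proposition~\ref{prop:evol-momentum}, the computation is essentially bookkeeping; the one genuinely delicate point is the coupled choice of the sign of the sliding velocity $\sigma$ and the direction of the time integration, so that the translating weight $\Phi(\cdot-\tilde{a}_j(s)-y_0-\sigma(s-t))$ never sweeps across the soliton. Should it cross, the error term in \eqref{eq:deriv-Ij} would only integrate to an $O(1)$ quantity instead of an $O(\exp(-\tfrac{\nu_\gc}{16}y_0))$ one, and the exponential gain would be lost. These two estimates furnish the exponential localization of the momentum of the limit profile that is then fed into Propositions~\ref{prop:local} and~\ref{prop:smooth}.
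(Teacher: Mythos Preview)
Your argument is correct and is precisely the approach taken in \cite{Bahri} (and \cite{BetGrSm2}) to which the paper defers: one integrates the sliding monotonicity formula \eqref{eq:deriv-Ij} for the limit profile over a time half-line, chooses the sign of $\sigma$ so that $|y_0+\sigma(s-t)|\ge y_0$ throughout, and uses Lemma~\ref{lem:loc} to dispose of the boundary term at $\pm\infty$. Your identification of the delicate point---that $\sigma$ and the direction of integration must be matched so the weight never crosses the soliton---is exactly right.
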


The proof of Proposition \ref{prop:smooth} is then exactly the same as the one of Proposition 2.7 in \cite{Bahri}.

\section{Asymptotic stability between the solitons in the hydrodynamical framework}

\subsection{Proof of \eqref{conv-stab-asymp1}}

Let $\gc^0$ be as in Theorem \ref{thm:stabasympt} and $\gv_0$ be any pair which belongs to the set $\boV(\alpha,L)$ with $\alpha$ and $L$ as in the hypothesis of Theorem \ref{thm:stabasympt}.

Let $j \in \{1,\ldots,N\}$ and $b_j$ satisfying \eqref{cond1:b_j}--\eqref{cond3:b_j}. By \eqref{eq:est-beps-t}, $\varepsilon$ is uniformly bounded in $X(\R)$. Then, there exists $\varepsilon^*_{j,0} \in X(\R)$ such that, up to a subsequence,
\begin{equation}
\label{conv:eps1}
\varepsilon(t_n,.+b_j(t_n)) \rightharpoonup \varepsilon^*_{j,0} \quad \hbox{in} \ \ X(\R) \quad \hbox{as} \quad n\to +\infty.
\end{equation}

We set $\gv^*_{j,0}=(v^*_{j,0},w^*_{j,0}):= \varepsilon^*_{j,0}$ and denote by $\gv^*_j=(v^*_j,w^*_j)$ the unique global solution to \eqref{HLL} corresponding to this initial datum $\gv^*_{j,0}$. We claim that this solution exponentially decays with respect to the space variable for any time, as well as all its space derivatives. More precisely, we have
\begin{prop}
\label{prop:smooth1}
The pair $(v^*_j,w^*_j)$ is indefinitely smooth and exponentially decaying on $\R \times\R$. Moreover, given any $k \in \N$, there exists a positive constant $A_{k, \gc}$, depending only on $k$ and $\gc$, such that
\begin{equation}
\label{eq:smooth1}
\int_\R \big[ (\partial_x^{k + 1} v^*_j)^2 + (\partial^k_x v^*_j)^2 + (\partial_x^k w^*_j)^2 \big](x + \tilde{b}_j(t), t) \exp\big( \frac{\nu_\gc}{16} |x|\big) \, dx \leq A_{k, \gc},
\end{equation}
for any $t\in \R$, where $\tilde{b}_j$ satisfies \eqref{cond1:b_j}--\eqref{cond3:b_j}.
\end{prop}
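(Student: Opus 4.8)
The plan is to mimic the proof of Proposition~\ref{prop:smooth}, adapting the argument to the "between solitons" setting. The main structural difference is that the limit profile $\gv^*_{j,0}$ here is a small solution (with no soliton piece), so in place of the orbital stability and Liouville theorem around a single soliton we will directly exploit the two monotonicity-type estimates on the localized momentum, combined with the weak continuity of the flow, to show that $\gv^*_j$ carries exponentially decaying localized momentum; a standard bootstrap on the energy density and its derivatives then upgrades this to the smoothness and exponential decay claimed in~\eqref{eq:smooth1}.

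First I would set up the modulation-free framework for $\gv^*_j$: since $\|\gv^*_{j,0}\|_{X(\R)}$ is small (by weak lower semi-continuity of the norm applied to~\eqref{conv:eps1} together with~\eqref{eq:est-beps-t}), the solution $\gv^*_j=(v^*_j,w^*_j)$ stays globally small in $X(\R)$ for all $t\in\R$, and in particular $\max_\R |v^*_j|<1$ uniformly, so the hydrodynamical formulation is valid for all time. I would then introduce localized momenta $\boI^{*}_{j,y_0}(t)$ attached to a translation parameter $\tilde b_j$ obeying~\eqref{cond1:b_j}--\eqref{cond3:b_j} — the natural choice being (a limit of) $b_j(t_n+t)-b_j(t_n)$, whose derivative is pinned between $c_{j-1}^\infty$ and $c_j^\infty$ by~\eqref{cond3:b_j} — and establish the analogue of Proposition~\ref{prop:evol-momentum} for $\gv^*_j$. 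This is where Remark~\ref{rmk} is essential: because there is effectively a single "window" and no soliton sitting inside it, the monotonicity formula~\eqref{eq:deriv-Ij}--\eqref{eq:monobisI} holds for all $t\in\R$ (not just $t\ge 0$), and the speed condition on $\tilde b_j$ guarantees the error term $\exp(-\tfrac{\nu_\gc}{16}|y_0+\sigma t|)$ is integrable in $t$.

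Next I would run the contradiction argument of Lemma~\ref{lem:loc}, verbatim in spirit, to show that for every $\delta>0$ there is $y_\delta$ with $|\boI^{*}_{j,y_0}(t)|\le\delta$ and $|\boI^{*}_{j,-y_0}(t)-P(\gv^*_j)|\le\delta$ for all $t\in\R$ and $y_0\ge y_\delta$; here the role of $P(Q_{c_j})$ is played by $P(\gv^*_j)$, which is itself small, and the weak continuity statement~\eqref{sologne1}--\eqref{sologne3} (in its $b_j$-version, which is proved exactly as Proposition~\ref{prop:reprod}) transports the defect in localized momentum back to the original solution $(v,w)$, where boundedness of the energy produces the contradiction. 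Then, exactly as in the second lemma of Section~3, the two-sided monotonicity gives the quantitative exponential bounds
\begin{equation*}
\tilde\boI^{*}_{j,y_0}(t)\le A_\gc \exp\Big(-\frac{\nu_\gc}{16}y_0\Big),\qquad \big|P(\gv^*_j)-\tilde\boI^{*}_{j,-y_0}(t)\big|\le A_\gc\exp\Big(-\frac{\nu_\gc}{16}y_0\Big),
\end{equation*}
for all $t\in\R$ and $y_0>0$, from which, integrating~\eqref{eq:deriv-Ij} against $\Phi'$ over a unit time interval, one extracts the spatially weighted $L^2$ bound on $(\partial_x v^*_j, v^*_j, w^*_j)$ that is the $k=0$ case of~\eqref{eq:smooth1}.

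Finally, I would close by the parabolic-type smoothing / commutator bootstrap used for Proposition~2.7 in~\cite{Bahri}: differentiating~\eqref{HLL}, testing against $\partial_x^k v^*_j$, $\partial_x^k w^*_j$ weighted by $\exp(\tfrac{\nu_\gc}{16}|x|)$ (recentred at $\tilde b_j(t)$), and using~\eqref{eq:min-v}-type lower bounds on $1-(v^*_j)^2$ together with the already-established lower-order weighted bounds, one propagates exponential localization to all derivatives and simultaneously gains smoothness, giving~\eqref{eq:smooth1} for every $k$. The main obstacle I anticipate is purely bookkeeping around the moving frame $\tilde b_j$: unlike $a_j$, it is not a modulation parameter and the "reproduced" profile $\gv^*_{j,0}$ need not satisfy any orthogonality conditions, so one must check that~\eqref{cond3:b_j} alone is enough to make the monotonicity error terms summable and that the weak-continuity argument still pins $\tilde b_j(t_n+t)-\tilde b_j(t_n)\to\tilde b_j(t)$ without an implicit-function-theorem step; everything else is a direct transcription of Section~3.
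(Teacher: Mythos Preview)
Your proposal is correct and follows essentially the same route as the paper: derive a $b_j$-centred monotonicity formula for the localized momentum (the paper's Proposition~\ref{prop:evol-momentum1}), pass it to the limit profile for all $t\in\R$ via weak continuity of the flow, run the Lemma~\ref{lem:loc} contradiction argument, upgrade to exponential bounds, integrate over unit time intervals, and then bootstrap derivatives exactly as in Proposition~2.7 of~\cite{Bahri}. The one point you flag as an obstacle --- pinning down $\tilde b_j(t)$ without an implicit-function-theorem step --- is resolved in the paper simply by taking $b^*_j(t):=\gamma_j t$ with $\gamma_j=\lim_{t\to\infty}b_j'(t)$, so that $b_j(t_n+t)-b_j(t_n)\to\gamma_j t$ follows directly from~\eqref{cond3:b_j}; no modulation or orthogonality is needed.
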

In view of this proposition, we can establish a Liouville type theorem in order to finish the proof of Theorem \ref{thm:stabasympt}. 
\begin{prop}
\label{th:liouville1}
There exists a positive number $\alpha^*$ such that, if $(v,w)$ is a solution of \eqref{HLL} satisfying \eqref{eq:smooth1} and 
$$\|(v_{0},w_{0})\|_{X(\R)} \leq \alpha^* ,$$ 
then, 
$$(v,w)(t,x)=0 \quad \forall (t,x) \in \R\times \R .$$
\end{prop}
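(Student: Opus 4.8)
The plan is to mimic the Liouville-type argument already established in \cite{Bahri} (Theorem \ref{th:liouville}) for a single soliton, but now around the zero solution, which is actually simpler: the linearized operator has no degeneracy since $0$ is a non-degenerate critical point of the energy (there is no translation or phase invariance direction to quotient out, and no negative eigenvalue once we are close enough to $0$). First I would observe that, by Proposition \ref{prop:smooth1}, the solution $(v,w)$ and all its spatial derivatives decay exponentially uniformly in time, with the weight $\exp(\nu_\gc |x|/16)$ centred at $\tilde b_j(t)$; combined with $\|(v_0,w_0)\|_{X(\R)}\le\alpha^*$ and the conservation/almost-monotonicity estimates, this gives smallness in every $H^k$ norm uniformly in $t\in\R$. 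The strategy is then: define a suitable virial/energy-type functional, show it is monotone in $t$ using the \eqref{HLL} structure together with the exponential localization, and conclude that the localized energy must be constant, forcing $(v,w)\equiv 0$.

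The key steps, in order, would be: (1) Reduce to a convenient functional. Since $0$ is the trivial solution and $\mathcal{E}(0)=0$, $P(0)=0$, I would localize the momentum exactly as in \eqref{def:boIj}, but now centred at $\tilde b_j(t)$ rather than a soliton centre, i.e. consider $\mathcal{J}(t):=\int_\R \Phi\big(x-\tilde b_j(t)\big)\,[vw](x,t)\,dx$ (and its mirror image with $\Phi(-\cdot)$). (2) Prove a monotonicity formula: differentiating in $t$ and using the equations \eqref{HLL}, one gets, as in the proof of Proposition \ref{prop:evol-momentum}, a leading term $\tfrac{\nu_\gc^2}{32}\int_\R\big[(\partial_x v)^2+v^2+w^2\big]\Phi'(\cdot-\tilde b_j(t))\,dx$ up to errors which are now controlled purely by the exponential localization of $(v,w)$ itself (there is no soliton tail to worry about, and the coercivity thresholds involving $\mu_\gc,\nu_\gc$ are available from \eqref{cond3:b_j} which guarantees $\tilde b_j'(t)$ stays strictly inside $(-1,1)$ away from the relevant speeds). (3) Integrate the monotonicity over $\R$: since $\mathcal{J}$ is bounded (by the total energy, which is finite and in fact $O((\alpha^*)^2)$), and $\frac{d}{dt}\mathcal{J}\ge c\int(\partial_x v)^2+v^2+w^2$ weighted by $\Phi'$, letting the integration interval go to $\pm\infty$ forces $\int_\R\int_\R\big[(\partial_x v)^2+v^2+w^2\big]\Phi'(\cdot-\tilde b_j(t))\,dx\,dt=0$. (4) Conclude: this and the analogous estimate with the reflected weight give $(v,w)(\cdot,t)\equiv 0$ on a fixed neighbourhood for a.e.\ $t$, hence for all $t$ by continuity; then propagate to all of $\R$ using either finite-speed-type arguments through the exponential localization or uniqueness for \eqref{HLL} backwards and forwards in time from a time slice where the solution vanishes on an interval, upgrading to $(v,w)\equiv 0$ everywhere.

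The main obstacle I expect is step (2)–(3): making the monotonicity genuinely \emph{global in time} (i.e. controlling the error terms $A_\gc\exp(-\tfrac{\nu_\gc}{16}|\tilde b_j(t)|)$ and the cross term $-(\tilde b_j'(t))vw$ so that the net sign is strictly positive for all $t\in\R$, not just $t\ge 0$). Unlike in Proposition \ref{prop:evol-momentum}, here one needs the estimate on the whole line, which is why Proposition \ref{prop:smooth1} is crucial — it replaces the orbital-stability input by direct exponential decay of $(v,w)$. One must check that $\tilde b_j'(t)$ stays in a compact subset of $(c_{j-1}^\infty,c_j^\infty)\subset(-1,1)$ so that the quadratic form $(\partial_x v)^2+v^2+w^2-|\tilde b_j'(t)|\,|v|\,|w|-3v^2w^2+\tfrac{\nu_\gc^2}{64}\ln(1-v^2)$ is coercive on the region where $|v|$ is small, which is everywhere thanks to the uniform smallness from $\alpha^*$. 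A secondary subtlety is the final propagation step (4): one should argue that a solution of \eqref{HLL} that is smooth, exponentially localized, and vanishes on an open space-time set must vanish identically — this follows from the structure of \eqref{HLL} and unique continuation, and is exactly the kind of argument carried out in \cite{Bahri} for the soliton case, so I would simply invoke it.
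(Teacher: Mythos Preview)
Your overall strategy---a virial/monotonicity argument exploiting the exponential localization \eqref{eq:smooth1}---is the right idea, but the paper executes it differently and your version has a genuine gap in step (3).

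The paper does \emph{not} use the bounded cutoff $\Phi$ here. Instead it takes the \emph{unbounded} weight $\mu(x)=x$ and sets $U(t)=\int_\R x\,[vw](x,t)\,dx$, which is well-defined precisely because of \eqref{eq:smooth1}. Differentiating and integrating by parts (no cutoff, so no boundary-type errors) gives directly
\[
U'(t)=\int_\R\Big[\tfrac32(\partial_x v)^2+\tfrac12 v^2+\tfrac12 w^2\Big]+O(\alpha)\|\gv(t)\|_{X(\R)}^2\ \ge\ \tfrac14\|\gv(t)\|_{X(\R)}^2,
\]
i.e.\ the \emph{full} $X(\R)$ norm appears, not a $\Phi'$-weighted one. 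Since $U$ is bounded (again by \eqref{eq:smooth1}), this yields $\int_\R\|\gv(t)\|_{X(\R)}^2\,dt<\infty$, hence a sequence $s_n\to+\infty$ with $\|\gv(\pm s_n)\|_{X(\R)}\to 0$. The exponential decay then forces $U(\pm s_n)\to 0$, and integrating $U'\ge\tfrac14\|\gv\|^2$ from $-s_n$ to $s_n$ gives $\int_\R\|\gv(t)\|^2\,dt=0$, hence $\gv\equiv 0$. No unique continuation is needed.

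Your step (3) claims that boundedness of $\mathcal J$ together with $\frac{d}{dt}\mathcal J\ge c\int[\cdots]\Phi'\,dx$ ``forces'' the double integral to vanish. It does not: a bounded monotone function has finite, not zero, total variation (think $\arctan$). With your bounded weight you only obtain $\int_\R\int_\R[\cdots]\Phi'\,dx\,dt<\infty$, and you would still need an $s_n$-type argument to show $\mathcal J(+\infty)=\mathcal J(-\infty)$ before concluding. Since $\Phi'>0$ everywhere, once the double integral is genuinely zero you get $(v,w)\equiv 0$ on all of $\R\times\R$ immediately, so your step (4) is both unnecessary and based on a false premise (there is no ``fixed neighbourhood'' issue). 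Finally, the error terms you worry about in step (2), of the form $A_\gc\exp(-\nu_\gc|\tilde b_j(t)|/16)$, do not arise here: those errors in Propositions \ref{prop:evol-momentum} and \ref{prop:evol-momentum1} come from soliton cores where $|v|$ is large, and in the present setting $\|v(t)\|_{L^\infty}\le A_\gc\alpha$ is uniformly small, so the coercivity holds on all of $\R$ without any excised region.
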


This result concludes the proof of Theorem \ref{thm:stabasympt} since $\varepsilon^*_{j,0} \equiv 0$ for any sequence $(t_n)_{n\in \N}$. Indeed, if we suppose that there exists a sequence of time $(s_n)$ such that $\varepsilon\*_{j,0}\neq 0,$ then, in view of the previous analysis, we get a contradiction from Proposition \ref{th:liouville1}.

Now, we will show that \eqref{conv:asymstab1} holds also when $b_j$ is an arbitrary map satisfying \eqref{cond1:b_j} and \eqref{cond3inf:b_j} instead of \eqref{cond3:b_j}.
\begin{proof}
Let $(t_n)$ be a sequence of time such that $t_n \to \infty$ as $n\to \infty$. It follows from \eqref{cond3inf:b_j}, up to a subsequence, $\frac{b_j(t_n)}{t_n}$ has a limit $l_j$ as $n\to \infty$ and $c_{j-1}^\infty < l_j < c_{j}^\infty.$
Next, we take $\tilde{b}_j$ a smooth extension of $b_j$ such that $\tilde{b}_j(t_n)=b_j(t_n)$ for all $n \in \N.$ More precisely, $\tilde{b}_j \in \boC^1(\R_+,\R)$ verifies \eqref{cond1:b_j},  and, from \eqref{cond3inf:b_j}, we have
$$\lim_{t\to \infty} \tilde{b}'_j(t)= \lim_{n\to \infty} \frac{\tilde{b}_j(t_n)}{t_n}= l_j.$$
Hence, $\tilde{b}_j$ satisfies \eqref{cond3:b_j}. Then, by \eqref{conv-stab-asymp1}, we obtain
$$(v,w)(t_n,\cdot+\tilde{b} _j(t_n)) \rightharpoonup 0 \quad {\rm in} \ X(\R),$$
as $n \to \infty.$ This leads to
$$(v,w)(t_n,\cdot+b _j(t_n)) \rightharpoonup 0 \quad {\rm in} \ X(\R),$$
as $n \to \infty.$ This finishes the proof since this convergence holds for any sequence $(t_n)$ such that $t_n \to +\infty$ as $n\to +\infty$.
\end{proof}
In the next two subsections we begin by proving Proposition \ref{th:liouville1} and then we give the proof of Proposition \ref{prop:smooth1}.
\subsection{Proof of the Liouville type theorem}
First, we verify that our limit solution has a small norm. This is a direct consequence of the conservation of the energy, \eqref{conv:eps1}, Theorem \ref{thm:mult-stab} and equivalence between the energy and the norm of $X(\R)$. More precisely, we have
$$ \|(v^*_{j,0},w^*_{j,0})\|_{X(\R)} \leq  \liminf_{n\to \infty} \|\varepsilon(t_n)\|_{X(\R)} \leq A_\gc \alpha,$$
and then, 
$$ \|(v^*_{j},w^*_{j})(t)\|_{X(\R)} \leq A_\gc \boE\big(v^*_{j},w^*_{j}\big)(t)=A_\gc \boE\big(v^*_{j,0},w^*_{j,0}\big)  \leq A_\gc \|(v^*_{j,0},w^*_{j,0})\|_{X(\R)} \leq A_\gc \alpha,$$
for all $t\in ]T_-,T_+[$, where $]T_-,T_+[$ denotes the maximal interval of existence for the solution $(v^*_{j},w^*_{j})$. We derive from this inequality the existence of a number $0<\delta<1$ such that 
$$\|v^*_j(t)\|_{L^\infty}\leq \delta <1,$$
for all $t\in ]T_-,T_+[.$ It then follows from the result in \cite{DeLGr} that the solution $(v^*_{j},w^*_{j})$ is actually global, and that it satisfies
\begin{equation}
\label{norm:small}
\|(v^*_{j},w^*_{j})(t)\|_{X(\R)} \leq A_\gc \boE\big(v^*_{j},w^*_{j}\big)(t) \leq A_\gc \alpha.
\end{equation}
for all $t \in \R$.

Next, we linearise \eqref{HLL} around zero. Let $\gv:=(v,w)$ be a solution of \eqref{HLL} verifying \eqref{norm:small}. We obtain
\begin{equation}
\label{eq:linearized}
\partial_t \gv = J L \gv + J B \gv ,
\end{equation}
where we have denoted
\begin{equation}
\label{def:J}
J =   S \partial_x := \begin{pmatrix} 0 &   \partial_x \\  \partial_x & 0 \end{pmatrix},
\end{equation}
$$ L \gv := \begin{pmatrix} - v + \partial_{xx} v \\ - w \end{pmatrix},$$
and
$$ B \gv := \begin{pmatrix} \frac{( \partial_{xx} v ) v^2}{1 - v^2} + \frac{(\partial_x v)^2v}{(1 - v^2)^2} + v w^2 \\ v^2 w \end{pmatrix}.$$
Now, we consider the following quantity
$$U(t):=\int_\R x [v_j^*w_j^*](t,x) dx,$$
for any $t \in \R$. Since $(v^*_j,w^*_j)$ is a smooth solution of \eqref{HLL} which satisfies \eqref{eq:smooth1}, the map $U$ is of class $\boC^1$ and it is possible to differentiate the integrand with respect to the time variable. Hence, we deduce from \eqref{eq:linearized} and an integration by parts that
\begin{equation}
\label{deriv:U}
U'(t) = - \big< L \gv_j^*(t), \gv_j^*(t) \big> _{L^2(\R)} - \big< L \gv_j^*(t), \mu \partial_x \gv_j^*(t) \big> _{L^2(\R)} + \big< \mu \partial_x B \gv_j^*,  \gv_j^* \big> _{L^2(\R)},
\end{equation}
where $\mu(x)=x$ for all $x \in \R$.
For the linear terms, we integrate by parts to write
\begin{equation}
\label{part:lin}
-\big< L \gv_j^*(t), \gv_j^*(t) \big> _{L^2(\R)} - \big< L \gv_j^*(t), \mu \partial_x \gv_j^*(t) \big> = \int_\R \big[\frac{3}{2} (\partial_x v_j^*(t))^2 + \frac{1}{2} (v_j^*(t))^2 + \frac{1}{2} (w_j^*(t))^2 \big].
\end{equation}
For the other term, we use the Cauchy-Schwarz inequality, the Sobolev embedding theorem, \eqref{eq:smooth1} and \eqref{norm:small} to infer that
\begin{equation}
\label{part:nonlin}
\big| \big< \mu \partial_x B \gv_j^*,  \gv_j^* \big> _{L^2(\R)} \big| \leq A_\gc \alpha \|\gv_j^*\|^2_{X(\R)}.
\end{equation}
Indeed, let us estimate two terms of the right hand side. The other ones can be estimated in a very similar way. Performing integrations by parts, and using the Cauchy-Schwarz inequality and \eqref{eq:max-v}, we can write
\begin{align*}
\Big| \int_\R x \partial_x \big((v^*_j(t,x))^2 w^*_j(t,x) \big) w^*_j(t,x) dx \Big| \leq & \|\mu \partial_x w^*_j(t)\|_{L^\infty} \| v^*_j(t)\|_{L^\infty} \| v^*_j(t)\|_{L^2} \| w^*_j(t)\|_{L^2}\\
& +  \| v^*_j(t)\|^2_{L^\infty} \| w^*_j(t)\|^2_{L^2}, 
\end{align*}
and 
\begin{align*}
\Big| \int_\R x \partial_x \Big(\frac{( \partial_{xx} v^*_j(t,x) ) (v^*_j)^2(t,x)}{1 - (v^*_j)^2(t,x)} \Big) v^*_j(t,x) dx \Big| \leq & A_\gc \|\mu \partial_{xx} v^*_j(t)\|_{L^\infty} \|\partial_{x} v^*_j(t)\|_{L^2} \| v^*_j(t)\|_{L^2}\| v^*_j(t)\|_{L^\infty} \\
& + A_\gc \| \partial_{xx} v^*_j(t)\|_{L^\infty} \| v^*_j(t)\|_{L^\infty} \| v^*_j(t)\|^2_{L^2}.
\end{align*}
Then, by the Sobolev embedding theorem, \ref{eq:smooth1} and \eqref{norm:small}, we obtain
$$\Big| \int_\R x \partial_x \big((v^*_j(t,x))^2 w^*_j(t,x) \big) w^*_j(t,x) dx \Big| \leq A_\gc \alpha \|\gv_j^*(t)\|^2_{X(\R)},$$
and
$$\Big| \int_\R x \partial_x \Big(\frac{( \partial_{xx} v^*_j(t,x) ) (v^*_j)^2(t,x)}{1 - (v^*_j)^2(t,x)} \Big) v^*_j(t,x) dx \Big| \leq A_\gc \alpha \|\gv_j^*(t)\|^2_{X(\R)}.$$

Now, we introduce \eqref{part:lin} and \eqref{part:nonlin} into \eqref{deriv:U} and we choose $\alpha$ small enough to claim that
\begin{equation}
\label{bnd:norm}
U'(t) \geq \frac{1}{4} \|\gv_j^*(t)\|^2_{X(\R)}.
\end{equation}

Since $U$ is uniformly bounded on $\R$, we infer that the map $t\mapsto \|\gv_j^*(t)\|_{X(\R)}$ belongs to $L^2(\R).$ This yields the existence of a sequence of positive times $(s_n)_{n \in \N}$, which goes to $+\infty$ as $n \to +\infty$, such that we have
\begin{equation}
\label{limit:norm}
\lim_{n \to \infty} \|\gv_j^*(\pm s_n)\|_{X(\R)}=0.
\end{equation}
In view of \eqref{eq:smooth1}, this gives 
$$\lim_{n \to \infty} U(\pm s_n) = 0 .$$
Integrating \eqref{bnd:norm} from $-s_n$ to $s_n$ and taking the limit $n \to +\infty$, we deduce that
$$\int_\R \|\gv_j^*(t)\|^2_{X(\R)} dt =0.$$
Hence, $$\gv_j^*\equiv 0 \quad {\rm on} \quad \R \times \R .$$
This finishes the proof of Theorem \ref{th:liouville1}.  \qed  
\subsection{Proof of Proposition \ref{prop:smooth1}}
In this section, we prove the exponential decay of the limit solution $\gv^*_j$. First, we state the monotonicity of the momentum.
Let $(v, w)$ be a pair given by Theorem \ref{thm:mult-stab}, $j\in \{1,...,N\}$ and $y_0 \in \R$. Denote
$$\boI_{j,y_0}(t):=\int_\R \Phi(x-(b_j(t)+y_0)) [vw](x,t) \, dx,$$
for $b_j$ satisfying \eqref{cond1:b_j} and \eqref{cond3:b_j} and set
$$\lambda_{\gc,\gamma} := \frac{1}{2} \min \big\{ 1+\gamma_1, \gamma_2-c_1, c_2-\gamma_2, \ldots ,\gamma_{N+1}-c_{N}, 1-\gamma_{N+1}\big\}$$
for any $\gc \in (- 1, 1)^N$, where $\gamma:= (\gamma_1,\ldots,\gamma_{N+1}):=\lim_{t\to \infty}\big(b_1'(t),\ldots,b'_{N+1}(t)\big)$. We claim the following monotonicity formula for this localized version of the momentum.
 
\begin{prop}
\label{prop:evol-momentum1}
There exist positive numbers $\alpha_2 \leq \alpha$, $L_2 \geq L^*$, $T>0$ and $A_2,A^*_2>0$, depending only on $\gc$ and $\gs$, such that, if $\alpha_0 \leq \alpha_2$ and $L \geq L_2$, then the map $\boI_{j,y_0}$ is of class $\boC^1$ on $\R$, and it satisfies
\begin{equation}
\label{eq:deriv-Ij1}
\begin{split}
\frac{d}{dt}\big[\boI_{j,y_0}(t)\big] \geq & \frac{\nu_\gc^2}{32} \int_\R \big[ (\partial_x v )^2 + v^2 + w^2 \big](x , t) \Phi'(x-(b_j(t)+y_0)) \, dx\\
&  - A_2 \exp \Big( - \frac{\nu_{\gc}}{16} ( |y_0 + \lambda_{\gc,\gamma} t | \Big),
\end{split}
\end{equation}
for any $1 \leq j \leq N$ and any $t \geq T$. In particular, we have
\begin{equation}
\label{eq:monobisI1}	
\boI_{j,y_0}(t_1) \geq \boI_{j,y_0}(t_0) - A^*_2 \exp \Big( - \frac{\nu_{\gc} }{16}|y_0| \Big),
\end{equation}
for any real numbers $t_1 \geq t_0 \geq T$.
\end{prop}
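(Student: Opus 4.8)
The plan is to follow closely the proof of Proposition~\ref{prop:evol-momentum}, the only genuinely new ingredient being that the weight is now centred at $b_j(t)+y_0$, which by \eqref{cond3:b_j} drifts with asymptotic velocity $\gamma_j\in(c_{j-1}^{+\infty},c_j^{+\infty})$ and therefore stays at distance of order $\lambda_{\gc,\gamma}\,t$ from \emph{every} soliton $a_k(t)$ once $t$ is large. First I would differentiate $\boI_{j,y_0}$ using \eqref{HLL} and an integration by parts, exactly as in the derivation of \eqref{cronos} ($\boC^1$-regularity being justified as for $\boI_j$ there), to get
\begin{align*}
\frac{d}{dt}\big[\boI_{j,y_0}(t)\big]
&=\frac12\int_\R \Phi'\big(\cdot-(b_j(t)+y_0)\big)\Big(v^2+w^2-b_j'(t)\,vw-3v^2w^2+\frac{3-v^2}{(1-v^2)^2}(\partial_xv)^2\Big)\\
&\quad+\frac12\int_\R \Phi'''\big(\cdot-(b_j(t)+y_0)\big)\,\ln\big(1-v^2\big).
\end{align*}

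Then I would split $\R=R(t)\cup\big(\R\setminus R(t)\big)$ with $R(t):=\bigcup_{k=1}^N[a_k(t)-\tfrac{L-1}{4},a_k(t)+\tfrac{L-1}{4}]$ the union of the soliton regions. On $\R\setminus R(t)$ every point lies at distance $\ge\tfrac{L-1}{4}$ from all the $a_k(t)$, so by \eqref{eq:est-beps-t}, the Sobolev embedding, the exponential decay of the solitons and \eqref{eq:est-coef-t} one gets $v^2\le\min\{\tfrac12,\tfrac{\nu_\gc^2}{96}\}$ there, just as in \eqref{ocean}; using then $|b_j'(t)|<1$ in place of $|a_j'(t)+\sigma|<1$, together with \eqref{rhea2} and $\ln(1-s)\ge-2s$, the integrand is bounded below by a fixed multiple of $\nu_\gc^2\big((\partial_xv)^2+v^2+w^2\big)$, which produces the main term $\tfrac{\nu_\gc^2}{32}\int_\R\Phi'\big((\partial_xv)^2+v^2+w^2\big)$ of \eqref{eq:deriv-Ij1} (the part of this positive integral carried by $R(t)$ being reabsorbed into the error term, using that $\Phi'$ is exponentially small on $R(t)$ and $\int_\R((\partial_xv)^2+v^2+w^2)\le A_\gc E(\gv)$).

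The contribution of $R(t)$ is handled as for \eqref{ouranos}. On each $R_k(t)$ the solution is, by \eqref{eq:est-beps-t}, the exponential decay of the other solitons and \eqref{eq:est-coef-t}, $O\big(\alpha_0+e^{-\nu_\gc L/16}\big)$-close in $X(\R)$ to $Q_{c_k(t)}$, so $1-v^2\ge\tfrac18\mu_\gc^2$ there (cf.\ \eqref{eq:min-v}) and the two integrands above are pointwise dominated by $A_\gc\big(1+(\partial_xv)^2+w^2\big)$, whose integral over $R(t)$ is $\le A_\gc\big(L+E(\gv)\big)$. It then remains to bound $\Phi'$ and $\Phi'''$ on $R(t)$: combining \eqref{cond1:b_j}, \eqref{cond3:b_j}, the ordering $c_1^{+\infty}<\cdots<c_N^{+\infty}$, the estimate \eqref{est:a'} on the $a_k'$ and the definition of $\lambda_{\gc,\gamma}$, one checks that there are $\alpha_2,L_2$ and $T$ depending only on $\gc,\gs$ such that, for $t\ge T$ and every $k$, the sign of $a_k(t)-b_j(t)$ is fixed and $|a_k(t)-b_j(t)|\ge2\lambda_{\gc,\gamma}t-C_\gc$; distinguishing these signs gives $|x-(b_j(t)+y_0)|\ge|y_0+\lambda_{\gc,\gamma}t|-\tfrac{L-1}{4}-C_\gc$ on $R(t)$, hence by \eqref{rhea2} that $\Phi'+|\Phi'''|\le A_\gc\exp\!\big(-\tfrac{\nu_\gc}{16}|y_0+\lambda_{\gc,\gamma}t|\big)$ there. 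Multiplying the two bounds yields the error $-A_2\exp\!\big(-\tfrac{\nu_\gc}{16}|y_0+\lambda_{\gc,\gamma}t|\big)$ in \eqref{eq:deriv-Ij1}. Finally \eqref{eq:monobisI1} follows by integrating \eqref{eq:deriv-Ij1} on $[t_0,t_1]\subset[T,\infty)$, discarding the nonnegative main term and bounding $A_2\int_{t_0}^{t_1}\exp(-\tfrac{\nu_\gc}{16}|y_0+\lambda_{\gc,\gamma}t|)\,dt$ by $A_2^*\exp(-\tfrac{\nu_\gc}{16}|y_0|)$, exactly as in the concluding paragraph of the proof of Proposition~\ref{prop:evol-momentum} (splitting $[t_0,t_1]$ at its midpoint, and shifting the centre if necessary so that the effective argument of the exponential stays $\ge|y_0|$).

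I expect the main obstacle to be the geometric estimate of the third paragraph: showing, with thresholds $\alpha_2,L_2,T$ depending only on $\gc$ and $\gs$, that $b_j(t)+y_0$ is at distance of order $\lambda_{\gc,\gamma}t$ from all the solitons for $t\ge T$. This forces one to choose $T$ large enough that $|b_j'(t)-\gamma_j|$ is already small (\eqref{cond3:b_j} only provides a limit) while keeping the orbital-stability bounds \eqref{est:a'} in force, and then to carry out the sign bookkeeping that turns the linear-in-$t$ separation into the precise exponential weight of \eqref{eq:deriv-Ij1}; once that is done, the rest is a transcription of the proof of Proposition~\ref{prop:evol-momentum} with $a_j(t)+\sigma t$ replaced by $b_j(t)$.
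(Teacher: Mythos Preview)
Your overall strategy is correct and close in spirit to the paper's, but the decomposition you choose is dual to the one the paper actually uses. You split $\R = R(t) \cup (\R \setminus R(t))$ with $R(t)$ the union of the $N$ soliton windows $[a_k(t) - \tfrac{L-1}{4}, a_k(t) + \tfrac{L-1}{4}]$, declare the complement ``good'' (where $v$ is small) and then try to bound $\Phi'$ on each $R_k(t)$ via the separation $|a_k(t) - b_j(t)| \gtrsim \lambda_{\gc,\gamma} t$, which forces the sign bookkeeping over $k$ that you correctly flag as the main obstacle. The paper instead takes a \emph{single} growing interval centred at the weight,
\[
I_j(t) = \Big[\, b_j(t) - \tfrac{1}{4}\big(L + \lambda_{\gc,\gamma} t\big),\ b_j(t) + \tfrac{1}{4}\big(L + \lambda_{\gc,\gamma} t\big) \,\Big],
\]
shows from \eqref{cond3:b_j} and \eqref{est:a'} that $|a_k(t) - b_j(t)| \geq \tfrac{1}{2}(L + \lambda_{\gc,\gamma} t)$ for every $k$ and $t \geq T$, so all solitons lie outside $I_j(t)$; the positivity argument then runs on $I_j(t)$ (where $v$ is uniformly small), while the complement is controlled directly by the decay of $\Phi'$ away from its centre. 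This sidesteps your obstacle entirely: there is no case analysis over $k$, only the single radius $\tfrac{1}{4}(L + \lambda_{\gc,\gamma} t)$.

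One point in your sketch is too weak as written: the bound $|b_j'(t)| < 1$ does not suffice to extract the coefficient $\tfrac{\nu_\gc^2}{32}$ in front of the main positive term. The paper observes that for $t \geq T$ large enough \eqref{cond3:b_j} gives $c_{j-1}^\infty < b_j'(t) < c_j^\infty$, whence $b_j'(t)^2 \leq 1 - \tfrac{\nu_\gc^2}{4}$; it is this sharper bound that produces the cross term $-2(1 - \tfrac{\nu_\gc^2}{4})^{1/2}|v||w|$ exactly as in \eqref{hyperion}. With that correction your route also goes through, though the paper's single-interval decomposition is the cleaner one.
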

The proof is very similar to the one of Proposition 5 in \cite{DeLGr}. We will only sketch it.
\begin{proof}
As in the proof of Proposition \ref{prop:evol-momentum}, we write
$$\boI_{j,y_0}'(t) = \boI_1(t) + \boI_2(t),$$
decomposing the real line into the region $I_j(t)$ and its complementary set, where $I_j(t)$ is the interval defined by
$$I_j(t) = \Big[ b_j(t) - \frac{1}{4} \big( L + \lambda_{\gc,\gamma} t \big), b_j(t) + \frac{1}{4} \big( L + \lambda_{\gc,\gamma} t \big) \Big].$$ 
For $\boI_2$, we have (see the proof of Proposition \ref{prop:evol-momentum} for more details)
$$\big| \boI_2(t) \big| \leq A^* \exp \Big( - \frac{1}{32} \big( L + \lambda_{\gc,\gamma} t \big) \Big).$$
For $\boI_1(t)$, we first infer from \eqref{cond1:b_j} that there exists $T>0$ sufficiently large such that for all $t \geq T$,
$$ c_{j-1}^\infty < b'_j(t) < c_{j}^\infty,$$
and then
$$ b'_j(t)^2 \leq 1 - \frac{\nu_{\gc}^2}{4}.$$
This leads, using \eqref{eq:min-v} and \eqref{rhea2}, to
$$\boI_1(t) \geq \frac{1}{2} \int_{I_j(t)} \Phi'\big( \cdot - (b_j(t)+y_0)\big) \, \Big( (\partial_x v)^2 + v^2 + w^2 -  2 \Big( 1 - \frac{\nu_{\gc}^2}{4} \Big)^\frac{1}{2} |v| |w| - 3 v^2 w^2 + \frac{\nu_{\gc^*}^2}{64} \ln \big( 1 - v^2 \big) \Big).$$
Now, increasing the value of $T>0$ if necessary,  we infer from \eqref{cond3:b_j} that
$$\big| a_k(t)- b_j(t)| \geq \frac{1}{2} \big( L  + \lambda_{\gc,\gamma} t \big),$$
for any $t\geq T$ and $1 \leq k \leq N$. 
When $x \in I_j(t)$, we have
$$\big| x - a_k(t) \big| \geq \Big| a_k(t) - b_j(t) \Big| - \frac{1}{4} \big( L + \lambda_{\gc,\gamma} t \big) \geq \frac{1}{4} \big( L + \lambda_{\gc,\gamma} t \big),$$
for any $1 \leq k \leq N$. This yields, using \eqref{def:beps-t}, \eqref{eq:est-beps-t} (and the Sobolev embedding theorem), \eqref{eq:est-coef-t} and the exponential decay of the solitons,
$$\big| v(x, t) \big| \leq \big| \eps_1(x, t) \big| + \sum_{k = 1}^N \big| v_{c_k(t)}(x - a_k(t)) \big| \leq A^* \Big( \alpha + \exp \Big( - \frac{\nu_{\gc^*}}{16} \big( L + \lambda_{\gc,\gamma} t \big) \Big) \Big),$$
for any $x \in I_j(t)$. We now decrease $\alpha$ and increase $L$, if necessary, to guarantee that $|v|$ is sufficiently small on the interval $I_j(t)$. Then we can finish the proof as the one of Proposition \ref{prop:evol-momentum1}.
\end{proof}
\begin{remark}
\label{rmk1}
In view of the proof below, the limit solution $(v^*_j,w^*_j)$ satisfies the conclusions of Proposition \ref{prop:evol-momentum1} for any time $t \in \R$, .
\end{remark}
The following claim contains the weak continuity of the flow and the convergence of the parameter $b_j$.
\begin{claim}
\label{claim:reprod}
Let $j \in \{ 1, \ldots ,N \}$ and $t \in \R$ be fixed. Then, there exist a map $b^*_j \in C^1(\R,\R)$ verifying   \eqref{cond1:b_j}-\eqref{cond3:b_j} such that
\begin{equation}
\label{w-conv1}
\big( v(\cdot + b_j(t_n), t_n + t), w(\cdot + b_j(t_n), t_n + t) \big) \rightharpoonup \big( v^*_j(\cdot, t), w^*_{j}(\cdot, t) \big)
\end{equation}
and
\begin{equation}
\label{w-conv2}
\big( v(\cdot + b_j(t_n+t), t_n + t), w(\cdot + b_j(t_n+t), t_n + t) \big) \rightharpoonup \big( v^*_j(\cdot + b^*_j(t) , t), w^*_{j}(\cdot + b^*_j(t), t) \big)
\end{equation}
in $X(\R)$, while
\begin{equation}
\label{conv:b_j}
b_j(t_n + t) - b_j(t_n) \to b^*_j(t),
\end{equation}
as $n \to + \infty$.
\end{claim}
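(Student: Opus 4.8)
The plan is to mimic the proof of Proposition~\ref{prop:reprod}, the only structural difference being that the shift $b_j$ is prescribed rather than produced by modulation, so that the limiting translation can be read off directly from $b_j$ instead of identified through orthogonality conditions. As a preliminary step I would dispose of \eqref{conv:b_j}. By \eqref{cond3:b_j} the limit $\gamma_j := \lim_{t\to+\infty} b_j'(t)$ exists and satisfies $c_{j-1}^\infty < \gamma_j < c_j^\infty$, and $b_j'$ is bounded on $\R_+$ (continuous on compacts, convergent at infinity). Fixing $t\in\R$ and taking $n$ large enough that $t_n+t\geq 0$, dominated convergence applied to
$$b_j(t_n+t) - b_j(t_n) = \int_0^t b_j'(t_n+s)\,ds$$
gives $b_j(t_n+t)-b_j(t_n)\to \gamma_j t$. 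One then takes $b_j^*$ to be the affine map $t\mapsto \gamma_j t$ (or a $\boC^1$ modification near $t=0$ so as to respect \eqref{cond1:b_j}); having asymptotic slope $\gamma_j\in(c_{j-1}^\infty,c_j^\infty)$, it satisfies \eqref{cond3:b_j}, and the displayed computation is precisely \eqref{conv:b_j}.

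For \eqref{w-conv1}, I would decompose $\gv(\cdot,t_n) = S_{\gc(t_n),\ga(t_n),\gs^*} + \varepsilon(\cdot,t_n)$ and translate by $b_j(t_n)$:
$$\gv(\cdot+b_j(t_n),t_n) = \sum_{k=1}^{N} s_k^*\,Q_{c_k(t_n)}\big(\cdot+b_j(t_n)-a_k(t_n)\big) + \varepsilon(\cdot+b_j(t_n),t_n).$$
The last term tends weakly to $\varepsilon^*_{j,0}=\gv^*_{j,0}$ in $X(\R)$ by \eqref{conv:eps1}. For the soliton sum, \eqref{cond1:b_j}, \eqref{cond3:b_j} and \eqref{conv-parameters} give $a_k(t)/t\to c_k^{+\infty}$ and $b_j(t)/t\to\gamma_j$ with $\gamma_j$ lying strictly between the consecutive, pairwise distinct asymptotic speeds, so that $|b_j(t_n)-a_k(t_n)|\to+\infty$ as $n\to+\infty$ for every $k$; since $c_k(t_n)$ stays in a compact subset of $(-1,1)\setminus\{0\}$ and $v_c,w_c\to 0$ at $\pm\infty$, each translated profile converges weakly to $0$ in $X(\R)$. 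Hence $\gv(\cdot+b_j(t_n),t_n)\rightharpoonup\gv^*_{j,0}$ in $X(\R)$, and since $t\mapsto\gv(\cdot+b_j(t_n),t_n+t)$ solves \eqref{HLL} with that datum (by translation invariance) while $\gv^*_j$ solves \eqref{HLL} with datum $\gv^*_{j,0}$, with the uniform bound $\|\gv(\cdot+b_j(t_n),t_n+t)\|_{X(\R)}=\|\gv(\cdot,t_n+t)\|_{X(\R)}$, the weak continuity of the \eqref{HLL} flow (Proposition~A.1 in \cite{Bahri}) yields \eqref{w-conv1}.

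Finally \eqref{w-conv2} follows from \eqref{w-conv1} and \eqref{conv:b_j} by the translation-continuity argument of Proposition~\ref{prop:reprod}: for $\phi\in X(\R)$ and $h_n:=b_j(t_n+t)-b_j(t_n)\to b_j^*(t)$,
$$\big\langle\gv(\cdot+b_j(t_n+t),t_n+t),\phi\big\rangle = \big\langle\gv(\cdot+b_j(t_n),t_n+t),\phi(\cdot-h_n)-\phi(\cdot-b_j^*(t))\big\rangle + \big\langle\gv(\cdot+b_j(t_n),t_n+t),\phi(\cdot-b_j^*(t))\big\rangle,$$
where the first pairing vanishes in the limit since $\phi(\cdot-h_n)\to\phi(\cdot-b_j^*(t))$ in $H^1(\R)\times L^2(\R)$ while the left factor is bounded, and the second converges to $\langle\gv^*_j(\cdot+b_j^*(t),t),\phi\rangle$ by \eqref{w-conv1}.

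The delicate point is the separation estimate $|b_j(t_n)-a_k(t_n)|\to+\infty$: it is exactly what makes the solitons disappear in the limit, forcing the limit profile to be a genuine inter-soliton object, and it is where the ordering \eqref{cond1:b_j}, the strict speed separation \eqref{cond3:b_j}, and the asymptotics \eqref{conv-parameters} must be combined. Everything else is the weak-continuity bookkeeping already carried out in \cite{Bahri} for a single soliton.
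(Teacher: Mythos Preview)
Your proposal is correct and follows the same approach as the paper: define $b_j^*(t)=\gamma_j t$ with $\gamma_j=\lim_{t\to+\infty}b_j'(t)$, obtain \eqref{conv:b_j} from this, and then argue exactly as in Proposition~\ref{prop:reprod} via weak continuity of the flow and continuity of translations. The paper's proof is in fact a single sentence deferring to Proposition~\ref{prop:reprod}; you have simply written out in detail what that deferral means in the present setting, including the observation that the soliton pieces $Q_{c_k(t_n)}(\cdot+b_j(t_n)-a_k(t_n))$ drift off to infinity and hence vanish weakly, which is the only genuine adaptation needed.
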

\begin{proof}
We take $b_j^*(t):=\gamma_j t,$ for all $t \in \R$, where $\gamma_j:= \lim_{t\to +\infty}b'_j(t)$. Clearly, $b^*_j$ satisfies \eqref{cond1:b_j}-\eqref{cond3:b_j}. Then, the proof remains exactly the same as the one of Proposition \ref{prop:reprod}.
\end{proof}
As in the previous section, we claim the following lemma which shows the localisation of the momentum for the limit solution. For the limit profile $(v^*_j, w^*_{j})$, we set 
$$\boI^*_{j,\pm y_0}(t) := \boI_{j,\pm y_0}^{(v^*_j, w^*_{j})}(t) = \int_\R [v^*_j w^*_{j}](t) \Phi(\cdot - ( \pm y_0 + b^*_j(t))),$$ 
for any $t \in \R$ and $y_0>0$.
\begin{lem}[\cite{BetGrSm2}]
For any positive number $\delta$, there exists a positive number $y_\delta$, depending only on $\delta$, such that for any $t \in \R$ we have
\begin{equation}
\label{eq:local-boI^*}
\big|\boI^*_{j,y_0}(t)\big| \leq \delta \quad {\rm and} \quad |P(v^*_j, w^*_j)-\boI^*_{j,-y_0}(t) | \leq \delta,
\end{equation}
for any $y_0 \geq y_\delta.$
\end{lem}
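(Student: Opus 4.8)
The plan is to reproduce the contradiction argument already used for the localization of the momentum around a soliton (Lemma \ref{lem:loc}), replacing Proposition \ref{prop:evol-momentum} and Remark \ref{rmk} by their between-solitons analogues Proposition \ref{prop:evol-momentum1} and Remark \ref{rmk1}, and replacing Proposition \ref{prop:reprod} by Claim \ref{claim:reprod}. First I would record that, since $v^*_j(\cdot,t)\in H^1(\R)$ and $w^*_j(\cdot,t)\in L^2(\R)$, the density $v^*_j w^*_j(\cdot,t)$ lies in $L^1(\R)$, so each $\boI^*_{j,\pm y_0}(t)$ is well defined; because $b^*_j(t)=\gamma_j t$ and $\Phi(x)\to 0$ as $x\to-\infty$, $\Phi(x)\to 1$ as $x\to+\infty$, dominated convergence gives, for every fixed $t$ (in particular $t=0$),
\[
\lim_{y_0\to+\infty}\boI^*_{j,y_0}(t)=0,\qquad \lim_{y_0\to+\infty}\boI^*_{j,-y_0}(t)=P(v^*_j,w^*_j).
\]

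Then I would argue by contradiction: suppose there is $\delta_0>0$ such that for every $y_0>0$ there is a time $t_0=t_0(y_0)\in\R$ with $|\boI^*_{j,y_0}(t_0)|\geq\delta_0$ or $|\boI^*_{j,-y_0}(t_0)-P(v^*_j,w^*_j)|\geq\delta_0$. Fix $y_0$ so large that $|\boI^*_{j,y_0}(0)|+|\boI^*_{j,-y_0}(0)-P(v^*_j,w^*_j)|\leq\delta_0/4$ and $A_\gc\exp\bigl(-\frac{\nu_\gc}{16}y_0\bigr)\leq\delta_0/32$, and treat the case $\boI^*_{j,y_0}(t_0)\geq\delta_0$, the others being symmetric. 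The monotonicity \eqref{eq:monobisI1} for the limit solution — valid for all times by Remark \ref{rmk1} — forces $t_0>0$. Choosing next $y'_0\geq y_0$ with $|\boI^*_{j,-y'_0}(t_0)-P(v^*_j,w^*_j)|\leq\delta_0/4$ while keeping the bounds at $t=0$, I obtain
\[
\Bigl|\bigl(\boI^*_{j,-y'_0}(0)-\boI^*_{j,y_0}(0)\bigr)-\bigl(\boI^*_{j,-y'_0}(t_0)-\boI^*_{j,y_0}(t_0)\bigr)\Bigr|\geq\frac{\delta_0}{4}.
\]
The weight $\Phi(\cdot+y'_0)-\Phi(\cdot-y_0)$ entering this difference is exponentially localized, so — using the uniform $L^1$-bound on the densities $vw$ coming from the conserved energy — it can be handled as a compactly supported test function, and Claim \ref{claim:reprod} transfers the inequality to the genuine solution $(v,w)$ along $(t_n)$ with time shift $t_0$: there is $n_0$ with $|(\boI_{j,-y'_0}(t_n)-\boI_{j,y_0}(t_n))-(\boI_{j,-y'_0}(t_n+t_0)-\boI_{j,y_0}(t_n+t_0))|\geq\delta_0/8$ for $n\geq n_0$, hence one of the two increments $|\boI_{j,\pm y_0}(t_n+t_0)-\boI_{j,\pm y_0}(t_n)|$ exceeds $\delta_0/16$.

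Finally, \eqref{eq:monobisI1} for $(v,w)$ (applicable once $t_n,t_n+t_0\geq T$, i.e.\ for $n$ large) allows a decrease of at most $\delta_0/32$, so the corresponding increment is in fact $\geq\delta_0/16$ in the increasing direction; extracting an increasing subsequence $(n_k)$ with $t_{n_{k+1}}\geq t_{n_k}+t_0$ and iterating \eqref{eq:monobisI1} then yields, say, $\boI_{j,y_0}(t_{n_{k+1}})\geq\boI_{j,y_0}(t_{n_k})+\delta_0/32$ for all $k$, so $\boI_{j,y_0}(t_{n_k})\to+\infty$, contradicting the uniform bound on $\boI_{j,y_0}$ given by the conserved energy. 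The point requiring care — more bookkeeping than a real obstacle — is to verify that $b_j$ (and $b^*_j=\gamma_j t$) satisfies \eqref{cond1:b_j}--\eqref{cond3:b_j}, so that Proposition \ref{prop:evol-momentum1} and Remark \ref{rmk1} apply with the stated exponentially small errors, and to keep track of which time origin ($0$ or $T$) each use of the monotonicity formula requires; once this is in place the argument is word for word that of Lemma \ref{lem:loc}.
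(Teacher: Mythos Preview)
Your proposal is correct and follows exactly the approach the paper intends: the paper's own proof simply reads ``In view of Remark \ref{rmk1}, the proof is similar to the one of Lemma \ref{lem:loc},'' and you have spelled out precisely that transposition, substituting Proposition \ref{prop:evol-momentum1}, Remark \ref{rmk1}, and Claim \ref{claim:reprod} for their around-soliton counterparts. Your added care about the time threshold $T$ in \eqref{eq:monobisI1} is appropriate and does not alter the argument.
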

In view of Remark \ref{rmk1}, the proof is similar to the one of Lemma \ref{lem:loc}. 

We also have 
\begin{lem}[\cite{BetGrSm2}]
Let $y_0>0$. For any $t \in \R$, we have
\begin{equation}
\label{eq:local-boI^*1}
\boI^*_{j,y_0}(t) \leq A_\gc \exp \Big( - \frac{\nu_{\gc}}{16} y_0 \Big) \quad {\rm and} \quad |P(v^*_j,w^*_j)-\boI^*_{j,-y_0}(t) | \leq A_\gc \exp \Big( - \frac{\nu_{\gc} }{16} y_0 \Big).
\end{equation}
\end{lem}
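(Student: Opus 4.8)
The statement is the exponential sharpening of the soft localization just established, and the plan is to run the argument of \cite{BetGrSm2}, as adapted in \cite{Bahri}, essentially verbatim; the only inputs needed are that soft localization lemma and the monotonicity formula of Proposition~\ref{prop:evol-momentum1}, which by Remark~\ref{rmk1} holds for the limit profile $(v^*_j,w^*_j)$ at every time $t\in\R$. Below I only indicate how these combine.

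First I would integrate the differential inequality \eqref{eq:deriv-Ij1} (now valid for the limit profile on all of $\R$) forward in time from a fixed $t_0$, splitting the time integral at a suitable instant, in the spirit of the passage from \eqref{eq:deriv-Ij} to \eqref{eq:monobisI}, so that $|y_0+\lambda_{\gc,\gamma}t|$ never drops below $\frac{1}{2}y_0$ on the relevant range; since $\Phi'\ge 0$ the flux term only helps, and the accumulated error is then bounded by $A_\gc\exp(-\frac{\nu_\gc}{16}y_0)$. This gives
\[
\boI^*_{j,y_0}(t_0)\ \le\ \limsup_{t\to+\infty}\boI^*_{j,y_0}(t)\ +\ A_\gc\exp\Big(-\frac{\nu_\gc}{16}\,y_0\Big),
\]
and, running the inequality for the left cutoff $-y_0$ backward in time,
\[
P(v^*_j,w^*_j)-\boI^*_{j,-y_0}(t_0)\ \le\ \limsup_{t\to-\infty}\big(P(v^*_j,w^*_j)-\boI^*_{j,-y_0}(t)\big)\ +\ A_\gc\exp\Big(-\frac{\nu_\gc}{16}\,y_0\Big).
\]
By the soft localization lemma both $\limsup$'s tend to $0$ as $y_0\to+\infty$, uniformly in the choice of $t_0$; feeding this into the almost-monotonicity \eqref{eq:monobisI1} and iterating on the cutoff parameter (comparing $y_0$ with $y_0+M$ for a fixed $M$ of order $1/\nu_\gc$, exactly as in \cite{Bahri}) upgrades the resulting $o(1)$ bounds into the geometric bounds $A_\gc\exp(-\frac{\nu_\gc}{16}y_0)$, which is precisely the assertion.

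The main obstacle, exactly as in the proof of Proposition~\ref{prop:evol-momentum1} itself, is the moving frame $b^*_j(t)=\gamma_j t$: every error term carries the shift $\lambda_{\gc,\gamma}t$, so the splitting of the time integrals must be arranged with care, and — since $\lambda_{\gc,\gamma}$ can be small when the separating speeds $\gamma_j$ come close to the soliton speeds — one must check that the constants produced depend only on $\gc$ and $\gs$; this is harmless because the factors $1/\lambda_{\gc,\gamma}$ that appear always multiply a quantity already exponentially small in $y_0$. Once this bookkeeping is done the problem reduces to the one-soliton situation and the argument is word for word that of \cite{Bahri} (see also \cite{BetGrSm2}).
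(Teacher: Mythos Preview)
Your proposal is correct and matches the paper's own treatment: the paper does not give an independent proof of this lemma but simply cites \cite{BetGrSm2} (and, implicitly, the single-soliton version in \cite{Bahri}), relying on exactly the two ingredients you name --- the soft localization lemma \eqref{eq:local-boI^*} and the monotonicity formula of Proposition~\ref{prop:evol-momentum1}, extended to all $t\in\R$ for the limit profile by Remark~\ref{rmk1}. One small clarification on the mechanism: rather than a genuine iteration on the cutoff parameter, the standard argument integrates the differential inequality \eqref{eq:deriv-Ij1} with a moving cutoff so that, as $t_1\to+\infty$, the effective localization point $y_0+\lambda_{\gc,\gamma}(t_1-t_0)$ drifts to $+\infty$; the soft localization then forces $\boI^*_{j,y_0+\lambda_{\gc,\gamma}(t_1-t_0)}(t_1)\to 0$, and the accumulated time-integral of the error is already $O(\exp(-\frac{\nu_\gc}{16}y_0))$, yielding the exponential bound in one pass.
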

Using Proposition \ref{prop:evol-momentum1}, we claim as in \cite{BetGrSm2} that
\begin{prop}[\cite{Bahri}]
\label{prop:local1}
Let $t\in \R$. There exists a positive constant $\boA_{\gc^0}$ such that
$$\int_t^{t + 1} \int_\R \big[ (\partial_x v^*_j)^2 + (v^*_j)^2 + (w^*_{j})^2 \big](x + b^*_j(s), s) e^{\frac{\nu_{\gc} }{16} |x|} \, dx \, ds \leq \boA_{\gc^0}.$$
\end{prop}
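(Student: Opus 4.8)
The plan is to adapt to the limit profile $(v^*_j,w^*_j)$ the argument used for the corresponding statement around solitons (Proposition~\ref{prop:local}, itself modelled on \cite{BetGrSm2}): integrate the monotonicity formula for the localized momentum, control the boundary contributions by the spatial localization estimates \eqref{eq:local-boI^*}--\eqref{eq:local-boI^*1}, and recombine the resulting family of estimates into a single exponentially weighted bound.

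Concretely, I would first note that, by Remark~\ref{rmk1}, the whole of Proposition~\ref{prop:evol-momentum1} — in particular the pointwise lower bound for $\frac{d}{dt}\boI^*_{j,y_0}$, not just its monotone consequence \eqref{eq:monobisI1} — holds for $(v^*_j,w^*_j)$ at every time $t\in\R$, the cut-off being centred at $b^*_j(t)=\gamma_j t$. For the limit solution there is no soliton, so no interior region appears and the error term stems only from the $\Phi'''$-contribution, which is controlled via \eqref{rhea2} together with the $L^2$-concentration of $(v^*_j,w^*_j)$ around $b^*_j(t)$ provided by Claim~\ref{claim:reprod} and \eqref{norm:small}. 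Integrating this lower bound over $[t,t+1]$ and bounding the two endpoint values of $\boI^*_{j,y_0}$ by (the argument behind) \eqref{eq:local-boI^*1} — using $|\boI^*_{j,y_0}|\le A_\gc e^{-\frac{\nu_\gc}{8}y_0}$ for $y_0\ge0$ and $|\boI^*_{j,-y_0}-P(v^*_j,w^*_j)|\le A_\gc e^{-\frac{\nu_\gc}{8}y_0}$ for $y_0\ge0$ (the sharper rate $\nu_\gc/8$, rather than the $\nu_\gc/16$ literally recorded in \eqref{rhea2}, coming from the genuine exponential decay of $\Phi'$ and $\Phi'''$) — I obtain, after the shift $x\mapsto x+b^*_j(s)$, the one-parameter family of estimates
\[ \int_t^{t+1}\!\!\int_\R\big[(\partial_x v^*_j)^2+(v^*_j)^2+(w^*_j)^2\big](x+b^*_j(s),s)\,\Phi'(x-y_0)\,dx\,ds\ \le\ A_\gc\,e^{-\frac{\nu_\gc}{8}|y_0|},\qquad y_0\in\R. \]

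To pass from this family to the weighted bound of Proposition~\ref{prop:local1}, I would multiply by $e^{\frac{\nu_\gc}{16}|y_0|}$ and sum over $y_0\in\Z$ (equivalently, integrate over $y_0$): since $\Phi'>0$ is bounded below on each unit interval, $\int_n^{n+1}f\lesssim\int_\R f(x)\Phi'(x-n)\,dx$, hence $\int_\R f(x)e^{\frac{\nu_\gc}{16}|x|}\,dx\lesssim\sum_{n\in\Z}e^{\frac{\nu_\gc}{16}|n|}\int_\R f(x)\Phi'(x-n)\,dx$, and the series converges because the right-hand side above decays at the rate $\nu_\gc/8$, strictly larger than the growth $\nu_\gc/16$ of the weight (equivalently, $\Psi(x):=\int_\R e^{\frac{\nu_\gc}{16}|y_0|}\Phi'(x-y_0)\,dy_0$ is finite and comparable to $e^{\frac{\nu_\gc}{16}|x|}$). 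Applying this with $f(x)=[(\partial_x v^*_j)^2+(v^*_j)^2+(w^*_j)^2](x+b^*_j(s),s)$ and integrating in $s$ over $[t,t+1]$ yields the claimed estimate, with $\boA_{\gc^0}$ depending only on $\gc^0$.

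The delicate points — more bookkeeping than genuine obstacle, the apparatus being already in place — are: transferring the \emph{quantitative} monotonicity inequality, and not only \eqref{eq:monobisI1}, to $(v^*_j,w^*_j)$ for all $t\in\R$; establishing the localization estimates \eqref{eq:local-boI^*}--\eqref{eq:local-boI^*1} uniformly in $t\in\R$ (à la Lemma~\ref{lem:loc}, using Remark~\ref{rmk1}); and carrying out the reconstruction step with the correct exponential rates, i.e. exploiting that $\Phi'$ and $\Phi'''$ decay faster than the target weight $e^{\nu_\gc|x|/16}$ grows. Once Proposition~\ref{prop:local1} is in hand, Proposition~\ref{prop:smooth1} follows exactly as Proposition~2.7 in \cite{Bahri}: one differentiates \eqref{HLL}, propagates the weighted $L^2$ estimate to all space derivatives, and integrates the resulting differential inequalities in time using the $L^1_{\mathrm{loc}}$-in-time control just obtained.
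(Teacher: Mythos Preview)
Your proposal is correct and follows precisely the route the paper indicates: the paper does not give a self-contained proof of Proposition~\ref{prop:local1} but defers to \cite{BetGrSm2} and \cite{Bahri}, and what you outline---integrate the quantitative monotonicity inequality for $(v^*_j,w^*_j)$ (available for all $t\in\R$ by Remark~\ref{rmk1}), bound the endpoint values via the localization estimates \eqref{eq:local-boI^*}--\eqref{eq:local-boI^*1}, and reconstruct the exponential weight by summing over translates of $\Phi'$---is exactly that argument. Your observation that the reconstruction requires the genuine decay rate $\nu_\gc/8$ of $\Phi'$ (rather than the weaker $\nu_\gc/16$ recorded in \eqref{rhea2} and \eqref{eq:local-boI^*1}) is correct and is indeed how the cited references proceed.
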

At this stage, the proof of Proposition \ref{prop:smooth1} remains exactly the same as in \cite{Bahri} (see Section 4.2 for more details).
\begin{merci}
I would like to thank both my supervisors R. Côte and P. Gravejat for their support throughout the time I spent to finish this manuscript, offering invaluable advices.
This work is supported by a PhD grant from "Région Ile-de-France" 
\end{merci}
\bibliographystyle{plain}

\end{document}